\definecolor{lightgray}{gray}{0.85}
\crefname{section}{\S}{\S\S}
\definecolor{keywordcolour}{rgb}{0.5,0,0.35}
\definecolor{greencomments}{rgb}{0,0.5,0}
\lstdefinelanguage{scribble}{
  % keyword list from https://github.com/nickng/vim-scribble/blob/master/syntax/scribble.vim
  basicstyle=\small\ttfamily,
  stringstyle=\color{Blue},
  showstringspaces=false,
  keywords={and,as,at,by,catches,choice,continue,do,from,global,import,instantiates,interruptible,local,module,or,par,protocol,rec,role,sig,throws,to,type,with,int,aux},
  morestring=[b]",
  morestring=[b]',
  morecomment=[l][\color{greencomments}]{//},
  literate={->}{{${\rightarrow\ }$}}1 {>}{{$>$}}1 {<}{{$<$}}1 {<=}{{$\leq$}}1 {>=}{{$\geq$}}1 {&&}{{$\land$}}1 {||}{{$\lor$}}1 {!=}{{$\neq$}}1 {=}{{$=$}}1
}
\lstdefinelanguage{fsharp}%
{morekeywords={let, new, match, with, rec, open, module, namespace, type, of, member, %
and, for, while, true, false, in, do, begin, end, fun, function, return, yield, try, %
mutable, if, then, else, cloud, async, static, use, abstract, interface,
inherit, finally, val, int, unit, noeq, ML },
  basicstyle=\small\ttfamily,
  otherkeywords={ let!, return!, do!, yield!, use!, var, select, where,
    order, by, bool, option, EntryPoint, Refined },
  keywordstyle=\color{keywordcolour},
  sensitive=true,
  breaklines=true,
  tabsize=4,
  morecomment=[l][\color{greencomments}]{///},
  morecomment=[l][\color{greencomments}]{//},
  morecomment=[s][\color{greencomments}]{{(*}{*)}},
  morestring=[b]",
  showstringspaces=false,
  literate={->}{{${\rightarrow\ }$}}1 {>}{{$>$}}1 {<}{{$<$}}1 {<=}{{$\leq$}}1 {>=}{{$\geq$}}1 {&&}{{$\land$}}1 {||}{{$\lor$}}1 {!=}{{$\neq$}}1 {<>}{{$\neq$}}1 {:=}{{$\coloneqq$}}1 {=}{{$=$}}1
}
\lstdefinelanguage{raw}{
  morekeywords={},
  otherkeywords={},
}
\newcommand{\Ourtool}{\texorpdfstring{\textsc{Session$^{\star}$}}{SessionStar}\xspace}
\newcommand{\mypara}[1]{\paragraph{\textbf{#1}}}
\newcommand{\code}[1]{\texttt{\upshape #1}}
\newcommand{\rulename}[1]{\textsc{#1}}
\newcommand{\myinferrule}[3][]{\ensuremath{\inferrule*[left={#1}, sep=1em]{#2}{#3}}}
\newcommand{\ruleTE}[1]{[\rulename{TE-#1}]}
\newcommand{\ruleTEVar}{\ruleTE{Var}}
\newcommand{\ruleTEPlus}{\ruleTE{Plus}}
\newcommand{\ruleTESub}{\ruleTE{Sub}}
\newcommand{\ruleTEConst}{\ruleTE{Const}}
\newcommand{\ruleE}[1]{[\rulename{E-#1}]}
\newcommand{\ruleEPhi}{\ruleE{Phi}}
\newcommand{\ruleECtx}{\ruleE{Cnt}}
\newcommand{\ruleERec}{\ruleE{Rec}}
\newcommand{\ruleG}[1]{[\rulename{G-#1}]}
\newcommand{\ruleGPfx}{\ruleG{Pfx}}
\newcommand{\ruleGCtx}{\ruleG{Cnt}}
\newcommand{\ruleGRec}{\ruleG{Rec}}
\newcommand{\ruleL}[1]{[\rulename{L-#1}]}
\newcommand{\ruleLSend}{\ruleL{Send}}
\newcommand{\ruleLRecv}{\ruleL{Recv}}
\newcommand{\ruleLEps}{\ruleL{Eps}}
\newcommand{\ruleWf}[1]{[\rulename{WF-#1}]}
\newcommand{\ruleWfRty}{\ruleWf{Rty}}
\newcommand{\ruleP}[1]{[\rulename{P-#1}]}
\newcommand{\rulePSend}{\ruleP{Send}}
\newcommand{\rulePRecv}{\ruleP{Recv}}
\newcommand{\rulePPhi}{\ruleP{Phi}}
\newcommand{\rulePRecOne}{\ruleP{Rec-In}}
\newcommand{\rulePRecTwo}{\ruleP{Rec-Out}}
\newcommand{\rulePVar}{\ruleP{Var}}
\newcommand{\rulePEnd}{\ruleP{End}}
\newcommand{\rulePEmpty}{\ruleP{Empty}}
\newcommand{\rulePVarOmega}{\ruleP{Var-$\omega$}}
\newcommand{\rulePVarZero}{\ruleP{Var-$0$}}
\newcommand{\scrib}{\textsc{Scribble}\xspace}
\newcommand{\Q}{\mathbb{Q}}
\newcommand{\Rarr}{\Rightarrow}
\newcommand{\rarr}{\rightarrow}
\newcommand{\Larr}{\Leftarrow}
\newcommand{\keyword}[1]{{\color{keywordcolour}\code{#1}}}
\newcommand{\enc}[1]{\llbracket #1 \rrbracket}
\newcommand{\fv}[1]{\operatorname{fv}\!\left({#1}\right)}
\newcommand{\valid}[1]{\operatorname{\textsf{\upshape Valid}}\!\left({#1}\right)}
\newcommand{\subj}[1]{\operatorname{subj}(#1)}
\newcommand{\ang}[1]{\langle #1 \rangle}
\newcommand{\setof}[1]{\{#1\}}
\newcommand{\fsharp}{F\#\xspace}
\newcommand{\fstar}{\texorpdfstring{\textsc{F}$^{\star}$}{FStar}\xspace}
\newcommand{\shaded}[1]{\colorbox{gray!30}{$#1$}}
\newcommand{\eoft}{\mathrel{:}}
\newcommand{\emptyctx}{{\ensuremath{\varnothing}}}
\newcommand{\ctxc}[3]{#1,\dexp{#2}\eoft\dte{#3}} % add to a context
\newcommand{\withcolor}[2]{\colorlet{currbkp}{.}\color{#1}{#2}\color{currbkp}}
\newcommand{\colorse}{Teal} % colour for session vars
\newcommand{\colorlbl}{Indigo} % colour for labels
\newcommand{\colorexp}{Blue} % colour for expressions
\newcommand{\colorte}{DarkOrchid} % colour for the types of expressions
\newcommand{\colortp}{NavyBlue} % colour for the types of processes
\newcommand{\colorgp}{VioletRed} % colour ofr global types
\newcommand{\dte}[1]{\withcolor{\colorte}{#1}} % just to define possible use of colours
\newcommand{\tbool}{\dte{\code{bool}}}
\newcommand{\tint}{\dte{\code{int}}}
\newcommand{\tstr}{\dte{\code{string}}}
\newcommand{\tunit}{\dte{\code{unit}}}
\newcommand{\ppt}[1]{\withcolor{\colorse}{\textbf{\textsf{\upshape #1}}}} % participants
\newcommand{\dlbl}[1]{\withcolor{\colorlbl}{#1}} % just to define possible use of colours
\newcommand{\gtctx}[2]{\ang{{#1} \prec \dgt{#2}}}
\newcommand{\ltctx}[2]{\ang{{#1} \prec \dtp{#2}}}
\newcommand{\ltequiv}{\stepsto[\epsilon]}
\newcommand{\ltequivmany}{\stepsto[\epsilon]^*}
\newcommand{\ltassoc}{\Leftrightarrow}
\newcommand{\ctxext}[3]{#1 \cup \setof{\dexp{#2}: \dte{#3}}}
\newcommand{\Sigmap}{\Sigma_{\ppt p}}
\newcommand{\Sigmaq}{\Sigma_{\ppt q}}
\newcommand{\Sigmar}{\Sigma_{\ppt r}}
\newcommand{\Sigmas}{\Sigma_{\ppt s}}
\newcommand{\Lp}{\dtp{L}_{\ppt p}}
\newcommand{\Lq}{\dtp{L}_{\ppt q}}
\newcommand{\Lr}{\dtp{L}_{\ppt r}}
\newcommand{\Ls}{\dtp{L}_{\ppt s}}
\newcommand{\Lt}{\dtp{L}_{\ppt t}}
\newcommand{\LA}{\dtp{L}_{\ppt A}}
\newcommand{\LB}{\dtp{L}_{\ppt B}}
\newcommand{\LC}{\dtp{L}_{\ppt C}}
\newcommand{\dgt}[1]{\withcolor{\colorgp}{#1}}
\newcommand{\gtbran}[3]{\dgt{\ppt{#1} \to \ppt{#2}\left\{#3\right\}}}
\newcommand{\gtbransingle}[3]{\dgt{\ppt{#1} \to \ppt{#2}:#3}}
\newcommand{\gtvar}[2]{\dgt{\mathbf{#1}\ang{#2}}}
\newcommand{\gtrecur}[4]{\dgt{\mu\mathbf{#1}(#2)\ang{#3}.#4}}
\newcommand{\gtrecursimpl}[3]{\dgt{\mu\mathbf{#1}(#2).#3}}
\newcommand{\gtend}{\dgt{\code{end}}}
\newcommand{\gtctxproj}[3]{\gtctx{#1}{#2} \upharpoonright \ppt{#3}}
\newcommand{\ctxproj}[2]{#1 \upharpoonright \ppt{#2}}
\newcommand{\dtp}[1]{\withcolor{\colortp}{#1}} % just to define possible use of colours
\newcommand{\tend}{\dtp{\code{end}}}
\newcommand{\tgeneric}[2]{\dtp{\ppt{#1}\dagger\mspace{-3mu}\left\{#2\right\}}}
\newcommand{\toffer}[2]{\dtp{\ppt{#1}\&\mspace{-3mu}\left\{#2\right\}}}
\newcommand{\toffersingle}[2]{\dtp{\ppt{#1}\&#2}}
\newcommand{\ttake}[2]{\dtp{\ppt{#1}\mspace{-4mu}\oplus\mspace{-4mu}\left\{#2\right\}}}
\newcommand{\ttakesingle}[2]{\dtp{\ppt{#1}\mspace{-4mu}\oplus\mspace{-4mu}#2}}
\newcommand{\tvar}[2]{\dtp{\mathbf{#1}\ang{#2}}}
\newcommand{\trecur}[5]{\dtp{\mu\mathbf{#1}\,#2(#3)\ang{#4}.#5}}
\newcommand{\trecursimpl}[4]{\dtp{\mu\mathbf{#1}\,#2(#3).#4}}
\newcommand{\tphi}[4]{\dtp{\dlbl{#1}(\vb{#2}{#3}). #4}}
\newcommand{\dexp}[1]{\withcolor{\colorexp}{#1}} % just to define possible use of colours
\newcommand{\vb}[2]{\dexp{#1}\mathop{:} \dte{#2}}
\newcommand{\etrue}{\dexp{\code{true}}}
\newcommand{\efalse}{\dexp{\code{false}}}
\newcommand{\eintlit}[1]{\dexp{#1}}
\newcommand{\stepsto}[1][\quad]{\xrightarrow{#1}} % process steps to
\newcommand{\stepstomany}[1][\quad]{\stepsto{#1}^*} % process steps to
\newcommand{\subst}[2]{[{#1}/{#2}]} % substitution
\newcommand{\esetof}[1]{\setof{\dexp{#1}}}
\newcommand{\psetof}[1]{\setof{\ppt{#1}}}
\newcommand{\bigP}{\ppt{$\mathbb{P}$}}
\newcommand{\red}[1]{{\color{red} #1}}
\newcommand{\bnfas}{\mathrel{::=}}
\newcommand{\bnfalt}{\mathrel{\mid}}
\newcommand{\ltsmsg}[5]{\ppt{#1} \to \ppt{#2} : \dlbl{#3}(\dexp{#4}: \dte{#5})}
\newcommand{\ltsmsgsimpl}[3]{\ppt{#1} \to \ppt{#2} : \dlbl{#3}}
\tikzstyle{block} = [rectangle, draw, fill=blue!20,
\tikzstyle{line} = [draw, -latex']
\newcommand*{\AddNote}[4]{%
    \begin{tikzpicture}[overlay, remember picture]
        \draw [decoration={brace,amplitude=0.5em},decorate,thick]
            ($(#3)!([yshift=1.5ex]#1)!($(#3)-(0,1)$)$) --
            ($(#3)!(#2)!($(#3)-(0,1)$)$)
                node [align=center, text width=1.5cm, pos=0.5, anchor=west] {#4};
    \end{tikzpicture}
}%
\newcommand*{\AddNoteLeft}[4]{%
    \begin{tikzpicture}[overlay, remember picture]
        \draw [decoration={brace,mirror,amplitude=0.5em},decorate,thick]
            ($(#3)!([yshift=1.5ex]#1)!($(#3)-(0,1)$)$) --
            ($(#3)!(#2)!($(#3)-(0,1)$)$)
                node [align=center, text width=1.5cm, pos=0.5, anchor=east] {#4};
    \end{tikzpicture}
}%
\newcommand{\mathcolorbox}[2]{\colorbox{#1}{$\displaystyle #2$}}
\begin{document}
\newtheorem{remark}[theorem]{Remark}
\title{Statically Verified Refinements for Multiparty Protocols}
%% Author information
%% Contents and number of authors suppressed with 'anonymous'.
%% Each author should be introduced by \author, followed by
%% \authornote (optional), \orcid (optional), \affiliation, and
%% \email.
%% An author may have multiple affiliations and/or emails; repeat the
%% appropriate command.
%% Many elements are not rendered, but should be provided for metadata
%% extraction tools.

%% Author with single affiliation.
\author{Fangyi Zhou}
\orcid{0000-0002-8973-0821}             %% \orcid is optional
\affiliation{
%  \department{Department of Computing}
  \institution{Imperial College London}            %% \institution is required
%  \streetaddress{180 Queen's Gate}
%  \city{London}
%  \postcode{SW7 2AZ}
  \country{United Kingdom}                    %% \country is recommended
}
\email{fangyi.zhou15@imperial.ac.uk}          %% \email is recommended

\author{Francisco Ferreira}
\orcid{0000-0001-8494-7696}             %% \orcid is optional
\affiliation{
%  \department{Department of Computing}
  \institution{Imperial College London}            %% \institution is required
%  \streetaddress{180 Queen's Gate}
%  \city{London}
%  \postcode{SW7 2AZ}
  \country{United Kingdom}                    %% \country is recommended
}
\email{f.ferreira-ruiz@imperial.ac.uk}          %% \email is recommended

\author{Raymond Hu}
\orcid{0000-0003-4361-6772}             %% \orcid is optional
\affiliation{
  \institution{University of Hertfordshire}            %% \institution is required
%  \department{School of Engineering and Computer Science}
%  \streetaddress{College Lane Campus}
%  \city{Hatfield}
%  \postcode{AL10 9AB}
  \country{United Kingdom}                    %% \country is recommended
}
\email{r.z.h.hu@herts.ac.uk}          %% \email is recommended

\author{Rumyana Neykova}
\orcid{0000-0002-2755-7728}             %% \orcid is optional
\affiliation{
%  \department{Department of Computer Science}
  \institution{Brunel University London}            %% \institution is required
 % \streetaddress{Kingston Lane}
%  \city{Uxbridge}
%  \postcode{UB8 3PH}
  \country{United Kingdom}                    %% \country is recommended
}
\email{rumyana.neykova@brunel.ac.uk}          %% \email is recommended

\author{Nobuko Yoshida}
\orcid{0000-0002-3925-8557}             %% \orcid is optional
\affiliation{
%  \department{Department of Computing}
  \institution{Imperial College London}            %% \institution is required
%  \streetaddress{180 Queen's Gate}
%  \city{London}
%  \postcode{SW7 2AZ}
  \country{United Kingdom}                    %% \country is recommended
}
\email{n.yoshida@imperial.ac.uk}          %% \email is recommended

\begin{abstract}
With distributed computing becoming ubiquitous in the modern era, safe
distributed programming is an open challenge.
To address this, multiparty session types (MPST) provide a typing discipline
for message-passing concurrency, guaranteeing
communication safety properties such as deadlock freedom.

While originally MPST focus on the communication aspects, and employ a
simple typing system for communication payloads, communication protocols in the
real world usually contain \emph{constraints} on the payload.
We introduce \emph{refined multiparty session types (RMPST)}, an extension of
MPST,
that express data dependent protocols via \emph{refinement types} on the
data types.

We provide an implementation of RMPST, in a toolchain called \Ourtool, using
\scrib, a multiparty protocol description toolchain, and targeting \fstar,
a verification-oriented functional programming language.
Users can describe a protocol in \scrib
and implement the endpoints in \fstar using \emph{refinement-typed} APIs
generated from the protocol.
The \fstar compiler can then statically verify the refinements.
Moreover, we use a novel approach of callback-styled API generation, providing
\emph{static} linearity guarantees with the inversion of control.
We evaluate our approach with real world examples and show that it has little
overhead compared to a na\"{i}ve implementation, while guaranteeing safety
properties from the underlying theory.

\keywords{
    Multiparty Session Types (MPST) \and
    Refinement Types \and
    Code Generation \and
    \fstar}
\end{abstract}

\begin{CCSXML}
<ccs2012>
<concept>
<concept_id>10003752.10003790.10003794</concept_id>
<concept_desc>Theory of computation~Automated reasoning</concept_desc>
<concept_significance>500</concept_significance>
</concept>
<concept>
<concept_id>10011007.10011006.10011041.10011047</concept_id>
<concept_desc>Software and its engineering~Source code generation</concept_desc>
<concept_significance>500</concept_significance>
</concept>
<concept>
<concept_id>10003752.10003753.10003761.10003763</concept_id>
<concept_desc>Theory of computation~Distributed computing models</concept_desc>
<concept_significance>300</concept_significance>
</concept>
</ccs2012>
\end{CCSXML}

\ccsdesc[500]{Theory of computation~Automated reasoning}
\ccsdesc[500]{Software and its engineering~Source code generation}
\ccsdesc[300]{Theory of computation~Distributed computing models}
\maketitle              % typeset the header of the contribution
\section{Introduction}

Distributed interactions and message passing are fundamental elements of the
modern computing landscape.
Unfortunately, language features and support for high-level and \emph{safe}
programming of communication-oriented and distributed programs are much
lacking,
in comparison to those enjoyed for more traditional ``localised'' models of
computation.
One of the research directions towards addressing this challenge is
\emph{concurrent behavioural types}~\cite{BettyBook, FTPL16BehavioralSurvey},
which seek to extend the benefits of conventional
type systems, as the most successful form of lightweight formal
verification, to communication and concurrency.

\emph{Multiparty session types} (MPST)~\cite{POPL08MPST, JACM16MPST}, one of
the most active
topics in this area, offer a theoretical framework for specifying message passing
protocols between multiple participants.
MPST use a type system--based approach to
\emph{statically} verify whether a system of processes implements a given
protocol safely.
The type system guarantees key execution properties such as freedom from
message reception errors or deadlocks.
However, despite much recent progress, there remain large gaps between
the current state of the art and (i) powerful and \emph{practical}
languages and techniques available to programmers today, and (ii) more
advanced type disciplines needed to express a wider variety of constraints of
interaction found in real-world protocols.

This paper presents and combines two main developments: a theory of MPST
enriched with \emph{refinement types} \cite{PLDI91RefinementML}, and a
practical method, \emph{callback-based programming}, for safe session
programming.
The key ideas are as follows:

\mypara{\emph{Refined} Multiparty Session Types (RMPST)}
The MPST theory~\cite{POPL08MPST, JACM16MPST} provides a core framework for
decomposing (or \emph{projecting}) a \emph{global type} structure, describing
the collective behaviour of a distributed system, into a set of
participant-specific \emph{local types} (see \cref{fig:top-down-mpst}).
The local types are then used to implement endpoints.

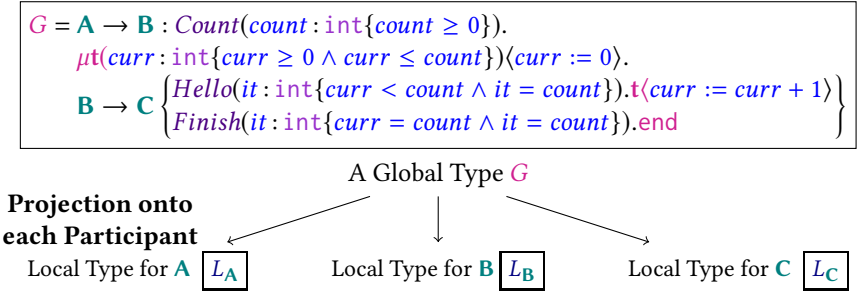
\begin{figure}
  \begin{tikzpicture}
    \node[draw, rectangle] (G) {
    $ \arraycolsep=1pt
      \begin{array}[t]{@{}ll@{}}
        \dgt{G} = &
         \gtbransingle{A}{B}{\dlbl{Count}(\vb{count}{\tint\esetof{count \geq 0}})}. \\
         & \gtrecur{t}
         {\vb{curr}{\tint\esetof{curr \geq 0 \land curr \leq count}}}
         {\dexp{curr := 0}}
         {}\\
           &\begin{array}{@{}l@{}}
              \gtbran{B}{C}{
                \begin{array}{@{}l@{}}
                  \dlbl{Hello}(
                  \vb{it}{\tint\esetof{curr < count \land it = count})}
                .\gtvar{t}{\dexp{curr:=curr+1}}\\
                  \dlbl{Finish}(\vb{it}{\tint}\esetof{curr = count \land it = count}).\gtend
                \end{array}
              }
           \end{array}
      \end{array}$};
    \node[below=0.5mm of G] (Gtext) {A Global Type $\dgt G$};
    \node[below=-1.5mm of Gtext, xshift=-45mm, align=center] (proj) {\bf
      Projection onto \\ \bf each
      Participant};
    \node[below=6mm of Gtext, xshift=-40mm] (LA) {\small Local Type for \ppt A~
      \boxed{\LA}};
    \node[below=6mm of Gtext] (LB) {\small Local Type for \ppt B~\boxed{\LB}};
    \node[below=6mm of Gtext, xshift=40mm] (LC) {\small Local Type for \ppt C~
      \boxed{\LC}};
    \draw[->] (Gtext) -- (LA);
    \draw[->] (Gtext) -- (LB);
    \draw[->] (Gtext) -- (LC);
  \end{tikzpicture}
  \vspace{-3mm}
  \caption{Top-down View of (R)MPST}
  \vspace{-3mm}
  \label{fig:top-down-mpst}
\end{figure}

Our theory of RMPST follows the same top-down methodology, but
enriches MPST with features from \emph{refinement types}~\cite{PLDI91RefinementML},
to support the elaboration of data types in global and local types.
Refinement types allow \emph{refinements} in the form
of logical predicates and constraints to be attached to a base type.
This allows to express various constraints in protocols.

To motivate our refinement type extension, we use a counting protocol shown in
\cref{fig:top-down-mpst}, and leave the details to \cref{section:theory}.
Participant $\ppt A$ sends $\ppt B$ a number with a $\dlbl{Count}$ message.
In this message, the refinement type $\vb{count}{\tint\esetof{count \geq 0}}$
restricts the value for $\dexp{count}$ to be a natural number.
Then $\ppt B$ sends $\ppt C$ exactly that number of $\dlbl{Hello}$ messages,
followed by a $\dlbl{Finish}$ message.

We demonstrate how refinement types are used to \emph{better} specify the protocol:
The counting part of the protocol is described using a recursive type with two
branches, where we use refinement types to restrict the protocol flow.
The variable $\dexp{curr}$ is a \emph{recursion variable}, which remembers the current
iteration, initialised to $\dexp 0$, and increments on each recursion ($\dexp{curr :=
curr + 1}$).
The refinement $\dexp{curr = count}$ in the $\dlbl{Finish}$ branch specifies
that the branch may only be taken at the last iteration;
the refinement $\dexp{it = count}$ in both $\dlbl{Hello}$ and
$\dlbl{Finish}$ branches specifies a payload value \emph{dependent} on the recursion
variable $\dexp{curr}$ and the variable $\dexp{count}$ transmitted in the first message.

We establish the correctness of Refined MPST.
In particular, we show that projection is behaviour-preserving and that
well-formed global types with refinements satisfy progress, i.e.\ they do not get stuck.
Therefore, if the endpoints follow the behaviour prescribed by the local types,
derived (via projection) from a well-formed global type with refinements, the
system is deadlock-free.

\mypara{\emph{Callback-styled, Refinement-typed} APIs for Endpoint
  Implementations}
One of the main challenges in applying session types in practice
is dealing with session \emph{linearity}: a session
channel is used \emph{once and only once}.
Session typing relies on a linear treatment of communication channels, in order to
 track the I/O actions performed on the channel against the intended
session type.
Most existing implementations adopt one of two approaches: monadic interfaces
in functional
languages~\cite{HaskellSessionBookChapter,
  SCP18OCaml, ECOOP20OCamlMPST},
or ``hybrid'' approaches that complement
static typing with dynamic linearity checks~\cite{FASE16EndpointAPI,
  scalas17linear}.

This paper proposes a fresh look to session-based programming that does
\emph{not} require linearity checking for static safety.
We promote a form of session programming where session I/O
is \emph{implicitly} implemented by \emph{callback functions}
--- we say ``implicitly'' because the user does not write any I/O operations
themself: an \emph{input callback} is written to \emph{take} a received message
as a parameter, and an \emph{output callback} is written to simply
\emph{return} the label and value of the message to be sent.

The callbacks are supported by a runtime, generated along with APIs in refinement types
according to the local type.
The runtime performs communication and invokes user-specified callback
functions upon corresponding communication events.
We provide a code generation tool to streamline the writing of callback
functions for the projected local type.

The inversion of control allows us to dispense with linearity
checking, because our approach does not expose communication channels to the user.
Our approach is a natural fit to functional programming settings, but also directly
applicable to any statically typed language.
Moreover, the linearity guarantee is achieved \emph{statically} without the use
of a linear type system, a feature that is usually not supported by mainstream
programming languages.
We follow the principle of event-based programming via the
use of callbacks, prevalent in modern days of computing.

\mypara{A Toolchain Implementation: \Ourtool}
To evaluate our proposal, we implement RMPST with a toolchain ---
\Ourtool, as an extension to the
\scrib toolchain~\cite{ScribbleWebsite, ScribbleBookChapter}
(\url{http://www.scribble.org/}) with \fstar~\cite{POPL16FStar} as the target endpoint
language.

Building on our callback approach, we show how to integrate RMPST
with the verification-oriented functional programming language \fstar,
exploiting its capabilities of refinement types and static verification to
extend our fully static safety guarantees to data refinements in sessions.
Our experience of specifying and implementing protocols drawn from the
literature and real-world applications attests to the practicality of our
approach and the value of statically verified refinements.
Our integration of RMPST and \fstar allows developers to utilise advanced type system features
to implement safe distributed application protocols.

\mypara{Paper Summary and Contributions}

\begin{itemize}[leftmargin=*]
\item[\cref{section:overview}]
  We present an overview of our toolchain: \Ourtool, and provide background
  knowledge of \scrib and MPST.
  We use a number guessing game, \lstinline{HigherLower}, as our running
  example.

\item[\cref{section:implementation}]
  We introduce \Ourtool, a toolchain for RMPST.
  We describe in detail how our generated APIs can be used to implement
  multiparty protocols with refinements.

\item[\cref{section:theory}]
We establish the metatheory of RMPST, which gives semantics of
global and local types with refinements.
We prove trace equivalence of global and local types w.r.t.\ projection
(\cref{thm:trace-eq}), and show progress and type safety of well-formed global
types (\cref{thm:progress} and \cref{thm:type-safety}).

\item[\cref{section:evaluation}]
  We evaluate our toolchain with examples from the session type literature, and
  measure the time taken for compilation and execution.
  We show that our toolchain does not have a long compilation time, and our
  runtime does not incur a large overhead on execution time.
\end{itemize}

We submitted an artifact for evaluation~\cite{sessionstarartifact}, containing the source code of our
toolchain \Ourtool, with examples and benchmarks used in the evaluation.
The artifact is available as a Docker image, and can be
\href{https://hub.docker.com/layers/sessionstar2020/sessionstar/artifact/images/sha256-4e3bf61238e04c1d2b5854971f0ef78f0733d566bf529a01d9c3b93ffa831193?context=explore}{accessed}
on the Docker Hub.
The source files are available on GitHub
(\url{https://github.com/sessionstar/oopsla20-artifact}).
We present the proof of our theorems, and additional technical details of the
toolchain, in the \iftoggle{fullversion}{appendix.}{full version of the paper
  (to be added in the camera-ready version).}

%% Local Variables:
%% mode: latex
%% TeX-master: "main"
%% End:

\section{Overview of Refined Multiparty Session Types}
\label{section:overview}

In this section, we give an overview of our toolchain: \Ourtool,
describing its key toolchain stages.
\Ourtool extends the \scrib toolchain with refinement types and uses \fstar as
a target language.
We begin with a short background on basic multiparty session types and \scrib,
then demonstrate the specification of
distributed applications with refinements
using the extended \scrib.

\subsection{Toolchain Overview}
\label{subsection:overview-overview}

\begin{figure}[t]
  \centering
  \footnotesize
  \begin{tikzpicture}[
      >=stealth',
    box/.style={draw, rectangle, minimum width=3cm, minimum height=15mm, node
      distance=50mm, align=center}
  ]
    \node[box, dashed] (cfsm) at (0, 0) {
      \begin{minipage}{3cm}
        \scalebox{0.7}{
        \hspace{-3mm}
      \begin{tikzpicture}[->,>=stealth',shorten >=0.01pt,auto,node distance=1mm,minimum width=2mm,minimum height=2mm,
          semithick, solid]
          \draw node[] (start) at (-0.8, 0) {};
          \draw node[circle, draw] (1) at (0, 0) {1};
          \draw node[circle, draw] (2) at (1, 0) {2};
          \draw node[circle, draw] (3) at (2, 0) {3};
          \draw node[circle, draw] (4) at (3, 0) {4};
          \draw node[circle, draw] (5) at (1.5, -0.8) {5};
          \draw node[circle, draw] (6) at (1.5, 0.8) {6};
          \draw node[circle, draw] (7) at (3, 0.8) {7};
          \path (start) edge node {} (1);
          \path (1) edge node {} (2);
          \path (2) edge node {} (3);
          \path (3) edge node {} (4);
          \path (3) edge [bend left, right] node {} (5);
          \path (3) edge [bend right, right] node {} (6);
          \path (4) edge node {} (7);
          \path (5) edge [bend left] node {} (2);
          \path (6) edge [left, bend right] node {} (2);
      \end{tikzpicture}}
      \end{minipage}
    };
    \node[box, above=1cm of cfsm] (scribble) {
      \begin{minipage}{3cm}
        \begin{lstlisting}[language=Scribble,numbers=none,basicstyle=\tiny\bf\ttfamily]
global protocol HigherLower
  (role A, role B, role C) {
  start(n0:int) from A to B;
  ...
}
\end{lstlisting}
      \end{minipage}
    };
    \node[box, right of=cfsm, dashed] (fstar-api) {
      \begin{minipage}{3cm}
        \begin{lstlisting}[language=FSharp,numbers=none,basicstyle=\tiny\ttfamily]
type state = ...
type callbacks = ...
type conn = ...
let run callbacks conn =
  ...
\end{lstlisting}
      \end{minipage}

    };
    \node[box, above=1cm of fstar-api] (fstar-impl) {
      \begin{minipage}{3cm}
        \begin{lstlisting}[language=FSharp,numbers=none,basicstyle=\tiny\bf\ttfamily]
let callbacks = ...
let connection = ...
let () =
  run callbacks connection
\end{lstlisting}
      \end{minipage}
    };
    \node[box, right of=fstar-api, yshift=1.5cm, dashed] (ocaml) {Extracted \\ OCaml Program \\ \small\code{program.ml}};
    \draw[->] (scribble.south) -- node[left,align=center] {Projection \\ via \scrib} (cfsm.north);
    \draw[->] (cfsm.east) -- node[above] {Generates} (fstar-api.west);
    \draw[->] (fstar-impl.south) -- node[left] {Implements} (fstar-api.north);
    \draw[->] (fstar-api.east) -- node[above] {} (ocaml.west);
    \draw[->] (fstar-impl.east) -- node[yshift=-8mm,align=center] {Extracts \\ into} (ocaml.west);
    \node[above=2mm of scribble] {\scrib Protocol (\cref{subsection:impl-rmpst})};
    \node[above=2mm of fstar-impl] {Endpoint Implementation (\cref{subsection:impl-fstar-api})};
    \node[below=2mm of cfsm] {CFSM Representation};
    \node[below=2mm of fstar-api] {\fstar API};

    \node[draw, rectangle, minimum width=3cm, minimum height=0.5cm, above=0.5cm
    of ocaml] {\textbf{User Input}};
    \node[draw, rectangle, minimum width=3cm, minimum height=0.5cm, below=0.5cm of ocaml, dashed] {Internal/Generated};
  \end{tikzpicture}
  \vspace{-1mm}
  \caption{Overview of Toolchain}
  \label{fig:workflow-detail}
  \vspace{-3mm}
\end{figure}
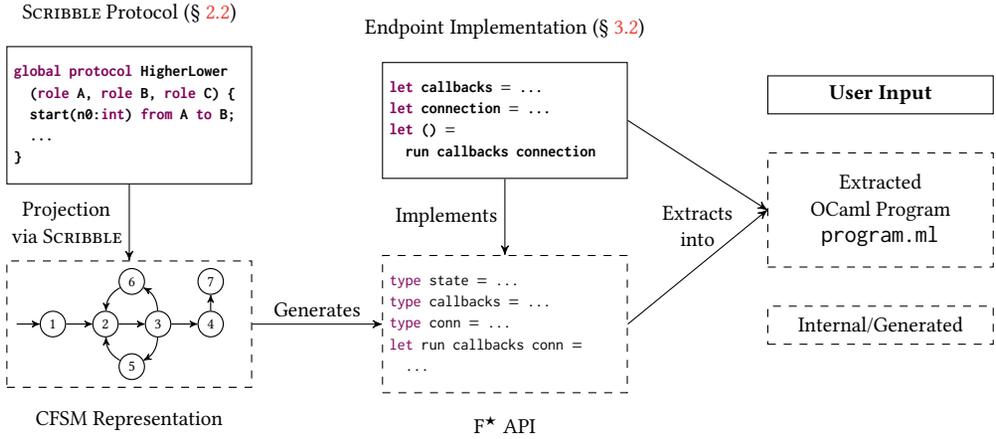

We present an overview of our toolchain in \cref{fig:workflow-detail}, where we
distinguish user provided input by developers in \boxed{\text{solid boxes}},
from generated code or toolchain internals in \dbox{dashed boxes}.

Development begins with \emph{specifying a protocol} using an extended \scrib
protocol description language.
\scrib is closely associated with the MPST theory \cite{ScribbleBookChapter,
  FeatherweightScribble}, and provides a user-friendly syntax for multiparty
protocols.
We extend the \scrib toolchain to support RMPST, allowing refinements to be
added via annotations.
The extended \scrib toolchain (as part of \Ourtool) validates the
well-formedness of the protocol, and produces a representation in the form of a
\emph{communicating finite state machine} (CFSM, \cite{JACM83CFSM}) for a given
participant.

We then use a code generator (also as part of \Ourtool) to generate \fstar APIs
from the CFSM, utilising a number of advanced type system features available in
\fstar (explained later in \cref{subsection:fstar-bg}).
The generated APIs, described in detail in
\cref{section:implementation}, consist of various type definitions, and an entry point
function taking \emph{callbacks} and \emph{connections} as arguments.

In our design methodology, we separate the concern of \emph{communication} and
\emph{program logic}.
The callbacks, corresponding to program logic, do not involve
communication primitives ---
they are invoked to prompt a value to be sent, or to process a received value.
Separately, developers provide a connection that allows base types to be
serialised and transmitted to other participants.
Developers implement the endpoint by providing both \emph{callbacks} and
\emph{connections}, according to the generated refinement typed APIs.
They can run the protocol by invoking the generated entry point.
Finally, the \fstar source files can be verified using the \fstar compiler, and
extracted to an OCaml program (or other supported targets) for efficient
execution.

\subsection{Global Protocol Specification --- RMPST in Extended \scrib}
\label{subsection:impl-rmpst}

The workflow in the standard MPST theory~\cite{POPL08MPST}, as is generally the case
in distributed application development, starts from identifying the intended
  protocol for participant interactions.
In our toolchain, a \emph{global protocol}---the description of the whole
protocol between all participants from a bird eye's view---is specified
using our RMPST extension of the \scrib protocol description
language~\cite{ScribbleWebsite, ScribbleBookChapter}.
\Cref{fig:guess} gives the global protocol for a three-party game,
\lstinline+HigherLower+, which we use as a running example.

\mypara{Basic \scrib/MPST}

We first explain basic \scrib (corresponding to the standard MPST) \emph{without}
the \lstinline+@+-annotations (annotations are extensions to the basic \scrib).
\begin{enumerate}[leftmargin=*]
\item
  The main protocol \lstinline+HigherLower+ declares three \emph{roles}
  $\ppt A$, $\ppt B$ and $\ppt C$, representing the runtime
  communication session participants.
  The protocol starts with $\ppt A$ sending $\ppt B$ a
  \lstinline+start+ message and a \lstinline+limit+ message, each carrying an
  \keyword{\small int} payload.

\item
  The \keyword{\small do} construct specifies all three roles to proceed according to the
  (recursive) \lstinline+Aux+ sub-protocol.
  $\ppt C$  sends $\ppt B$ a \lstinline+guess+ message, also carrying
  an \keyword{\small int}.
  (The \keyword{\small aux} keyword simply tells \scrib that a sub-protocol does not
  need to be verified as a top-level entry protocol.)

\item
  The \keyword{\small choice at}\code{\small~B} construct specifies at this point that
  $\ppt B$ should
  decide (make an \emph{internal} choice) by which one of the four cases the
  protocol should proceed.
  This decision is explicitly communicated (as an \emph{external} choice) to
  $\ppt A$ and $\ppt C$ via the messages in each case.
  The \lstinline+higher+ and \lstinline+lower+ cases are the recursive cases,
  leading to another round of \lstinline+Aux+ (i.e.\ another \lstinline+guess+
  by $\ppt C$); the other two cases, \lstinline+win+ and \lstinline+lose+,
  end the protocol.
\end{enumerate}

\noindent
To sum up, $\ppt A$ sends $\ppt B$ two numbers, and $\ppt C$
sends a number (at least one) to $\ppt B$ for as long as $\ppt B$
replies with either \lstinline+higher+ or \lstinline+lower+ to $\ppt C$
(and $\ppt A$).
Next we demonstrate how we can express data
dependencies using refinements with our extended \scrib.

\begin{figure}[t]
    \centering
    \begin{minipage}{0.9\textwidth}
      \lstinputlisting[language=Scribble,basicstyle={\small\ttfamily}]{fig/HighLow.scr}
    \end{minipage}
    \vspace{-3mm}
    \caption{A \emph{Refined} \scrib Global Protocol for a \lstinline+HigherLower+ Game. } %\RH{\TODO: statevar initialisers?}}
    \vspace{-3mm}
    \label{fig:guess}
\end{figure}

\mypara{Extended \scrib/RMPST}

As described above, a basic global protocol (equivalent to a standard MPST
\emph{global type}) specifies the structure of participant interactions, but is
not very informative about the behaviour of the \emph{application} in the
aspect of data transmitted.
This limitation can be contrasted against standardised (but informal)
specifications of real-world application protocols, e.g.\ HTTP, OAuth, where a
significant part, if not the majority, of the specification is devoted to the
\emph{data} side of the protocol.
It goes without saying, details missing from the specification cannot be
verified on an implementation.

We go through \cref{fig:guess} again, this time including the practical
extensions proposed by this paper to address these limitations:
RMPST enables a refinement type--based treatment of protocols, capturing
and integrating both data constraints and interactions.
While refinement types by themselves can already greatly enrich the specification
of individual messages, the most valuable and interesting aspect is the
\emph{interplay} between data refinements and the consequent interactions
(i.e.\ the protocol flow) in the distributed, multiparty setting.
In our setup, we allow data-dependent refinements to be specified in the global
protocol,
we explain various ways of using them in the running example:

\newcommand{\lstinlinefst}{\lstinline[language=FSharp,basicstyle=\small\ttfamily]}
\begin{itemize}[leftmargin=*]
  \item \textbf{Message Values.}
  A basic use of refinements is on message \emph{values}, specifically their
  payload contents.
  The annotation on the first interaction (\cref{line:start}) specifies that
  $\ppt A$ and $\ppt B$ not only communicate an \lstinlinefst+n0:int+,
  but that \lstinlinefst+0<=n0<100+ is a \emph{post}condition for the value
  produced by $\ppt A$ and a \emph{pre}condition on the value received by
  $\ppt B$.
  Similarly, the \lstinlinefst+int+ carried by \lstinline+limit+ must be positive.

  \item \textbf{Local Protocol State.}
  RMPST also supports refinements on the \emph{recursions} that a protocol
  transitions through.
  The \lstinline+B[...]+ annotation (\cref{line:aux-def}) in the
  \lstinline+Aux+ header specifies
  the \emph{local} state known by \lstinline+B+ during the recursion,
  whenever $\ppt B$ reaches \emph{this point} in the ongoing protocol.
  The local state includes an \keyword{\small int} \lstinline+n+
  such that \lstinlinefst+0<=n<100+ and an \keyword{\small int} \lstinline+t+ such that
  \lstinline+0<t+.
  These extra variables are available for all enactments of this subprotocol.
  That is, on the first entry from \lstinline+HigherLower+, where the
  \lstinline+do+ annotation \lstinline+B[n0, t0]+ (\cref{line:aux-init})
  specifies the initial values; and from the recursive entries
  (\cref{line:aux-call}).

  By \emph{known} state, we mean that $\ppt B$ will have access to the
  exact values at runtime, although statically we can only be sure that they
  lie within the intervals specified in the refinements.
  Other session participants can only use the type information,
  without knowing the value,
  e.g.\ $\ppt C$ does not know the exact value of \lstinline+n+ (which is
  the main point of this game), but knows the range via the refinements, and
  hence the endpoint may utilise this knowledge for reasoning.

  \item \textbf{Protocol Flow.}
  As mentioned, RMPST combines protocol specifications with
  refinements in order to direct the flow of the
  protocol --- specifically at internal choices.
  The annotation on the \lstinline+win+ interaction (\cref{line:win}) from \lstinline+B+ to
  $\ppt C$ specifies that \lstinline+B+ can only send this message, and
  thus select this \keyword{\small choice} case, after a correct guess by
  $\ppt C$.
  Similarly, $\ppt B$ can only select the other cases after an incorrect
  guess: \lstinline+lose+ (\cref{line:lose}) when $\ppt C$ is on its last attempt, or the
  corresponding \lstinline+higher+ (\cref{line:higher}) or \lstinline+lower+
  (\cref{line:lower}) cases otherwise.
  We exploit the fact that a refinement type can be uninhabited due to the
  impossibility to satisfy the constraint on the type, to encode protocol flow
  conditions.
\end{itemize}

Refinements allow a basic description of an interaction structure
to be elaborated into a more expressive application specification.
Note the \lstinlinefst+t>1+ in the \lstinline+higher+ and \lstinline+lower+
refinements are necessary to ensure that the \lstinlinefst+0<t+ in the
\lstinline+Aux+ refinement is satisfied, given that the \keyword{\small do}
annotations specify \lstinline+Aux+ to be recursively enacted with
\lstinlinefst+t-1+.
Albeit simple, the protocol shows how we can use refinements in various
means to express data and control flow dependencies and recursive invariants in
a multiparty setup.
Once the protocol is specified, our toolchain allows the refinements
 in RMPST to be directly mapped
to data refinements in F* through a callback-styled API generation.

\section{Implementing Refined Protocols in \fstar}
\label{section:implementation}

In this section, we demonstrate our callback-styled, refinement-typed APIs
for implementing endpoints in \fstar \cite{POPL16FStar}.
We introduced earlier the workflow of our toolchain
(\cref{subsection:overview-overview}).
In \cref{subsection:fstar-bg}, we
summarise the key features of \fstar utilised in our implementation.
Using our running example, we explain the generated APIs in \fstar in
\cref{subsection:impl-fstar-api},
and how to implement endpoints in
\cref{subsection:impl-fstar-impl}.
We outline the function we generate for executing the endpoint in
\cref{subsection:impl-run-fsm}.

Developers using \Ourtool implement the callbacks in \fstar,
to utilise the functionality of refinement types provided by the \fstar
type-checker.
The \fstar compiler can verify statically whether the provided implementation
by the developer satisfies the refinements as specified in the protocol.
The verified code can be extracted via the \fstar compiler to OCaml and
executed.

The verified implementation enjoys properties both from the MPST theory, such
as \emph{session fidelity} and \emph{deadlock freedom}, but also from
refinement types, that the data dependencies are verified to be correct
according to the protocol.
Additional details on code generation can be found in
\iftoggle{fullversion}{\cref{section:impl-appendix}}{the full version of the
  paper}.

\subsection{Targeting \fstar and Implementing Endpoints}
\label{subsection:fstar-bg}

\fstar \cite{POPL16FStar} is a verification-oriented programming language, with
a rich set of features.
Our method of API generation and example programs utilise the following \fstar
features:%
\footnote{A comprehensive \fstar{} tutorial is available at
\url{https://www.fstar-lang.org/tutorial/}.}

\begin{itemize}[leftmargin=*]
  \item \textbf{Refinement Types.}
  A \emph{refinement type} has the form
  \lstinline+x:t{E}+, where \lstinline+t+ is the base
  type, \lstinline+x+ is a variable that stands for values of this type, and
  \lstinline+E+ is a pure\footnotemark boolean expression for \emph{refinement},
  possibly containing \lstinline+x+.
  \footnotetext{Pure in this context means pure terminating computation, i.e.\
  no side-effects including global state modifcations, I/O actions or infinite
  recursions, etc.}
  In short, the values of this refinement type are the \emph{subset} of values
  of \lstinline+t+ that make \lstinline+E+ evaluate to \lstinline+true+,
  e.g.\ natural numbers are defined as
  \lstinline[language=FSharp,basicstyle=\small\ttfamily]+x:int{x+{\small $\geq$}\lstinline+0}+.
  We use this feature to express data and control flow constraints in protocols.

  In \fstar, type-checking refinement types are done with the assistance of the
  Z3 SMT solver~\cite{TACAS08Z3}.
  Refinements are encoded into SMT formulae and the solver decides the
  satisfiability of SMT formulae during type-checking.
  This feature enables automation in reasoning and saves the need for manual
  proofs in many scenarios.

  \item \textbf{Indexed Types.}
  \emph{Types} can take pure expressions as arguments.
  For example, a declaration \newline
  \lstinline[language=FSharp,basicstyle=\small\ttfamily]+type t (i:t') = ...+ prescribes the
  \emph{family} of types given by applying the type constructor \lstinline+t+
  to \emph{values} of type \lstinline+t'+.
  We use this feature to generate type definitions for payload items in an
  internal choice, where the refinements in payload types refer to a state type.

  \item \textbf{Dependent Functions with Effects.}
  A (dependent) function in \fstar has a type of the form \linebreak
  \lstinline+(x:t+${}_1$\lstinline+)+ $\to$\lstinline+ E t+${}_2$, where
  \lstinline+t+${}_1$ is the argument type, \lstinline+E+ describes
  the \emph{effect} of the function, and \lstinline+t+${}_2$ is the result
  type, which may also refer to the argument \lstinline+x+.

  The default effect is \lstinline+Tot+, for pure total expressions
  (i.e.\ terminating and side-effect free).
  At the other end of the spectrum is the arbitrary effect \lstinline+ML+
  (correspondent to all possible side effects in an ML language),
  which permits state mutation, non-terminating recursion, I/O,
  exceptions, etc.

  \item \textbf{The \code{Ghost} Effect and the \code{erased} Type.}
  A type can be marked \keyword{\small erased} in \fstar, so that values of such types
  are not available for computation (after extracting into target
  language), but only for proof purposes (during type-checking).
  The type constructor is accompanied with the \code{Ghost} effect to mark
  computationally irrelevant code, where
  the type system prevents the use of erased values in
  computationally relevant code, so that the values can really be safely erased.
  In the following snippet, we quickly demonstrate this feature: \code{GTot}
  stands for \code{Ghost} and total, and cannot be mixed with the default
  pure effect (the function \code{not\_allowed} does not type-check).
  We use the \keyword{erased} type to mark variables known to the endpoint via
  the protocol specification, whose values are not known due to not being a
  party of the message interaction. For example, in \cref{fig:guess}, the endpoint \code{C} does not
  know the value of \code{n0}, but knowns its type from the protocol.

\begin{center}
\begin{minipage}{0.45\textwidth}
\begin{lstlisting}[language=FSharp]
type t = { x1: int;
           x2: erased int; }
(* Definition in standard library *)
val reveal: erased a -> GTot a
\end{lstlisting}
\end{minipage}
\begin{minipage}{0.5\textwidth}
\begin{lstlisting}[language=FSharp]
(* The following access is not allowed *)
let not_allowed (o: t) = reveal o.x2
(* Accessing at type level is allowed *)
val allowed: (o: t{reveal o.x2 >= 0}) ->int
\end{lstlisting}
\end{minipage}
\end{center}

\end{itemize}

Our generated code consists of multiple type definitions and an entry point
function (as shown in \cref{fig:workflow-detail}, \fstar API), including:
\begin{description}[leftmargin=*]
  \item[State Types:] Allowing developers to access variables known at a given
    CFSM state.
  \item[Callbacks:] A record of functions corresponding to CFSM transitions,
    used to implement program logic of the local endpoint.
  \item[Connections:] A record of functions for sending and receiving values to
    and from other roles in the global protocol, used to handle the
    communication aspects of the local endpoint.
  \item[Entry Point:] A function taking callbacks and
    connections to run the local endpoint.
\end{description}

To implement an endpoint, the developer needs to provide implementations for the
generated callback and connection types, using appropriate functions to handle the
program logic and communications.
The \fstar compiler checks whether the implemented functions type-check against
the prescribed types.
If the endpoint implementation succeeds the type-checking, the developer may
choose to extract to a target language (e.g.\ OCaml, C) for execution.

%% Local Variables:
%% mode: latex
%% TeX-master: "main"
%% End:

\subsection{Projection and \fstar API Generation -- Communicating Finite State
  Machine--based Callbacks for Session I/O}
\label{subsection:impl-fstar-api}

As in the standard MPST workflow, the next step (\cref{fig:workflow-detail}) is to
\emph{project} our \emph{refined} global protocol onto each role.
This decomposes the global protocol into a set of \emph{local} protocols,
prescribing the view of the protocol from each role.
Projection is the definitive mechanism in MPST: although all endpoints together
must comply to global protocol, projection allows each endpoint to be
\emph{separately} implemented and verified, a key requirement for practical distributed
programming.
As we shall see, the way projection treats refinements---we must
consider the local \emph{knowledge} of values propagated through the
multiparty protocol---is crucial to verifying refined implementations,
including our simple running example.

\mypara{Projection onto \textsf{\textbf{\color{Teal}B}}}

\begin{figure}[t]
\small
\begin{subfigure}{\textwidth}
\begin{tikzpicture}
[
  >=stealth',
  NODE/.style={draw,circle,minimum width=3mm,inner sep=0pt,fill=lightgray}
]
\node[NODE] (B1) {$1$};
\node[NODE,right=18mm of B1] (B2) {$2$};
\node[NODE,right=18mm of B2,label={[align=left,xshift=-13pt,yshift=-4pt]above:{$\mathcolorbox{lightgray}{n\{0 \mathop{\leq} n \mathop{<} 100\},}$\\[-3pt]$\mathcolorbox{lightgray}{t\{0 \mathop{<} t\}}$}}] (B3) {$3$};
\node[NODE,right=18mm of B3,yshift=7mm] (B5) {$5$};
\node[NODE,right=18mm of B3,yshift=-7mm] (B6) {$6$};
\node[NODE,right=18mm of B5,yshift=-7mm] (B4) {$4$};
\node[NODE,right=18mm of B4,yshift=7mm] (B7) {$7$};
\node[NODE,right=18mm of B4,yshift=-7mm] (B8) {$8$};
\node[NODE,right=18mm of B7,yshift=-7mm] (B9) {};
\draw[->] ($(B1.west)-(3mm,0)$) -- (B1);
\draw[->] (B1) -- node [above,xshift=-10pt,yshift=4pt]{$A?\mathtt{start(n_0)\{0 \mathop{\leq} n_0 \mathop{<} 100\}}$} (B2);
\draw[->] (B2) -- node [below,xshift=-4pt,yshift=-4pt]{$A?\mathtt{limit(t_0)\{0 \mathop{\leq} t_0\}}$} (B3);
\draw[->] (B3) -- node [above,yshift=-1pt]{$C?\mathtt{guess(x)\{0 \mathop{\leq} x \mathop{<} 100\}}$} (B4);
\draw[>=angle 60,->,densely dotted] ($(B3.north west)+(-2.5mm,4.5mm)$) -- (B3.north west);
\path[->] (B4) edge[bend right=15] node [above,xshift=19pt,yshift=2pt]{$C!\mathtt{higher\{n \mathop{>} x \mathbin{\wedge} t \mathop{>} 1\}}$} (B5);
\path[->] (B5) edge[bend right=15] node [above,xshift=10pt,yshift=3pt]{$A!\mathtt{higher}$} (B3);
\path[->] (B4) edge[bend left=15] node [below,xshift=19pt,yshift=-2pt]{$C!\mathtt{lower\{n \mathop{<} x \mathbin{\wedge} t \mathop{>} 1\}}$} (B6);
\path[->] (B6) edge[bend left=15] node [below]{$A!\mathtt{lower}$} (B3);
\path[->] (B4) edge[bend left=15] node [below,xshift=15pt]{$C!\mathtt{win\{n \mathop{=} x\}}$} (B7);
\path[->] (B7) edge[bend left=15] node [above,yshift=1pt]{$A!\mathtt{lose}$} (B9);
\path[->] (B4) edge[bend right=15] node
[above,xshift=28pt,yshift=1pt]{$C!\mathtt{lose\{n \mathop{\neq} x \mathbin{\wedge} t \mathop{=} 1\}}$} (B8);
\path[->] (B8) edge[bend right=15] node [below,yshift=-1pt]{$A!\mathtt{win}$} (B9);
\end{tikzpicture}
\caption{CFSM Representation of the Projection.
$!$ stands for sending actions, and $?$ for
receiving actions on edges.}
\label{fig:higherlower_projection_B-cfsm}
\end{subfigure}

\smallskip

\renewcommand{\lstinlinefst}{\lstinline[language=FSharp,basicstyle=\small\ttfamily]}
{\small
\begin{tabular}{l|l}
\begin{subfigure}{0.6\textwidth}
\begin{tabular}{lll}
\multicolumn{3}{l}{\textbf{Generated \fstar API}}
\\
State & Edge & Generated Callback Type %& User implementation
\\
$1$ & $A?\mathtt{start}$
&
{\lstinlinefst+s1 ->(n:int{0<=n<100}) ->ML unit+}
\\
$2$ & $A?\mathtt{limit}$
&
{\lstinlinefst+s2 ->(t:int{0<t}) ->ML unit+}
\\
$3$ & $C?\mathtt{guess}$
&
{\lstinlinefst+s3 ->(x:int{0<=x}) ->ML unit+}
\\
$4$ & [multiple]
&
{\lstinlinefst+(s:s4) ->ML (s4Cases s)+}
\\
$5$ & $A!\mathtt{higher}$
&
{\lstinlinefst+s5 ->ML unit+}
\\
$6$ & $A!\mathtt{lower}$
&
{\lstinlinefst+s6 ->ML unit+}
\\
$7$ & $A!\mathtt{lose}$
&
{\lstinlinefst+s7 ->ML unit+}
\\
$8$ & $A!\mathtt{win}$
&
{\lstinlinefst+s8 ->ML unit+}
\\
\end{tabular}
\caption{Generated I/O Callback Types}
\label{fig:higherlower_projection_B-types}
\end{subfigure}
&
\begin{subfigure}{0.3\textwidth}
\begin{tabular}{l}
{\begin{minipage}{\textwidth}
{\begin{lstlisting}[language=FSharp,numbers=none,basicstyle={\small\ttfamily}]
type s4Cases (s:s4) =
| s4_lower of
  unit{s.n<s.x && s.t>1}
| s4_lose of
  unit{s.n£$\neq$£s.x && s.t=1}
| s4_win of unit{s.n=s.x}
| s4_higher of
  unit{s.n>s.x && s.t>1}
\end{lstlisting}}
\end{minipage}}
\end{tabular}
\caption{Generated Data Type for the Output Choice}
\label{fig:higherlower_projection_B-internal-choice}
\end{subfigure}
\\
\end{tabular}
}

\vspace{-3mm}
\caption{Projection and \fstar{} API Generation for \textsf{\textbf{\color{Teal}B}} in
\lstinline+HigherLower+}
\vspace{-4mm}
\label{fig:higherlower_projection_B}
\end{figure}

We first look at the projection onto $\ppt B$: although it is the largest of
the three projections, it is unique among them because $\ppt B$
is involved in \emph{every} interaction of the protocol, and (consequently)
$\ppt B$ has \emph{explicit} knowledge of the value of \emph{every}
refinement variable during execution.

Formally, projection is defined as a syntactic function (explained in detail
later in
\cref{subsection:theory-projection});
it is a partial function, designed conservatively to reject protocols that are
not safely realisable in asynchronous distributed settings.
However, we show in \cref{fig:higherlower_projection_B-cfsm} the
representation of projections employed in our toolchain based on
\emph{communicating finite state machines} (CFSMs)~\cite{JACM83CFSM},
where the transitions are the localised I/O actions performed by $\ppt B$
in this protocol.
Projected CFSM actions preserve their refinements: as before, an action
refinement serves as a precondition for an output transition to be fired, and a
postcondition when an input transition is fired.
For example, $A?\mathtt{start}(n_0)\{0\leq n_0 < 100\}$ is an input of a
$\mathtt{start}$ message from $\ppt A$, with a refinement on the
\lstinline+int+ payload value.
Similarly,
$C!\mathtt{higher}\{n>x \wedge t>1\}$ expresses a protocol flow refinement on
an output of a $\mathtt{higher}$ message to $\ppt C$.
For brevity, we omit the payload data types in the CFSM edges, as this example features only
\lstinline+int+s; we omit empty payloads ``$()$'' likewise.

We show the local state refinements as annotations on the corresponding CFSM
states (shaded in grey, with an arrow to the state).

\mypara{Refined API Generation for \textsf{\textbf{\color{Teal}B}}}

CFSMs offer an intuitive understanding of the semantics of endpoint
projections.
Building on recent
work~\cite{FASE16EndpointAPI,CC18TypeProvider,POPL19Parametric}, we use our CFSM-based
representation of refined projections to \emph{generate} protocol- and
role-specific APIs for implementing each role in \fstar.
We highlight a novel and crucial development: we exploit the approach of
\emph{type} generation to produce functional-style \emph{callback}-based APIs
that \emph{internalise} all of the actual communication channels and I/O
actions.
In short, the transitions of the CFSM are rendered as a set of
\emph{transition-specific} function types to be implemented by the user --- each
of these functions take and return only the \emph{user-level data} related to
I/O actions and the running of the protocol.
The transition function of the CFSM itself is embedded into the API by the
generation, exporting a user interface to execute the protocol by calling back
the appropriate user-supplied functions according to the current CFSM state and
I/O event occurrences.

We continue with our example,
\cref{fig:higherlower_projection_B-types} lists the function types for
$\ppt B$, detailed as follows.
Note, a characteristic of MPST-based CFSMs is that each non-terminal state is
either input- or output-only.

\begin{itemize}[leftmargin=*]
  \item \textbf{State Types.}
  For each state, we generate a separate type (named by enumerating the states,
  by default).
  Each is defined as a record containing previously known
  payload values and its local recursion variables, or \keyword{unit} if
  none, for example:
\vspace{1mm}

\centerline{ \lstinlinefst+type s3 =+ $\big\{$ \lstinlinefst+n0:
  int\{0<=n0<100\}; t0: int\{0<t0\}; n: int\{0<=n<100\}; t: int\{0<t\}+ $\big\}$ }
\vspace{1mm}

  \item \textbf{Basic I/O Callbacks.}
  For each input transition we generate a function type
  \lstinlinefst+s ->+$\sigma$\lstinlinefst+ ->ML unit+, where
  \lstinline+s+ is the predecessor state type, and $\sigma$
  is the refined payload type received.
  The return type is \lstinlinefst+unit+ and the function can perform side
  effects, i.e.\ the callback is able to modify global state, interact with the
  console, etc, instead of merely pure computation.
  If an input transition is fired during execution, the generated runtime will
  invoke a user-supplied function of this type with the appropriately populated
  value of \lstinline+s+, including any payload values received in the message
  that triggered this transition.
  Note, any data or protocol refinements are embedded into the types of these
  fields.

  Similarly, for each transition of an output state with a \emph{single}
  outgoing transition, we generate a function type
  \lstinlinefst+s ->ML +$\tau$, where $\tau$ is the refined type for the
  output payload.

  \item \textbf{Internal Choices.}
  For each output state with more than 1 outgoing transition, we generate an
  additional sum type $\rho$ with the cases of the choice, e.g.\
  \cref{fig:higherlower_projection_B-internal-choice}.
  This sum type (i.e.\ \lstinline+s4Cases+) is indexed by the corresponding
  state type (i.e.\ \lstinline+s+) to make any required knowledge available for
  expressing the protocol flow refinement of each case.
  Its constructors indicate the label of the internal choice.

  We then generate a single function type for this state,
  \lstinlinefst+s ->ML +$\rho$: the user implementation selects which choice case to
  follow by returning a corresponding $\rho$ value, under the constraints of
  any refinements.
  For example, a \lstinline+s4_win+ value can only be constructed, thus this
  choice case only be selected, when \lstinlinefst+s.n=s.x+ for the given
  \lstinline+s+.
  The state machine is advanced according to the constructor of the returned
  value (corresponding to the label of the message), and the generated runtime
  sends the payload value to the intended recipient.

\end{itemize}

An asynchronous \emph{output} event, i.e.\ the trigger for the
API to call back an output function, requires the communication medium to be ready
to accept the message (e.g.\ there is enough memory in the local output
buffer).
For simplicity, in this work we consider the callbacks of an
output state to always be immediately fireable. Concretely, we delegate these concerns to the
underlying libraries and runtime system.

\mypara{Projection and API Generation for \textsf{\textbf{\color{Teal}C}}}

The projection onto $\ppt C$ raises an interesting question related to the
refinement of \emph{multiparty} protocols: how should we treat refinements on
variables that the target role does \emph{not} itself \emph{know}?
$\ppt C$ does not know the value of the secret \lstinline+n+ (otherwise
this game would be quite boring), but it does know that this information
\emph{exists} in the protocol and is subject to the specified refinement.
In standard MPST, it is essentially the main point of projection that
interactions not involving the target role can be simply (and safely)
\emph{dropped} outright; e.g.\ the communication of the \lstinline+start+
message would simply not appear in the projection of $\ppt C$.
However, naively taking the same approach in RMPST would be inadequate:
although the target role may not know some exact value, the role may still need
the associated ``latent information'' to fulfil the desired application
behaviour.

Our framework introduces a notion of \emph{erased} variables for
RMPST --- in short, our projection does drop third-party \emph{interactions}, but
retains the latent information as refinement-typed
\emph{erased} variables, as illustrated by the annotation on state $1$ in
\cref{fig:higherlower-impl-c-cfsm}.
Thanks to the SMT-based refinement type system of \fstar, the type-checker can
still take advatange of
the refined types of erased variables to \emph{prove} properties of the
endpoint implementation;
however, these variables cannot actually be used
\emph{computationally} in the implementation (since their values are not known).
Conveniently, \fstar supports erased types (described briefly in
\cref{subsection:fstar-bg}), and provides ways (i.e.\ \code{Ghost} effects) to ensure that
such variables are not used in the computation.
We demonstrate this for our example in the next subsection.
Our approach can be considered a version of \emph{irrelevant} variables
from \cite{LICS01Irrelavance, LMCS12Irrelevance} for the setting of typed,
distributed interactions.

\begin{figure}[t]
\small
\begin{subfigure}{\textwidth}
  \centering
\begin{tikzpicture}
[
  >=stealth',
  NODE/.style={draw,circle,minimum width=3mm,inner sep=0pt,fill=lightgray}
]
\node[NODE,label={[align=left,xshift=-30pt,yshift=0pt]above:{$\mathcolorbox{lightgray}{n
      \mathop{:}
      \mathtt{erased~int}
      %\red{|\mathtt{int}|}
      \{0 \mathop{\leq} n \mathop{<}
      100\},}$\\[-3pt]$\mathcolorbox{lightgray}{t \mathop{:} \mathtt{erased~int} \{0 \mathop{<} t\}}$}}] (C1) {$1$};
\node[NODE,right=50mm of C1] (C2) {$2$};
\node[NODE,right=20mm of C2] (C3) {};
\node[right=12cm of C1] {};  % FIXME
\draw[>=angle 60,->,densely dotted] ($(C1.north west)+(-5.5mm,2mm)$) -- (C1.north west);
\draw[->] ($(C1.west)-(5mm,0)$) -- (C1);
\draw[->] (C1) -- node[above,yshift=-2pt]{$B!\mathtt{guess}(x)\{0 \mathop{\leq} x \mathop{<} 100\}$} (C2);
\path[->] (C2) edge[bend left=20] node [below]{$B?\mathtt{lower}\{n < x \land t > 1\}$} (C1);
\path[->] (C2) edge[bend right=20] node [above]{$B?\mathtt{higher}\{n > x \land t > 1\}$} (C1);
\path[->] (C2) edge[bend left] node [above]{$B?\mathtt{win}\{n = x\}$} (C3);
\path[->] (C2) edge[bend right] node [below]{$B?\mathtt{lose}\{n \neq x \land t =
  1\}$} (C3);
\end{tikzpicture}
\caption{CFSM Representation of the Projection}
\label{fig:higherlower-impl-c-cfsm}
\end{subfigure}
\begin{subfigure}{\textwidth}
\renewcommand{\lstinlinefst}{\lstinline[language=FSharp,basicstyle=\footnotesize\ttfamily]}
{\small
\begin{tabular}{llll}
&&&
User implementation
\\
&&&
{\lstinlinefst+(* Allocate a refined int reference *)+}
\\
State & Edge & Generated type &
{\lstinlinefst+let next: ref (x:int{0<=x<100}) = alloc 50+}
\\
\hline
$1$ & $B!\mathtt{guess}$
&
{\lstinlinefst+s1 ->ML (x:int{0<=x<100})+}
&
{\lstinlinefst+fun _ ->!next (*Deref next*)+}
\\
$2$ & $B?\mathtt{higher}$
&
{\lstinlinefst+s2 ->unit{n>x && t>1} ->ML unit+}
&
\red{{\lstinlinefst*fun s ->next := s.x + 1*}}
\\
 & $B?\mathtt{lower}$
&
{\lstinlinefst+s2 ->unit{n<x && t>1} ->ML unit+}
&
\red{{\lstinlinefst*fun s ->next := s.x - 1*}}
\\
 & $B?\mathtt{win}$
&
{\lstinlinefst+s2 ->unit{n=x} ->ML unit+}
&
{\lstinlinefst!fun _ ->()!}
\\
 & $B?\mathtt{lose}$
&
{\lstinlinefst+s2 ->unit{n<>x && t=1} ->ML unit+}
&
{\lstinlinefst!fun _ ->()!}
\end{tabular}
}
\caption{Generated I/O Callback Types}
\label{fig:higherlower-impl-c-types}
\end{subfigure}
\vspace{-3mm}
\caption{Projection and \fstar{} API Generation for \textsf{\textbf{\color{Teal}C}} in
\lstinline+HigherLower+
}
\label{fig:higherlower-impl-c}
\vspace{-5mm}
\end{figure}

\subsection{\fstar Implementation -- Protocol Validation and Verification by Refinement Types}
\label{subsection:impl-fstar-impl}

Finally, the generated APIs---embodying the refined projections---are used
to implement the endpoint processes.
As mentioned, the user implements the program logic as callback functions of
the generated (refinement) types, supplied to the entry point along with code for
establishing the communication channels between the session peers.
Assuming a record \lstinline+callbacks+ containing the required functions
(static typing ensures all are covered), \cref{fig:c-impl-main} bootstraps a
$\ppt C$ endpoint.

\begin{figure}[h]
  \centering
\begin{subfigure}{0.48\textwidth}
{\begin{lstlisting}[language=FSharp,numbers=none,basicstyle={\small\ttfamily}]
let main () =
  (* connect to B via TCP *)
  let server_B = connect ip_B port_B in
  (* Setup connection from TCP *)
  let conn = mk_conn server_B in
  (* Invoke the Entry Point `run` *)
  let () = run callbacks conn in
  (* Close TCP connection *)
  close server_B
\end{lstlisting}}
\vspace{-2mm}
\caption{Running the Endpoint
  \textsf{\textbf{\color{Teal}C}}}\label{fig:c-impl-main}
\end{subfigure}
\begin{subfigure}{0.48\textwidth}
{\begin{lstlisting}[language=FSharp,numbers=none,basicstyle={\small\ttfamily}]
(* Signature (s:s4) -> ML (s4Cases s) *)
fun (s:s4) ->
  (* Win if guessed number is correct *)
  if s.x=s.n then s4_win ()
  (* Lose if running out of attempts *)
  else if s.t=1 then s4_lose ()
  (* Otherwise give hints accordingly *)
  else if s.n>s.x then s4_higher ()
  else s4_lower ()
\end{lstlisting}}
\vspace{-2mm}
\caption{Implementing the Internal Choice for
  \textsf{\textbf{\color{Teal}B}}}\label{fig:c-impl-internal-choice}
\end{subfigure}
\vspace{-3mm}
\caption{Selected Snippets of Endpoint Implementation}
\vspace{-5mm}
\end{figure}

The API takes care of endpoint execution by monitoring the channels, and
calling the appropriate callback based on the current protocol state and I/O
event occurrences.
For example, a minimal, well-typed implementation of $\ppt B$ could
comprise the internal choice callback above (\cref{fig:c-impl-internal-choice})
(implementing the type in \cref{fig:higherlower_projection_B-internal-choice}), cf.\ state $4$,
and an empty function for all others
(i.e.\ \lstinlinefst+fun _ ->()+).
We can highlight how protocol violations are ruled out by static refinement
typing, which is ultimately the practical purpose of RMPST.
In the above callback code, changing, say, the condition for the
\lstinline+lose+ case to \lstinlinefst+s.t=0+ would directly violate the
refinement on the \lstinline+s4_lose+ constructor, cf.\
\cref{fig:higherlower_projection_B-internal-choice}.
Omitting the \lstinline+lose+ case altogether would break both the
\lstinline+lower+ and \lstinline+higher+ cases, as the type checker would not
be able to prove \lstinlinefst+s.t>1+ as required by the subsequent
constructors.

Lastly, \cref{fig:higherlower-impl-c-types} implements $\ppt C$ to
guess the secret by a simple search, given we know its value is bounded within
the specified interval.
We draw attention to the input callback for \lstinline+higher+, where we adjust
the \lstinline+next+ value.
Given that the value being assigned is one more than the existing value, it
might have been the case that the new value falls out of the range (in the case
where \lstinline+next+ is 99), hence violating the prescribed type.
However, despite that the value of \lstinline+n+ is unknown, we have known from
the refinement attached to the edge that \lstinlinefst+n>x+ holds, hence it must
have been the case that our last guess \lstinline+x+ is strictly less than the
secret \lstinline+n+, which rules out the possibility that
\lstinline+x+ can be 99 (the maximal value of \lstinline+n+).
Had the refinement and the erased variable not been present, the type-checker
would not be able to accept this implementation, and it demonstrates that our
encoding allows such reasoning with latent information from the protocol.

Moreover, the type and effect system of \fstar prevents the erased variables
from being used in the callbacks.
On one hand, \lstinlinefst{int} and \lstinlinefst{erased int} types are not
compatible, because they are not the same type.
This prevents an irrelevant variable from being used in place of a concrete
variable.
On the other hand, the function \lstinlinefst{reveal} converts a value of
\lstinlinefst{erased 'a} to a value of \lstinlinefst{'a} with \lstinlinefst{Ghost}
effect.
A function with \lstinlinefst{Ghost} effect \emph{cannot} be mixed with a
function with \lstinlinefst{ML} effect (as in the case of our callbacks), so
irrelevant variables cannot be used in the implementation via the \lstinlinefst{reveal}
function.

Interested readers are invited to try the running example out with our
accompanying artifact.
We propose a few modifications on the implementation code and the protocol, and
invite the readers to observe errors when implementations no longer correctly
conforms to the prescribed protocol.

\subsection{Executing the Communicating Finite State Machine (Generated Code)}
\label{subsection:impl-run-fsm}
As mentioned earlier, our API design sepearates the concern of program logic
(with callbacks) and communication (with connections).
A crucial piece of the generated code involves \emph{threading} the two parts
together --- the execution function performs the
communications actions and invokes the appropriate callbacks for handling.
In this way, we do \emph{not} expose explicit communication channels, so
linearity can be achieved with ease by construction in our generated code.

The entry point function, named \code{run}, takes callbacks and connections as
arguments, and executes the CFSM for the specified endpoint.
The signature uses the permissive \keyword{ML} effect, since communicating with
the external world performs side effects.
We traverse the states (the set of states is denoted $\Q$) in the CFSM and
generate appropriate code depending on the nature of the state and its outgoing
transitions.

Internally, we define mutually recursive functions for each state $q \in \Q$,
named \code{run$_q$},
taking the state record $\enc{q}$ as argument ($\enc{q}$ stands for the state record
for a given state $q$),
which performs the required actions at state $q$.
The run state function for a state $q$ either (1) invokes callbacks and communication primitives,
then calls the  run state function for the successor state $q'$,
or (2) returns directly for termination if $q$ is a
terminal state (without outgoing transitions).
The main entry point invokes the run function for the initial state $q_0$,
starting the finite state machine.

The internal run state functions are not exposed to the developer, hence
it is not possible to tamper with the internal state with usual means of
programming.
This allows us to guarantee linearity of communication channels by
construction.
In the following text, we outline how to run each state, depending on whether
the state is a sending state or a receiving state.
Note that CFSMs constructed from local types do not have mixed states
\cite[Prop. 3.1]{ICALP13CFSM}

\begin{figure}[h]
  \begin{subfigure}{0.48\textwidth}
    \begin{lstlisting}[language=FSharp,numbers=none]
let rec run_£$q$£ (st: state£$q$£) =
  let choice = callbacks.state£$q$£_send st
  in match choice with
  | Choice£$q\dlbl{l_i}$£ payload -> £$\tikzmark{curly-brace-1-start}$£
    comm.send_string £$\ppt q$£ "£$\dlbl{l_i}$£";
    comm.send_£$\dte S$ \ppt q£ payload;
    let st = { £$\cdots$£; £$\dexp{x_i}$£=payload } in £$\tikzmark{curly-brace-1-longest}$£
    run_£$q'$£ st £$\tikzmark{curly-brace-1-end}$£
    \end{lstlisting}
    \AddNote{curly-brace-1-start}{curly-brace-1-end}{curly-brace-1-longest}{Repeat
      for $i \in I$}
    \vspace{-8mm}
    \caption{Template for Sending State $q$}
    \label{fig:runq-sending}
  \end{subfigure}
  \hfill
  \begin{subfigure}{0.48\textwidth}
    \begin{lstlisting}[language=FSharp, numbers=none]
let rec run_£$q$£ (st: state£$q$£) =
  let label = comm.recv_string £$\ppt p$£ () in
  match label with
  £$\tikzmark{curly-brace-2-start}$£| "£$\dlbl{l_i}$£" ->
    let payload = comm.recv_£$\dte S$£ £$\ppt p$£ () in
    callbacks.state£$q$£_receive_£$\dlbl{l_i}$£ st payload;
    let st = { £$\cdots$£; £$\dexp{x_i}$£=payload } in
  £$\tikzmark{curly-brace-2-end}$£  run_£$q'$£ st
    \end{lstlisting}
    \AddNoteLeft{curly-brace-2-start}{curly-brace-2-end}{curly-brace-2-end}{}
    \vspace{-8mm}
    \caption{Template for Receiving State $q$}
    \label{fig:runq-receiving}
  \end{subfigure}
  \vspace{-3mm}
  \caption{Template for \code{run}$_q$}
  \vspace{-3mm}
\end{figure}
\mypara{Running the CFSM at a Sending State}
For a sending state $q \in \Q$, the developer makes an internal choice on how
the protocol proceeds, among the possible outgoing transitions.
This is done by invoking the sending callback $\code{state}q\code{\_send}$ with
the current state record, to obtain a choice with the associated payload.
We pattern match on the constructor of the label $\dlbl{l_i}$ of the message, and
find the corresponding successor state $q'$.

The label $\dlbl{l_i}$ is encoded as a \tstr\ and sent via the sending
primitive to $\ppt q$.
It is followed by the payload specified in the return value of the callback,
via corresponding sending primitive according to the base type with refinement
erased.

We construct a state record of $\enc{q'}$ from the existing record $\enc{q}$,
adding the new field $\dexp{x_i}$ in the action using the callback return
value.
In the case of recursive protocols, we also update the recursion variable
according to the definition in the protocol when constructing $\enc{q'}$.
Finally, we call the run state function $\code{run}_{q'}$ to continue the
CFSM, effectively making the transition to state $q'$.

Following the procedure, $\code{run}_q$ is generated as shown in \cref{fig:runq-sending}.

\mypara{Running the CFSM at a Receiving State}
For a receiving state $q \in \Q$, how the protocol proceeds is determined by
an external choice, among the possible outgoint actions.
To know what choice is made by the other party, we first receive a string
and decode it into a label $\dlbl{l}$, via the receiving primitive for string.

Subsequently, according to the label $\dlbl{l}$, we can look up the label in
the possible transitions, and find the state successor $q'$.
By invoking the appropriate receiving primitive, we obtain the payload value.
We note that the receiving primitive has a return type without refinements.
In order to re-attach the refinements, we use the \fstar builtin \keyword{assume} to
reinstate the refinements according to the protocol before using the value.

According the label $\dlbl{l}$ received, we can call the corresponding
receiving callback with the received value.
This allows the developer to process the received value and perform any
relevant program logic.
This is followed by the same procedure for constructing the state record for
the next state $q'$ and invoking the run function for $q'$.

Following the procedure, $\code{run}_q$ is generated as shown in \cref{fig:runq-receiving}.

\subsection{Summary} % \TODO}

We demonstrated with our running example, \lstinlinefst{HigherLower}, how to
implement a refined multiparty protocol with our toolchain \Ourtool.

Exploiting the powerful type system of \fstar, our approach has several key
benefits:
First, it guarantees \emph{fully static} session type safety in a
lightweight, practical manner --- the callback-style API is portable to any
statically typed language.
Existing work based on code generation has considered only hybrid approaches
that supplement static typing with dynamically checked linearity of explicit
communication channel usages.
Moreover, the separation of program logic and communication leads to a modular
implementation of protocols.

Second, it is well suited to functional languages like \fstar; in particular,
the \emph{data}-oriented nature of the user interface allows the refinements in
RMPST to be directly mapped to data refinements in \fstar, allowing the
refinements constraints to be discharged at the user implementation level by the
\fstar compiler --- again, fully statically.

Furthermore, our endpoint implementation inherits core communication safety
properties such
as freedom from deadlock or communication mismatches, based on the original
MPST theory.
We use the \fstar type-checker to validate statically that an endpoint
implementation is correctly typed with respect to the prescribed type obtained
via projection of the global protocol.
Therefore, the implementation benefits from additional guarantees from the
refinement types.

%% Local Variables:
%% mode: latex
%% TeX-master: "main"
%% End:

\section{A theory of Refined Multiparty Session Types (RMPST)}
\label{section:theory}

In this section, we introduce \emph{refined multiparty session types}
(\emph{RMPST} for short).
We give the syntax of types in \cref{subsection:theory-syntax}, extending
original multiparty session types (MPST) with \emph{refinement types}.
We describe the refinement typing system that we use to type expressions in
RMPST in \cref{subsection:typing-exp}.

We follow the standard MPST methodology.
\emph{Global session types} describe communication structures of many
\emph{participants} (also known as \emph{roles}).
\emph{Local session types}, describing communication structures of a single
participant, can be obtained via \emph{projection} (explain in
\cref{subsection:theory-projection}).
Endpoint processes implement local types obtained from projection.
We give semantics of global types and local types in
\cref{subsection:theory-semantics}, and show the equivalence of semantics
with respect to projection.
As a consequence, we can compose all endpoint processes implementing local
types for roles in a global type, obtained via projection, to implement
the global type correctly.

\subsection{Syntax of Types}
\label{subsection:theory-syntax}

We define the syntax of refined multiparty session types (refined MPST) in
\cref{fig:mpst-syntax}.
We use different colours for different syntactical categories to help
disambiguation, but the syntax can be understood without colours.
We use \dgt{pink}~for global types, \dtp{dark blue}~for local types,
\dexp{blue}~for expressions, \dte{purple}~for base types, \dlbl{indigo}~for
labels, and \ppt{Teal}~with bold fonts for participants.

\begin{figure}[h]
  \vspace{-1mm}
  \begin{tabular}{ll}
    $
    \arraycolsep=2pt
    \begin{array}{rcll}
      \dte{S} & \bnfas & \tint \bnfalt \tbool \bnfalt \dots & \text{\footnotesize Base Types} \\
      \dte{T} & \bnfas & \dexp x:\dte{S}\esetof{E} & \text{\footnotesize Refinement Types} \\
      \dexp{E} & \bnfas & \dexp x \bnfalt \dexp {\underline n} \bnfalt op_1~\dexp{E} \bnfalt \dexp{E} ~op_2~ \dexp{E} \dots & \text{\footnotesize Expressions} \\
      \dgt{G} & \bnfas & & \text{\footnotesize Global Types} \\
      & \bnfalt & \gtbran{p}{q}{\dlbl{l_i}(\vb{x_i}{T_i}). G_i}_{i \in I} &
      \text{\footnotesize Message} \\
       & \bnfalt & \gtrecur{t}{\vb{x}{T}}{\dexp{x := E}}{G} &
       \text{\footnotesize Recursion} \\
       & \bnfalt & \gtvar{t}{\dexp{x := E}} ~~\bnfalt~~ \gtend & \text{\footnotesize Type Var.,
       End} \\
    \end{array}$
  &
    $
    \begin{array}{rcll}
      \dtp{L} &
      \bnfas & &  \text{\footnotesize Local Types} \\
      & \bnfalt &
        \toffer{p}{\dlbl{l_i}(\vb{x_i}{T_i}). L_i}_{i \in I} &
        \text{\footnotesize Receiving}\\
      & \bnfalt &
        \ttake{p}{\dlbl{l_i}(\vb{x_i}{T_i}). L_i}_{i \in I} &
        \text{\footnotesize Sending}\\
      & \bnfalt & \tphi{l}{x}{T}{L} & \text{\footnotesize Silent Prefix} \\
      & \bnfalt &
        \trecur{t}{}{\vb{x}{T}}{\dexp{x := E}}{L} & \text{\footnotesize
          Recursion} \\
      & \bnfalt & \tvar{t}{\dexp{x := E}} ~~\bnfalt~~ \tend & \text{\footnotesize Type Var.,
        End} \\
    \end{array}$
  \end{tabular}
\vspace{-1mm}
\caption{Syntax of Refined Multiparty Session Types}
\vspace{-3mm}
\label{fig:mpst-syntax}
\end{figure}

\mypara{Value Types and Expressions}
We use $\dte{S}$ for base types of values, ranging over integers, booleans,
etc.
Values of the base types must be able to be communicated.

The base type $\dte{S}$ can be \emph{refined} by a boolean expression,
acting as a predicate on the members of the base type.
A \emph{refinement type} is of the form $(\vb{x}{S} \esetof{E})$.
A value $\dexp{x}$ of the type has base type $\dte{S}$, and is refined by a
boolean expression $\dexp E$.
The boolean expression $\dexp E$ acts as a predicate on the members $\dexp x$
(possibly involving the variable $\dexp x$).
For example, we can express natural numbers as $(\vb{x}{\tint} \esetof{x \geq
0}$).
We use $\fv{\cdot}$ to denote the free variables in refinement types, expressions, etc.
We consider variable $\dexp x$ be bound in the refinement expression $\dexp
E$, i.e.
$\fv{\vb{x}{S}\esetof{E}} = \fv{\dexp E} \setminus \esetof{x}$.

Where there is no ambiguity, we use the base type $\dte S$ directly as an
abbreviation of a refinement type $(\vb{x}{S}\esetof{\etrue})$, where $\dexp x$
is a fresh variable, and $\etrue$ acts as a predicate that accepts all values.

\mypara{Global Session Types}
\emph{Global session types} (\emph{global types} or \emph{protocols} for short) range over
$\dgt{G, G', G_i, \dots}$
Global types give an overview of the overall communication structure.
We extend the standard global types \cite{ICALP13CFSM} with
refinement types and variable bindings in message prefixes. Extensions to the syntax
are \shaded{\text{shaded}} in the following explanations.

$\gtbran{p}{q}{\dlbl{l_i}(\shaded{\dexp{x_i}:}\dte{T_i}). G_i}_{i \in I}$ is a
message from $\ppt p$ to $\ppt q$, which branches into one or more
continuations with label $\dlbl{l_i}$, carrying a payload variable
$\dexp{x_i}$ with type $\dte{T_i}$.
We omit the curly braces when there is only one branch, like
$\gtbransingle{p}{q}{\dlbl{l}(\vb{x}{T})}$.
We highlight the difference from the standard syntax, i.e.\ the variable binding.
The payload variable $\dexp{x_i}$ occurs bound in the continuation global
type $\dgt{G_i}$, for all $i \in I$.
We sometimes omit the variable if it is not used in the continuations.
The free variables are defined as:
\vspace{-1mm}
$$\fv{\gtbran{p}{q}{\dlbl{l_i}(\vb{x_i}{T_i}). G_i}_{i \in I}}
=
\bigcup_{i \in I}{\fv{\dte{T_i}}}
  \cup
\bigcup_{i \in I}{(\fv{\dgt{G_i}} \setminus \esetof{x_i})}
\vspace{-1mm}
$$
We require that the index set $I$ is not empty, and all labels $\dlbl{l_i}$
are distinct.
To prevent duplication, we write $\dlbl{l}(\vb{x}{S}\esetof{E})$
instead of $\dlbl{l}(\vb{x}{(\vb{x}{S}\esetof{E})})$ (the first $\dexp x$ occurs
as a variable binding in the message, the second $\dexp x$ occurs as a variable
representing member values in the refinement types).

We extend the construct of recursive protocols to include a variable carrying
a value in the inner protocol.
In this way, we enhance the expressiveness of the global types by allowing
a recursion variable to be maintained across iterations of global protocols.

The recursive global type
$\gtrecur{t}{\shaded{\vb{x}{T}}}{\shaded{\dexp{x := E}}}{G}$
specifies a variable $\dexp{x}$ carrying type $\dte{T}$ in the
recursive type, initialised with expression $\dexp E$.
The type variable $\gtvar{t}{\shaded{\dexp{x := E}}}$ is annotated with an
assignment of expression $\dexp E$ to variable $\dexp x$.
The assignment updates the variable $\dexp{x}$ in the current recursive
protocol to expression $\dexp{E}$.
The free variables in recursive type is defined as
\vspace{-1mm}
$$\fv{\gtrecur{t}{{\vb{x}{T}}}{{\dexp{x := E}}}{G}}
=
\fv{\dte T} \cup \fv{\dexp E} \cup (\fv{\dgt G} \setminus \esetof{x})$$
\vspace{-5mm}

We require that recursive types are contractive \cite[\S 21]{PierceTAPL}, so that
recursive protocols have at least a message prefix, and protocols such as
$\gtrecur{t}{\vb{x}{T}}{\dexp{x:=E_1}}{\gtvar{t}{\dexp{x:=E_2}}}$ are not allowed.
We also require recursive types to be closed with respect to type variables,
e.g.\ protocols such as $\gtvar{t}{\dexp{x:=E}}$ alone are not allowed.

We write $\dgt G\subst{\gtrecursimpl{t}{\vb{x}{T}}{G}}{\dgt{\mathbf{t}}}$ to
substitute all occurrences of type variables with expressions
$\gtvar{t}{\dexp{x:= E}}$ into $\gtrecur{t}{\vb{x}{T}}{\dexp{x:=E}}{G}$. We
write $\ppt r \in \dgt G$ to say $\ppt r$ is a participating role in
the global type $\dgt G$.

\begin{example}[Global Types]
  We give the following examples of global types.
  \begin{enumerate}
    \item
      $
        \dgt{G_1} =
          \gtbransingle{A}{B}{\dlbl{Fst}(\vb{x}{\tint})}.
          \gtbransingle{B}{C}{\dlbl{Snd}(\vb{y}{\tint\esetof{x = y}})}.
          \gtbransingle{C}{D}{\dlbl{Trd}(\vb{z}{\tint\esetof{x = z}})}.
          \gtend.
      $

      \vspace{2mm}
      $\dgt{G_1}$ describes a protocol where $\ppt A$ sends an $\tint$ to $\ppt B$,
      and $\ppt B$ relays the same $\tint$ to $\ppt C$, similar for $\ppt C$ to
      $\ppt D$. Note that we can write $\dexp{x=z}$ in the refinement of $\dexp
      z$, whilst $\dexp x$ is not known to $\ppt C$.
    \vspace{2mm}
    \item
      $
        \dgt{G_2} =
          \gtbransingle{A}{B}{\dlbl{Number}(\vb{x}{\tint})}.
          \gtbran{B}{C}{
            \begin{array}{@{}l@{}}
              \dlbl{Positive}({\tunit}\esetof{x > 0}).\gtend\\
              \dlbl{Zero}({\tunit}\esetof{x = 0}).\gtend\\
              \dlbl{Negative}({\tunit}\esetof{x < 0}).\gtend
            \end{array}
          }
      $

      \vspace{2mm}
      $\dgt{G_2}$ describes a protocol where $\ppt A$ sends an $\tint$ to $\ppt B$,
      and $\ppt B$ tells $\ppt C$ whether the $\tint$ is positive, zero, or negative.
      We omit the variable here since it is not used later in the continuation.
    \vspace{2mm}
    \item
      $
      \arraycolsep=1pt
      \begin{array}[t]{@{}ll@{}}
        \dgt{G_3} = &
         \gtrecur{t}
         {\vb{try}{\tint\esetof{try \geq 0 \land try \leq 3}}}
         {\dexp{try := 0}}
         {}\\
           &\begin{array}{@{}l@{}}
              \gtbransingle{A}{B}{\dlbl{Password}(\vb{pwd}{\tstr})}.\\
              \gtbran{B}{A}{
                \begin{array}{@{}l@{}}
                  \dlbl{Correct}({\tunit}).\gtend\\
                  \dlbl{Retry}({\tunit}\esetof{try < 3}).\gtvar{t}{\dexp{try:=try+1}}\\
                  \dlbl{Denied}({\tunit}\esetof{try = 3}).\gtend
                \end{array}
              }
           \end{array}
        \end{array}
      $

      \vspace{2mm}
      $\dgt{G_3}$ describes a protocol where $\ppt A$ authenticates with $\ppt B$ with maximum 3 tries.
  \end{enumerate}
  \label{example:gty}
\end{example}

\mypara{Local Session Types}
\emph{Local session types} (\emph{local types} for short) range over
$\dtp{L, L', L_i, \dots}$
Local types give a view of the communication structure of an endpoint,
usually obtained from a global type.
In addition to standard syntax, the recursive types are
similarly extended as those of global types.

Suppose the current role is $\ppt q$,
the local type $\ttake{p}{\dlbl{l_i}(\vb{x_i}{T_i}). L_i}_{i \in I}$ describes
that the role $\ppt q$ sends a message to the partner role $\ppt p$ with label
$\dlbl{l_i}$ (where $i$ is selected from an index set $I$), carrying payload
variable $\dexp{x_i}$ with type $\dte{T_i}$, and continues with $\dtp{L_i}$.
It is also said that the role $\ppt q$ takes an \emph{internal choice}.
Similarly, the local type $\toffer{p}{\dlbl{l_i}(\vb{x_i}{T_i}). L_i}_{i \in
  I}$ describes that the role $\ppt q$ receives a message from the partner role $\ppt p$.
In this case, it is also said that the role $\ppt q$ offers an \emph{external choice}.
We omit curly braces when there is only a single branch (as is done for global
messages).

We add a new syntax construct of $\tphi{l}{x}{T}{L}$ for \emph{silent local types}.
We motivate this introduction of the new prefix to represent knowledge
obtained from the global protocol, but not in the form of a message.
Silent local types are useful to model variables obtained with
irrelevant quantification \cite{LICS01Irrelavance, LMCS12Irrelevance}.
These variables can be used in the construction of a type, but cannot be used
in that of an expression, as we explain later in \cref{subsection:typing-exp}.
We show an example of a silent local type later in \cref{example:proj}, after we
define \emph{endpoint projection}, the process of obtaining local types from a
global type.

\subsection{Expressions and Typing Expressions}
\label{subsection:typing-exp}

We use $\dexp{E, E', E_i}$ to range over expressions.
Expressions consist of variables $\dexp x$, constants (e.g.\ integer literals
$\dexp{\underline{n}}$), and unary and binary operations.
We use an SMT assisted refinement type system for typing expressions,
in the style of \cite{PLDI08LiquidTypes}.
The simple syntax of expressions allows all expressions to be encoded into
SMT logic, for deciding a semantic subtyping relation of refinement types
\cite{JFP12SemanticSubtyping}.

\begin{figure}
  \begin{mathpar}
    \inferrule[\ruleWfRty]{
      \Sigma^+, \vb{x}{S} \vdash \vb{E}{\tbool}
    }{
      \Sigma \vdash (\vb{x}{S}\esetof{E}) ~ty
    }

    \inferrule[\ruleTEVar]{
    }{
        \Sigma_1, \vb{x^\omega}{T}, \Sigma_2 \vdash \dexp x : \dte T
    }

    \inferrule[\ruleTEPlus]{
        \Sigma \vdash \dexp{E_1} : \tint
        \and
        \Sigma \vdash \dexp{E_2} : \tint
    }{
        \Sigma \vdash \dexp{E_1 + E_2} : (\vb{v}{\tint}\esetof{v = E_1 + E_2})
    }
    \vspace{-3mm}

    \inferrule[\ruleTESub]{
        \Sigma \vdash \dexp{E} : (\vb{v}{S}\esetof{E_1})
        \and
        \valid{\enc{\Sigma} \land \enc{\dexp{E_1}} \implies \enc{\dexp{E_2}}}
    }{
        \Sigma \vdash \dexp{E} : (\vb{v}{S}\esetof{E_2})
    }

    \inferrule[\ruleTEConst]{
    }{
        \Sigma \vdash \dexp{\underline{n}} : (\vb{v}{\tint}\esetof{v = \underline{n}})
    }
  \end{mathpar}
  \vspace{-3mm}
  \caption{Selected Typing Rules for Expressions in a Local Typing Context}
  \vspace{-3mm}
  \label{fig:typing-expression}
\end{figure}

\mypara{Typing Contexts}
We define two categories of typing contexts, for use in handling global types
and local types respectively.
$$
    \Gamma \bnfas \emptyctx \bnfalt \ctxc{\Gamma}{x^{\ppt{$\mathbb{P}$}}}{T} \hspace{8mm}
    \Sigma \bnfas \emptyctx \bnfalt \ctxc{\Sigma}{x^\theta}{T} \hspace{8mm}
    \theta \bnfas 0 \bnfalt \omega
$$
We annotate global and local typing contexts differently.
For global contexts $\Gamma$, variables carry the annotation of a set of roles
$\bigP$, to denote the set of roles that have the knowledge of its value.

For local contexts $\Sigma$, variables carry the annotation of their
multiplicity $\theta$.
A variable with multiplicity $0$ is an irrelevantly quantified variable
(irrelevant variable for short), which
cannot appear in the expression when typing (also denoted as $\dexp x \div
\dte T$ in the literature \cite{LICS01Irrelavance,LMCS12Irrelevance}).
Such a variable can only appear in an expression used as a predicate, when
defining a refinement type.
A variable with multiplicity $\omega$ is a variable without restriction.
We often omit the multiplicity $\omega$.

\mypara{Well-formedness}
Since a refinement type can contain free variables, it is necessary to define
well-formedness judgements on refinement types, and henceforth on typing contexts.

We define $\Sigma^+$ to be the local typing context where all irrelevant
variables $\dexp{x^0}$ become unrestricted $\dexp{x^\omega}$, i.e.\
$(\emptyctx)^+ = \emptyctx; (\Sigma, \vb{x^\theta}{T})^+ = \Sigma^+, \vb{x^\omega}{T}$.

We show the well-formedness judgement of a refinement type \ruleWfRty~in
\cref{fig:typing-expression}.
For a refinement type $(\vb{x}{S}\esetof{E})$ to be a well-formed type, the
expression $\dexp E$ must have a boolean type under the context $\Sigma^+$,
extended with variable $\dexp x$ (representing the members of the type) with
type $\dte S$.
The typing context $\Sigma^+$ promotes the irrelevant quantified variables
into unrestricted variables, so they can be used in the expression $\dexp E$
inside the refinement type.

The well-formedness of a typing context is defined inductively, requiring all
refinement types in the context to be well-formed.
We omit the judgements for brevity.

\mypara{Typing Expressions}
We type expressions in local contexts, forming judgements of form \linebreak
\hfill \hfill
$\Sigma \vdash \dexp E: \dte T$, and show key typing rules in
\cref{fig:typing-expression}.
We modify the typing rules in a standard refinement \linebreak type system
\cite{PLDI08LiquidTypes, ICFP14LiquidHaskell, POPL18Refinement},
adding distinction between irrelevant and unrestricted variables.

\ruleTEConst\ gives constant values in the expression a refinement type
that only contains the constant value.
Similarly, \ruleTEPlus\ gives typing derivations for the plus operator,
with a corresponding refinement type encoding the addition.

We draw attention to the handling of variables (\ruleTEVar).
An irrelevant variable in the typing context cannot appear
in an expression, i.e.\ there is \emph{no} derivation for
$\Sigma_1, \vb{x^0}{T}, \Sigma_2 \vdash \vb{x}{T}$.
These variables can only be used in an refinement type (see \ruleWfRty).

The key feature of the refinement type system is the semantic subtyping
relation decided by SMT \cite{JFP12SemanticSubtyping}, we describe the feature in \ruleTESub.
We use $\enc{\dexp E}$ to denote the encoding of expresion $\dexp E$ into the
SMT logic.
We encode a type binding $\vb{x^\theta}{(\vb{v}{S}\esetof{E})}$ in a typing
context by encoding the term $\dexp{E\subst{x}{v}}$, and define the encoding
of a typing context $\enc{\Sigma}$ inductively.

We define the extension of typing contexts
($\ctxext{\Gamma}{x^{\bigP}}{T}$; $\ctxext{\Sigma}{x^\theta}{T}$)
in \cref{fig:ctx-ext},
used in definitions of semantics.
We say a global type $\dgt G$ (resp.\ a local type $\dtp L$) is closed under
a global context $\Gamma$ (resp.\ a local context $\Sigma$), if all free variables
in the type are in the domain of the context.
\begin{figure}
  \scalebox{0.9}{$
    \ctxext{\Gamma}{x^{\bigP}}{T} = \begin{cases}
      \Gamma, \vb{x^{\bigP}}{T} & \text{if}~\dexp{x} \notin \Gamma \\
      \Gamma_1, \vb{x^{\bigP}}{T}, \Gamma_2 & \text{if}~\Gamma = \Gamma_1, \vb{x^{\varnothing}}{T}, \Gamma_2\\
      \Gamma_1, \vb{x^{\bigP}}{T}, \Gamma_2 & \text{if}~\Gamma = \Gamma_1, \vb{x^{\bigP}}{T}, \Gamma_2\\
      \text{undefined} & \text{otherwise}
    \end{cases}
    \hspace{5mm}
    \ctxext{\Sigma}{x^\theta}{T} = \begin{cases}
      \Sigma, \vb{x^\theta}{T} & \text{if}~\dexp{x} \notin \Sigma \\
      \Sigma_1, \vb{x^\theta}{T}, \Sigma_2 & \text{if}~\Sigma = \Sigma_1, \vb{x^0}{T}, \Sigma_2\\
      \Sigma_1, \vb{x^\omega}{T}, \Sigma_2 & \text{if}~\Sigma = \Sigma_1, \vb{x^\omega}{T}, \Sigma_2\\
      \text{undefined} & \text{otherwise}
    \end{cases}
  $}
\vspace{-3mm}
\caption{Typing Context Extension}
\vspace{-3mm}
\label{fig:ctx-ext}
\end{figure}

\begin{remark}[Empty Type]
  \upshape
  A refinement type may be \emph{empty}, with no inhabited member.

  We can construct such a type under the empty context $\emptyctx$ as
  $(\vb{x}{S}\esetof{\efalse})$ with any base types $\dte S$.
  A more specific example is a refinement type for an integer that is both
  negative and positive
  $(\vb{x}{\tint}\esetof{x > 0 \land x < 0})$.
  Similarly, under the context $\vb{x^\omega}{\tint\esetof{x > 0}}$, the
  refinement type $\vb{y}{\tint\esetof{y < 0 \land y > x}}$ is empty.
  In these cases, the typing context with the specified type becomes
  inconsistent, i.e.\ the encoded context gives a proof of falsity.

  Moreover, an empty type can also occur without inconsistency.
  For instance, in a typing context of $\vb{x^0}{\tint}$, the type
  $\vb{y}{\tint\esetof{y > x}}$ is empty --- it is not possible to produce such
  a value without referring to $\dexp x$ (cf.\ \ruleTEVar).

  \label{rem:empty-value-type}
\end{remark}
\subsection{Endpoint Projection: From Global Contexts and Types to Local Contexts and Types}
\label{subsection:theory-projection}

In the methodology of multiparty session types, developers specify a global
type, and obtain local types for the participants via \emph{endpoint
projection} (\emph{projection} for short).
In the original theory,
projection is a \emph{partial} function that takes a global type $\dgt G$ and a
participant $\ppt p$, and returns a local type $\dtp L$.
The resulting local type $\dtp L$ describes a the local communication behaviour
for participant $\ppt p$ in the global scenario.
Such workflow has the advantage that each endpoint can obtain a local type
separately, and implement a process of the given type, hence providing
modularity and scalability.

Projection is defined as a \emph{partial} function, since only
\emph{well-formed}
global types can be projected to all participants.
In particular, a \emph{partial} merge operator $\sqcup$ is used during the
projection, for creating a local type
$\Sigma \vdash \dtp{L_1} \sqcup \dtp{L_2} = \dtp{L_{\text{merged}}}$
that captures the behaviour of two local types, under context $\Sigma$.

In RMPST, we first define the projection of global typing contexts
(\cref{fig:proj-gctx}), and then define the projection of global types under
a global typing context (\cref{fig:proj-gty}).
We use expression typing judgements in the definition of projection, to
type-check expressions against their prescribed types.

\mypara{Projection of Global Contexts}
We define the judgement $\ctxproj{\Gamma}{p} = \Sigma$ for the projection of
global typing context $\Gamma$ to participant $\ppt p$ in \cref{fig:proj-gctx}.
In the global context $\Gamma$, a variable $\dexp x$ is annotated with the
set of participants $\bigP$ who know the value.
If the projected participant $\ppt p$ is in the set $\bigP$, \rulePVarOmega\
is applied to obtain an unrestricted variable in the resulting local context;
Otherwise, \rulePVarZero\  is applied to obtain an irrelevant variable.

\mypara{Projection of Global Types with a Global Context}
When projecting a global type $\dgt G$, we include a global context $\Gamma$,
forming a judgement of form $\gtctxproj{\Gamma}{G}{p} = \ltctx{\Sigma}{L}$.
Projection rules are shown in \cref{fig:proj-gty}.
Including a typing context allows us to type-check expressions during
projection, hence ensuring that variables attached to recursive protocols are
well-typed.

\begin{figure}
\begin{mathpar}
  \boxed{\ctxproj{\Gamma}{p} = \Sigma}

  \inferrule[\rulePEmpty]
  {
  }
  {
    \ctxproj{\emptyctx}{p} = \emptyctx
  }

  \inferrule[\rulePVarOmega]
  {
    \ppt p \in \bigP
    \and
    \ctxproj{\Gamma}{p} = \Sigma
  }
  {
    \ctxproj{\ctxc{\Gamma}{x^{\bigP}}{T}}{p}
    =
    \ctxc{\Sigma}{x^\omega}{T}
  }

  \inferrule[\rulePVarZero]
  {
    \ppt p \notin \bigP
    \and
    \ctxproj{\Gamma}{p} = \Sigma
  }
  {
    \ctxproj{\ctxc{\Gamma}{x^{\bigP}}{T}}{p}
    =
    \ctxc{\Sigma}{x^0}{T}
  }
\end{mathpar}
\vspace{-5mm}
\caption{Projection Rules for Global Contexts}
\vspace{-5mm}
\label{fig:proj-gctx}
\end{figure}

\begin{figure}
\begin{mathpar}
  \myinferrule[\rulePSend]
  {
    \ctxproj{\Gamma}{p} = \Sigma
    \and
    \forall i \in I.
    \and
    \Sigma \vdash \dte{T_i} ~ty
    \and
    \gtctxproj{\ctxext{\Gamma}{x_{i}^{\psetof{p, q}}}{T_i}}{G_i}{p} = \ltctx{\Sigma_i}{L_i}
  }
  {
    \gtctxproj{\Gamma}{(\gtbran{p}{q}{\dlbl{l_i}(\vb{x_i}{T_i}).G_i}_{\color{black} i \in I})}{p}
    =
    \ltctx{\Sigma}{\ttake{q}{\dlbl{l_i}(\vb{x_i}{T_i}).L_i}_{i \in I} }
  }

  \vspace{-3mm}

  \myinferrule[\rulePRecv]
  {
    \ctxproj{\Gamma}{q} = \Sigma
    \and
    \forall i \in I.
    \and
    \Sigma \vdash \dte{T_i} ~ty
    \and
    \gtctxproj{\ctxext{\Gamma}{x_{i}^{\psetof{p, q}}}{T_i}}{G_i}{q} = \ltctx{\Sigma_i}{L_i}
  }
  {
    \gtctxproj{\Gamma}{(\gtbran{p}{q}{\dlbl{l_i}(\vb{x_i}{T_i}).G_i}_{\color{black} i \in I})}{q}
    =
    \ltctx{\Sigma}{\toffer{p}{\dlbl{l_i}(\vb{x_i}{T_i}).L_i}_{i \in I} }
  }

  \vspace{-3mm}

  \myinferrule[\rulePPhi]
  {
    \ctxproj{\Gamma}{r} = \Sigma
    \\
    \ppt r \notin \psetof{p, q}
    \\
    \forall i \in I.
    \\
    \Sigma \vdash \dte{T_i} ~ty
    \\
    \gtctxproj{\ctxext{\Gamma}{x_{i}^{\psetof{p, q}}}{T_i}}{G_i}{r} = \ltctx{\Sigma_i}{L_i}
  }
  {
    \gtctxproj{\Gamma}{(\gtbran{p}{q}{\dlbl{l_i}(\vb{x_i}{T_i}).G_i}_{\color{black} i \in I})}{r}
    =
    \ltctx{\Sigma}{\sqcup_{i \in I} \tphi{l_i}{x_i}{T_i}{L_i}}
  }
  \vspace{-3mm}

  \myinferrule[\rulePRecOne]
  {
    \ctxproj{\Gamma}{r} = \Sigma
    \\
    \ppt r \in \dgt G
    \\
    \gtctxproj{\ctxext{\Gamma}{x^{\setof{\ppt r | \ppt r \in \dgt G}}}{T}}{G}{r} = \ltctx{\Sigma'}{L}
    \\
    \Sigma \vdash \dte{T} ~ty
    \\
    \Sigma \vdash \dexp E \eoft \dte T
  }
  {
    \gtctxproj{\Gamma}{\gtrecur{t}{\vb{x}{T}}{\dexp{x := E}}{G})}{r}
    =
    \ltctx{\Sigma}{\trecur{t}{}{\vb{x}{T}}{\dexp{x := E}}{\dtp L}}
  }
  \vspace{-3mm}

  \myinferrule[\rulePRecTwo]
  {
    \ctxproj{\Gamma}{r} = \Sigma
    \and
    \ppt r \notin \dgt G
  }
  {
    \gtctxproj{\Gamma}{\gtrecur{t}{\vb{x}{T}}{\dexp{x := E}}{G})}{r}
    =
    \ltctx{\Sigma}{\tend}
  }

  \myinferrule[\rulePEnd]
  {
    \ctxproj{\Gamma}{r} = \Sigma
  }
  {
    \gtctxproj{\Gamma}{\gtend}{r}
    =
    \ltctx{\Sigma}{\tend}
  }
  \vspace{-3mm}

  \boxed{\gtctxproj{\Gamma}{G}{p} = \ltctx{\Sigma}{L}}

  \myinferrule[\rulePVar]
  {
    \ctxproj{\Gamma}{r} = \Sigma = \Sigma_1, \vb{x}{T}, \Sigma_2
    \and
    \Sigma_1, \vb{x}{T}, \Sigma_2 \vdash \vb{E}{T}
  }
  {
    \gtctxproj{\Gamma}{\gtvar{t}{\dexp{x := E}}}{r}
    =
    \ltctx{\Sigma}{\tvar{t}{\dexp{x := E}}}
  }
  \vspace{-3mm}

\end{mathpar}
\vspace{-3mm}
\caption{Projection Rules for Global Types}
\vspace{-3mm}
\label{fig:proj-gty}
\end{figure}

If the prefix of $\dgt G$ is a message from role $\ppt p$ to role $\ppt q$,
the projection results a local type with a send (resp.\ receive) prefix into
role $\ppt p$ (resp.\ $\ppt q$) via \rulePSend\ (resp.\ \rulePRecv).
For other roles $\ppt r$, the projection results in a local type with a
\emph{silent label} via \rulePPhi, with prefix $\tphi{l}{x}{T}{}$
This follows the concept of a coordinated distributed system, where all the
processes follow a global protocol, and base assumptions of their local
actions on actions of other roles not involving them.
The projection defined in the original MPST theory does not contain information
for role $\ppt r$ about a message between $\ppt p$ and $\ppt q$.
We use the silent prefix to retain such information, especially the
refinement type $\dte T$ of the payload.
For merging two local types (as used in \rulePPhi), we use a simple plain merge
operator defined as $ \Sigma \vdash \dtp L \sqcup \dtp L = \dtp L $, requiring
two local types to be identical in order to be merged.\footnotemark
\footnotetext{We build upon the standard MPST theory with plain merging.
Full merge~\cite{LMCS12Parameterised}, allowing certain different index sets to
be merged, is an alternative, more permissive merge operator. Our
implementation \Ourtool uses the more permissive merge operator for better
expressiveness.}

If the prefix of $\dgt G$ is a recursive protocol
$\gtrecur{t}{\vb{x}{T}}{\dexp{x:=E}}{G}$, the projection preserves the
recursion construct if
the projected role is in the inner protocol via \rulePRecOne\ and that the
expression $\dexp E$ can be typed with type $\dte T$ under the projected local context.
Typing expressions under local contexts ensures that no irrelevant
variables $\dexp{x^0}$ are used in the expression $\dexp E$, as no typing derivation
exists for irrelevant variables.
Otherwise projection results in $\tend$ via \rulePRecTwo.
If $\dgt G$ is a type variable $\gtvar{t}{\dexp{x:=E}}$, we similarly
validate that the expression $\dexp E$ carries the specified type in the
correspondent recursion definition, and its projection also preserves the type
variable construct.

\begin{example}[Projection of Global Types of \cref{example:gty} (1)]
  We draw attention to the projection of $\dgt{G_1}$ to $\ppt C$, under the empty context $\emptyctx$.
  \[
    \gtctxproj{\emptyctx}{G_1}{C} =
    \ltctx{\emptyctx}{
      \tphi{Fst}{x}{\tint}{}
      \toffersingle{B}{\dlbl{Snd}(\vb{y}{\tint\esetof{x = y}}).
        \ttakesingle{D}{\dlbl{Trd}(\vb{z}{\tint\esetof{x = z}}).
        \tend}
      }
    }
  \]
  We note that the local type for $\ppt C$ has a silent prefix
  $\dlbl{Fst}(\vb{x}{\tint})$, which binds the variable $\dexp x$
  in the continuation.
  The silent prefix adds the variable $\dexp x$ and its type to the ``local
  knowledge'' of the endpoint $\ppt C$, yet the actual value of $\dexp x$
  is unknown.
  \label{example:proj}
\end{example}
\begin{remark}[Empty Session Type]
  \upshape
  Global types $\dgt G$ and local types $\dtp L$ can be empty because one of
  the value types in the protocol in an empty type (cf.\
  \cref{rem:empty-value-type}).

  For example, the local type
  $\ttakesingle{A}{\dlbl{Impossible}(\vb{x}{\tint\esetof{x > 0 \land x < 0}}).\tend}$
  cannot be implemented, since such an $\dexp x$ cannot be provided.

  For the same reason, the local type
  $\tphi{Pos}{x}{\tint\esetof{x >
      0}}{}\ttakesingle{A}{\dlbl{Impossible}(\vb{y}{\tint\esetof{y > x}}).\tend}
  $ cannot be implemented.
  \label{rem:empty-session-type}
\end{remark}
\begin{remark}[Implementable Session Types]
  \upshape
  Consider the following session type:
  $$\dtp L =
  \toffersingle{B}{\dlbl{Num}(\vb{x}{\tint}).
    \ttake{B}{\begin{array}{@{}l@{}}
        \dlbl{Pos}({\tunit\esetof{x > 0}}).\tend\\
        \dlbl{Neg}({\tunit\esetof{x < 0}}).\tend
      \end{array}
   }}.
  $$
  When the variable $\dexp x$ has the value $\eintlit{0}$, neither of the
  choices $\dlbl{Pos}$ or $\dlbl{Neg}$ could be selected, as the refinements
  are not satisfied.
  In this case, the local type $\dtp L$ cannot be implemented, as the internal choice
  callback may not be implemented in a \emph{total} way, i.e.\ the callback
  returns a choice label for all possible inputs of integer $\dexp
  x$.\footnotemark
\footnotetext{Since we use a permissive \keyword{ML} effect in the callback
  type, allowing all side effects to be performed in the callback, the
  callback may throw exceptions or diverge in case of unable to return a value.
}
\end{remark}

\subsection{Labelled Transition System (LTS) Semantics}
\label{subsection:theory-semantics}

We define the labelled transition system (LTS) semantics for global types and
local types.
We show the trace equivalence of a global type and the collection of local
types projected from the global type, to demonstrate that projection preserves
LTS semantics.
The equivalence result allows us to use the projected local types for the
implementation of local roles separately.
Therefore, we can implement the endpoints in \fstar separately, and they compose
to the specified protocol.

We also prove a type safety result that well-formed global types cannot be
stuck. This, combined with the trace equivalence result, guarantees that
endpoints are free from deadlocks.

\mypara{Actions}
We begin with defining actions in the LTS system.
We define the label in the LTS as $\alpha \bnfas \ltsmsg{p}{q}{l}{x}{T}$,
a message from role $\ppt p$ to $\ppt q$ with label $\dlbl l$ carrying a
value named $\dexp x$ with type $\dte T$.
We define $\subj{\alpha} = \psetof{p, q}$ to be the subjects of the action
$\alpha$, namely the two roles in the action.

\mypara{Semantics of Global Types}
We define the LTS semantics of global types in \cref{fig:lts-gty}.
Different from the original LTS semantics in \cite{ICALP13CFSM}, we include the
context $\Gamma$ in the semantics along with the global type $\dgt G$.
Therefore, the judgements of global LTS reduction have form
${\gtctx{\Gamma}{G} \stepsto[\alpha] \gtctx{\Gamma'}{G'}}$.

\ruleGPfx\ allows the reduction of the prefix action in a global type.
An action, matching the definition in set defined in the prefix, allows
the correspondent continuation to be selected.
The resulting global type is the matching continuation and the resulting
context contains the variable binding in the action.

\ruleGCtx\ allows the reduction of an action that is causally independent
of the prefix action in a global type, here, the subjects of the action are
disjoint from the prefix of the global type.
If all continuations in the global types can make the reduction of that action
to the same context, then the result context is that context and the result
global type is one with continuations after reduction.
When reducing the continuations, we add the variable of
the prefix action into the context, but tagged with an empty set of known roles.
This addition ensures that relevant information obtainable from the prefix message
is not lost when performing reduction.

\ruleGRec\ allows the reduction of a recursive type by unfolding the type once.

\begin{figure}
\begin{mathpar}
  \myinferrule[\ruleGPfx]
    {j \in I}
    {
      \gtctx{\Gamma}{\gtbran{p}{q}{\dlbl{l_i}(\vb{x_i}{T_i}).G_i}_{i \in I}}
      \stepsto[\ltsmsg{p}{q}{l_j}{x_j}{T_j}]
      \gtctx{\ctxext{\Gamma}{x_{j}^{\psetof{p, q}}}{T_j}}{\dgt{G_j}}
    }
  \vspace{-3mm}

  \myinferrule[\ruleGCtx]
    {
      \psetof{p, q} \cap \subj{\alpha} = \varnothing
      \and
      \forall j \in I.
      \gtctx{\ctxext{\Gamma}{x_{j}^{\varnothing}}{T_j}}{\dgt{G_j}}
        \stepsto[\alpha]
      \gtctx{\Gamma'}{\dgt{G_j'}}
    }
    {
      \gtctx{\Gamma}{\gtbran{p}{q}{\dlbl{l_i}(\vb{x_i}{T_i}).G_i}_{i \in I}}
      \stepsto[\alpha]
      \gtctx{\Gamma'}{\gtbran{p}{q}{\dlbl{l_i}(\vb{x_i}{T_i}).G_i'}_{i \in I}}
    }
  \vspace{-3mm}

  \boxed{\gtctx{\Gamma}{G} \stepsto[\alpha] \gtctx{\Gamma'}{G'}}

  \myinferrule[\ruleGRec]
    {
      \gtctx{\ctxext{\Gamma}{x^{\setof{\ppt r | \ppt r\in \dgt G}}}{T}}{G\subst{\gtrecursimpl{t}{\vb{x}{T}}{G}}{\mathbf{t}}}
      \stepsto[\alpha]
      \gtctx{\Gamma'}{G'}
    }
    {
      \gtctx{\Gamma}{\gtrecur{t}{\vb{x}{T}}{\dexp{ x := E}}{G}}
      \stepsto[\alpha]
      \gtctx{\Gamma'}{G'}
    }

  \vspace{-3mm}
\end{mathpar}
\vspace{-3mm}
\caption{LTS Semantics for Global Types}
\vspace{-3mm}
\label{fig:lts-gty}
\end{figure}

\begin{example}[Global Type Reductions]
  We demonstrate two reduction paths for a global type
  $$\dgt G = \gtbransingle{p}{q}{\dlbl{Hello}(\dexp{x}: \tint\esetof{x <
      0}).\gtbransingle{r}{s}{\dlbl{Hola}(\dexp{y}: \tint\esetof{y >
        x}).\gtend}}.$$
  Note that the two messages are not causally related (they have disjoint
  subjects).
  We have the following two reduction paths of $\gtctx{\emptyctx}{G}$ (omitting
  payload in LTS actions):
  \[
    \begin{array}{rl}
      & \gtctx{\emptyctx}{G} \\
      \text\ruleGPfx\stepsto[\ltsmsgsimpl{p}{q}{Hello}] &
      \gtctx{\vb{{x}^{\psetof{p, q}}}{\tint\esetof{x <
            0}}}{\gtbransingle{r}{s}{\dlbl{Hola}(\vb{y}{\tint\esetof{y >
              x}}).\gtend}} \\
      \text\ruleGPfx\stepsto[\ltsmsgsimpl{r}{s}{Hola}] &
      \gtctx{\vb{{x}^{\psetof{p, q}}}{\tint\esetof{x < 0}}, \vb{{y}^{\psetof{r,
              s}}}{\tint\esetof{y > x}}}{\gtend} \\[2.5ex]
      & \gtctx{\emptyctx}{G} \\
      \text\ruleGCtx\stepsto[\ltsmsgsimpl{r}{s}{Hola}] &
      \gtctx{\vb{{x}^{\varnothing}}{\tint\esetof{x < 0}}, \vb{{y}^{\psetof{r,
              s}}}{\tint\esetof{y >
            x}}}{\gtbransingle{p}{q}{\dlbl{Hello}(\vb{x}{\tint\esetof{x <
              0}}).\gtend}} \\
      \text\ruleGPfx\stepsto[\ltsmsgsimpl{p}{q}{Hello}] &
      \gtctx{\vb{{x}^{\psetof{p, q}}}{\tint\esetof{x < 0}}, \vb{{y}^{\psetof{r,
              s}}}{\tint\esetof{y > x}}}{\gtend}
    \end{array}
  \]
\end{example}

\mypara{Semantics of Local Types}
We define the LTS semantics of local types in \cref{fig:lts-lty}.
Similar to global type LTS semantics, we include the local context $\Sigma$
in the semantics.
Therefore, the judgements of local LTS reductions have form
${\ltctx{\Sigma}{L} \stepsto[\alpha] \ltctx{\Sigma'}{L'}}$.
When defining the LTS semantics, we also use judgements of form
${\ltctx{\Sigma}{L} \stepsto[\epsilon] \ltctx{\Sigma'}{L'}}$.
It represents a silent action that can occur without an observed action.
We write $\stepsto[\epsilon]^*$
to denote the reflexive transition closure of silent actions $\stepsto[\epsilon]$.

We first have a look at silent transitions.
\ruleEPhi\ allows the variable in a silent type to be added into the local
context in the irrelevant form.
This rule allows local roles to obtain knowledge from the
messages in the global protocol without their participation.

\ruleECtx\ allows prefixed local type to make a silent transition, if all of
its continuations are allowed to make a silent transition to reach the same
context.
The rule allows a prefixed local type to obtain new knowledge
about irrelevant variables, if such can be obtained in all possible
continuations.

\ruleERec\ unfolds recursive local types, analogous to the unfolding of
global types.

For concrete transitions, we have \ruleLSend\ (resp.\ \ruleLRecv) to
reduce a local type with a sending (resp.\ receiving) prefix, if the action
label is in the set of labels in the local type.
The resulting context contains the variable in the message as a concrete
variable, since the role knows the value via communication.
The resulting local type is the continuation corresponding to the action label.

In addition, \ruleLEps\ permits any number of silent actions to be taken before a
concrete action.

\begin{figure}
\begin{mathpar}
  %\vspace{-3mm}

    \myinferrule[\ruleECtx]
    {
      \dagger \in \setof{\&, \oplus}
      \and
      \forall j \in I.
      \and
      \ltctx{\ctxext{\Sigma}{x_j^0}{T_j}}{L_j}
      \ltequiv
      \ltctx{\Sigma'}{L_j'}
    }
    {
      \ltctx{\Sigma}{\tgeneric{q}{\dlbl{l_i}(\vb{x_i}{T_i}).L_i}_{i \in I}}
      \ltequiv
      \ltctx{\Sigma'}{\tgeneric{q}{\dlbl{l_i}(\vb{x_i}{T_i}).L_i'}_{i \in I}}
    }
  %\vspace{-3mm}

    \myinferrule[\ruleERec]
      {
      }
      {
        \ltctx{\Sigma}{\trecur{t}{}{\vb{x}{T}}{\dexp{ x := E}}{L}}
        \ltequiv
        \ltctx{\ctxext{\Sigma}{x^\omega}{T}}{L\subst{\trecursimpl{t}{}{\vb{x}{T}}{L}}{\mathbf{t}}}
      }
  %\vspace{-3mm}

    \boxed{\ltctx{\Sigma}{L} \stepsto[\epsilon] \ltctx{\Sigma'}{L'}}

    \myinferrule[\ruleEPhi]
    {
    }
    {
      \ltctx{\Sigma}{\tphi{l}{x}{T}{L}}
      \ltequiv
      \ltctx{\ctxext{\Sigma}{x^0}{T}}{L}
    }
\end{mathpar}
\hrule

\begin{mathpar}
  \myinferrule[\ruleLSend]
  {
    j \in I
  }
  {
    \ltctx{\Sigma}{\ttake{q}{\dlbl{l_i}(\vb{x_i}{T_i}).L_i}_{i \in I}}
    \stepsto[\ltsmsg{p}{q}{l_j}{x_j}{T_j}]
    \ltctx{\ctxext{\Sigma}{x_j^\omega}{T_j}}{L_j}
  }
  \vspace{-3mm}

  \myinferrule[\ruleLRecv]
  {
    j \in I
  }
  {
    \ltctx{\Sigma}{\toffer{p}{\dlbl{l_i}(\vb{x_i}{T_i}).L_i}_{i \in I}}
    \stepsto[\ltsmsg{p}{q}{l_j}{x_j}{T_j}]
    \ltctx{\ctxext{\Sigma}{x_j^\omega}{T_j}}{L_j}
  }
  \vspace{-3mm}

  \boxed{\ltctx{\Sigma}{L} \stepsto[\alpha] \ltctx{\Sigma'}{L'}}

  \myinferrule[\ruleLEps]
    {
      \ltctx{\Sigma}{L}
      \ltequiv
      \ltctx{\Sigma''}{L''}
      \and
      \ltctx{\Sigma''}{L''}
      \stepsto[\alpha]
      \ltctx{\Sigma'}{L'}
    }
    {
      \ltctx{\Sigma}{L}
      \stepsto[\alpha]
      \ltctx{\Sigma'}{L'}
    }
  \vspace{-3mm}
\end{mathpar}
\vspace{-3mm}
\caption{LTS Semantics for Local Types}
\vspace{-3mm}
\label{fig:lts-lty}
\end{figure}

\begin{remark}[Reductions for Empty Session Types]
  \upshape
  We consider empty session types to be reducible, since it is not possible to
  distinguish which types are inhabited.
  However, it does not invalidate the safety properties of endpoints, since
  no such endpoints can be implemented for an empty session type.

\end{remark}

\mypara{Relating Semantics of Global and Local Types}
We extend the LTS semantics to a collection of local types in
\cref{def:lts-collection}, in order to prove that projection preserves
semantics.
We define the semantics in a synchronous fashion.

The set of local types reduces with an action $\alpha =
\ltsmsg{p}{q}{l}{x}{T}$, if the local type for role $\ppt p$ and $\ppt q$
both reduce with that action $\alpha$.
All other roles in the set of the local types are permitted to make silent
actions ($\epsilon$ actions).

Our definition deviates from the standard definition
\cite[Def. 3.3]{ICALP13CFSM} in two ways:
One is that we use a synchronous semantics, so that one action involves
two reductions, namely at the sending and receiving sides.
Second is that we use contexts and silent transitions in the LTS semantics.
The original definition requires all non-action roles to be identical, whereas
we relax the requirement to allow silent transitions.

\begin{definition}[LTS over a collection of local types]
  A configuration $s = \setof{\ltctx{\Sigma_{\ppt{r}}}{L_{\ppt{r}}}}_{\ppt r \in \bigP}$
  is a collection of local types and contexts, indexable via participants.

  Let $\ppt p \in \bigP$ and $\ppt q \in \bigP$.
  We say $s = \setof{\ltctx{\Sigma_{\ppt{r}}}{L_{\ppt{r}}}}_{\ppt r \in \bigP}
  \stepsto[\alpha = \ltsmsg{p}{q}{l}{x}{T}]
  s' = \setof{\ltctx{\Sigma'_{\ppt{r}}}{L'_{\ppt{r}}}}_{\ppt r \in \bigP}$ if
  \begin{enumerate}
    \item $\ltctx{\Sigma_{\ppt p}}{{L}_{\ppt p}} \stepsto[\alpha] \ltctx{\Sigma'_{\ppt p}}{L'_{\ppt p}}$ and,
          $\ltctx{\Sigma_{\ppt q}}{{L}_{\ppt q}} \stepsto[\alpha] \ltctx{\Sigma'_{\ppt q}}{L'_{\ppt q}}$ and,
    \item for all $\ppt s \in \bigP, \ppt s \neq \ppt p, \ppt s \neq \ppt q$.
    $\ltctx{\Sigma_{\ppt s}}{L_{\ppt s}} \ltequivmany \ltctx{\Sigma'_{\ppt s}}{L'_{\ppt s}}$
  \end{enumerate}
  \label{def:lts-collection}
\end{definition}

For a closed global type $\dgt G$ under context $\Gamma$, we show that the
global type makes the same trace of reductions as the collection of local
types obtained from projection.
We prove it in \cref{thm:trace-eq}.

\begin{definition}[Association of Global Types and Configurations]
  Let $\gtctx{\Gamma}{G}$ be a global context.

  The collection of local contexts \emph{associated} to $\gtctx{\Gamma}{G}$,
  is defined as the configuration \linebreak
  $\setof{\gtctxproj{\Gamma}{G}{r}}_{\ppt r \in \dgt G}$.
  We write $s \ltassoc \gtctx{\Gamma}{G}$ if a configuration $s$ is the associated to
  $\gtctx{\Gamma}{G}$.
\end{definition}

\begin{theorem}[Trace Equivalence]
  Let $\gtctx{\Gamma}{G}$ be a closed global context and
  $s \ltassoc \gtctx{\Gamma}{G}$
  be a configuration associated with the global context.

  $\gtctx{\Gamma}{G} \stepsto[\alpha] \gtctx{\Gamma'}{G'}$ if and only if $s \stepsto[\alpha] s'$,
  where $s' \ltassoc \gtctx{\Gamma'}{G'}$.
  \label{thm:trace-eq}
\end{theorem}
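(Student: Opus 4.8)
The plan is to prove the two implications separately, each by induction, leaning throughout on the exact correspondence between the role annotations on global contexts and the multiplicity annotations on local contexts: projecting a variable tagged with role set $\bigP$ yields an unrestricted variable ($\omega$) when $\ppt p \in \bigP$ and an irrelevant one ($0$) otherwise (\rulePVarOmega, \rulePVarZero). Two instances drive everything. First, the prefix variable that \ruleGPfx inserts with tag $\setof{\ppt p, \ppt q}$ projects to $\omega$ for exactly the two active roles and to $0$ for all others, matching the $x^\omega$ produced by \ruleLSend/\ruleLRecv and the $x^0$ produced by \ruleEPhi; the variable that \ruleGCtx inserts with the empty tag projects to $0$ everywhere, again matching \ruleEPhi. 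Second, the recursion variable that \ruleGRec inserts with the full role set projects to $\omega$ for every participant, matching the $x^\omega$ of \ruleERec. I would record these as bookkeeping lemmas, together with a lemma that projectability is preserved under reduction, so the associated configuration stays well defined along every trace.

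For the forward direction I would induct on the derivation of $\gtctx{\Gamma}{G} \stepsto[\alpha] \gtctx{\Gamma'}{G'}$. In the \ruleGPfx case the prefix $\ppt p \to \ppt q$ fires: the sender's projection $\ttake{q}{\cdots}$ steps by \ruleLSend, the receiver's projection $\toffer{p}{\cdots}$ by \ruleLRecv, and each remaining role's projection---a silent prefix produced by \rulePPhi---takes one \ruleEPhi step; the annotation lemma then identifies the resulting local contexts with $\setof{\gtctxproj{\Gamma'}{G_j}{r}}_{\ppt r}$, so $s \stepsto[\alpha] s'$ with $s' \ltassoc \gtctx{\Gamma'}{G_j}$. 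The \ruleGRec case unfolds the recursion: each participant's $\trecur{t}{}{\vb{x}{T}}{\dexp{x := E}}{L}$ unfolds by \ruleERec (absorbed into the concrete step via \ruleLEps), non-participants carrying $\tend$ stay still, and I apply the induction hypothesis to the unfolded body using a substitution lemma stating that projection commutes with unfolding. The \ruleGCtx case---a causally independent $\alpha$ firing underneath the prefix, with $\subj{\alpha} \cap \setof{\ppt p, \ppt q} = \varnothing$---is the crux: the two roles of $\alpha$ each carry a silent prefix, so they pass it by \ruleEPhi and then act on the continuation via \ruleLEps (induction hypothesis per branch), whereas $\ppt p$ and $\ppt q$ must thread $\alpha$ through their own unconsumed send/receive prefix, which is precisely \ruleECtx advancing all branches to one common context.

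For the backward direction I would induct on the structure of $\dgt G$, reading the global rule off the configuration's move. If $\dgt G$ is a message $\gtbran{p'}{q'}{\cdots}$ and $\alpha = \ltsmsg{p}{q}{l}{x}{T}$, I split on $\setof{\ppt{p'},\ppt{q'}}$ against $\setof{\ppt p, \ppt q}$. When they coincide, the observable step is the prefix and \ruleGPfx applies; when they are disjoint, the active roles reach $\alpha$ only after silently clearing their silent prefixes, which forces every branch to admit the step and delivers \ruleGCtx. The partial-overlap case (say $\ppt{p'} = \ppt p$ but $\ppt{q'} \neq \ppt q$) cannot arise: role $\ppt p$'s local type begins with a concrete send to $\ppt{q'}$, and since the only silent rule applicable to such a prefix, \ruleECtx, leaves the prefix in place, $\ppt p$ can never emit to $\ppt q \neq \ppt{q'}$, contradicting $s \stepsto[\alpha] s'$. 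The recursion and type-variable cases mirror the forward direction via \ruleGRec, \ruleERec and the same substitution lemma.

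The hard part will be making the correspondence between \ruleGCtx and \ruleECtx airtight, where three issues intertwine. (1) There is an annotation mismatch: once the prefix fires, $\dexp{x_j}$ is known ($\omega$) to $\ppt p$ and $\ppt q$, yet during the pre-firing \ruleECtx steps it must be treated as irrelevant ($0$); this is sound because the refinement of $\alpha$'s payload needs $\dexp{x_j}$ only at the type level, where \ruleWfRty promotes $0$ back to $\omega$. (2) \ruleECtx demands that all branches reach a single common local context, which I must derive from the uniform-$\Gamma'$ premise of \ruleGCtx together with the uniformity that plain merge forces on non-participants' silent prefixes, so that the length-varying $\ltequivmany$ chains supplied per branch by the induction hypothesis can be reconciled into synchronized \ruleECtx steps. (3) The $\ltassoc$ relation must be shown stable under silent transitions, so that matching $s'$ to $\gtctx{\Gamma'}{G'}$ is insensitive to how many $\epsilon$-steps each role takes. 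I would prove (1)--(3) as standalone lemmas first; the remaining cases are then essentially mechanical.
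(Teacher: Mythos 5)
Your proposal matches the paper's proof in both structure and substance: the forward direction by induction on the global reduction derivation (with \rulePSend/\rulePRecv firing via \ruleLSend/\ruleLRecv, third parties absorbing the prefix via \ruleEPhi, the \ruleGCtx case handled by \ruleECtx for the prefix roles, and \ruleGRec by unfolding), and the backward direction by case analysis on the shape of the global type recovered by inversion of projection, discharging the partial-overlap case as impossible. The auxiliary lemmas you isolate --- the annotation bookkeeping under context extension, invariance of projection under contexts with equal projections, weakening of the role tag, stability of association under silent steps, and the determinacy needed to reconcile the per-branch reductions in the \ruleGCtx/\ruleECtx correspondence --- are precisely the lemmas the paper proves in its appendix.
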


The theorem states that semantics are preserved after projection.
Practically, we can implement local processes separately, and run them
in parallel with preserved semantics.

We also show that a well-formed global type $\dgt G$ has progress.
This means that a well-formed global type does not get \emph{stuck}, which
implies deadlock freedom.

\begin{definition}[Well-formed Global Types]
  A global type under typing context $\gtctx{\Gamma}{G}$ is well-formed, if
  \begin{enumerate*}
    \item $\dgt G$ does not contain free type variables,
    \item $\dgt G$ is contractive \cite[\S 21]{PierceTAPL}, and
    \item for all roles in the protocol $\ppt r \in \dgt G$, the
      projection $\gtctxproj{\Gamma}{G}{r}$ is defined.
  \end{enumerate*}

  We also say a global type $\dgt G$ is well-formed, if
  $\gtctx{\varnothing}{G}$ is well-formed.
\end{definition}
\begin{theorem}[Preservation of Well-formedness]
  If $\gtctx{\Gamma}{G}$ is a well-formed global type under typing context, and
  $\gtctx{\Gamma}{G} \stepsto[\alpha] \gtctx{\Gamma'}{G'}$, then
  $\gtctx{\Gamma'}{G'}$ is well-formed.
  \label{thm:wf-preservation}
\end{theorem}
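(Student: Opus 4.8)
The plan is to induct on the derivation of $\gtctx{\Gamma}{G} \stepsto[\alpha] \gtctx{\Gamma'}{G'}$, one case per LTS rule (\ruleGPfx, \ruleGCtx, \ruleGRec), and to verify the three clauses of well-formedness independently. The two syntactic clauses are routine structural arguments. Closedness under type variables is preserved because in \ruleGPfx the residual $\dgt{G'}$ is the subterm $\dgt{G_j}$ of the closed $\dgt G$; in \ruleGCtx it is the original prefix re-wrapped around the inductively-closed continuations $\dgt{G_j'}$; and in \ruleGRec a standard substitution argument shows $\dgt G\subst{\gtrecursimpl{t}{\vb{x}{T}}{G}}{\mathbf t}$ is closed, after which the induction hypothesis applies. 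Contractivity is preserved for the same structural reasons, using the folklore fact that unfolding a contractive recursion yields a contractive type in the \ruleGRec case. Throughout I would also maintain, as an invariant, that $\Gamma'$ stays a well-formed typing context.

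The substantive clause is the third, that $\gtctxproj{\Gamma'}{G'}{r}$ is defined for every $\ppt r \in \dgt{G'}$. The cleanest route is to appeal to \cref{thm:trace-eq}: well-formedness of $\gtctx{\Gamma}{G}$ supplies the associated configuration $s \ltassoc \gtctx{\Gamma}{G}$, since all projections of $\dgt G$ are defined by hypothesis; the forward direction of \cref{thm:trace-eq} then produces $s'$ with $s' \ltassoc \gtctx{\Gamma'}{G'}$. But $s' \ltassoc \gtctx{\Gamma'}{G'}$ asserts exactly that $s' = \setof{\gtctxproj{\Gamma'}{G'}{r}}_{\ppt r \in \dgt{G'}}$ is a well-defined configuration, i.e.\ that each projection on the right is defined --- which is precisely the third clause. (This is non-circular provided \cref{thm:trace-eq} is established without reference to this preservation result; its per-role formulation is what lets the annotations below be tracked correctly.)

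If instead one discharges the third clause inline, to keep the result self-contained, the main obstacle is the bookkeeping of two different annotation disciplines. The projection rules \rulePSend, \rulePRecv, \rulePPhi extend the context with the payload variable $\dexp{x_j}$ annotated by the sender--receiver set $\psetof{p, q}$, whereas \ruleGCtx reduces each continuation under $\dexp{x_j}$ annotated by $\varnothing$ --- an annotation \emph{downgrade}, which by the monotonicity of projection in the role-set annotation (a larger set turns more variables unrestricted and so can only make expression typing, hence projection, succeed) is the dangerous direction. The reconciliation hinges on the side condition $\psetof{p, q} \cap \subj{\alpha} = \varnothing$: for any role $\ppt r \notin \psetof{p, q}$ the annotations $\varnothing$ and $\psetof{p, q}$ both render $\dexp{x_j}$ irrelevant, so the projected local context and its definedness coincide; the only roles whose view genuinely differs are $\ppt p$ and $\ppt q$, for whom the re-wrapping prefix $\gtbran{p}{q}{\cdots}$ in the conclusion re-introduces a genuine send or receive that restores the $\psetof{p, q}$ annotation via \rulePSend\ / \rulePRecv, so the transient downgrade inside the premise is invisible to the final projection. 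I expect this $\varnothing$-versus-$\psetof{p, q}$ reconciliation, together with a projection-commutes-with-unfolding lemma for \ruleGRec\ (matching each \rulePVar\ leaf check $\Sigma \vdash \dexp E : \dte T$ against the \rulePRecOne\ premise created by the substitution), to be the only places needing genuine care; everything else is bookkeeping.
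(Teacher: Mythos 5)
Your proposal is correct in substance, but your primary route is genuinely different from the paper's. The paper proves the theorem directly by induction on the reduction derivation: for \ruleGPfx\ it reads projectability of the continuations straight off the premises of \rulePSend, \rulePRecv\ and \rulePPhi, and for \ruleGCtx\ it reconciles the $\varnothing$ versus $\psetof{p, q}$ annotations by inverting the projection premises and applying \cref{lem:project-weaken} --- which is exactly the monotonicity-in-the-role-set observation of your inline alternative, so your secondary sketch is essentially the paper's proof. Your primary route instead derives the substantive clause (definedness of all projections of $\dgt{G'}$ under $\Gamma'$) as a corollary of the forward direction of \cref{thm:trace-eq}, reading the association $s' \ltassoc \gtctx{\Gamma'}{G'}$ as asserting precisely that definedness. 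This is non-circular, as you note (the paper's proof of \cref{thm:trace-eq} never invokes preservation of well-formedness, and its soundness direction explicitly recomputes the post-reduction projections), and it buys a much shorter argument for the only hard clause; what it costs is that \cref{thm:trace-eq} is stated for \emph{closed} global contexts, a hypothesis not literally contained in the definition of well-formedness, so you would need to add closedness as a standing assumption or carry it as an extra preserved invariant, and you still owe the structural arguments for the no-free-type-variables and contractivity clauses, which you do supply. One caveat applies to both routes equally: in the \ruleGCtx\ case the induction hypothesis is invoked at $\gtctx{\ctxext{\Gamma}{x_{j}^{\varnothing}}{T_j}}{G_j}$, which requires projectability for $\ppt p$ and $\ppt q$ under the \emph{downgraded} annotation; your remark that the re-wrapping prefix restores $\psetof{p, q}$ addresses the final projection but not this IH premise, and the paper's own one-line justification is no more explicit on the point, so you are not behind the paper here.
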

\begin{definition}[Progress]
  A configuration $s$ satisfies progress, if either
  \begin{enumerate*}
    \item For all participants $\ppt p \in s$, $\dtp{\Lp} = \tend$, or
    \item there exists an action $\alpha$ and a configuration $s'$ such that $s
      \stepsto[\alpha] s'$.
  \end{enumerate*}

  A global type under typing context $\gtctx{\Gamma}{G}$ satisfies progress, if
  its associated configuration $s \ltassoc \gtctx{\Gamma}{G}$, exists and satisfies progress.

  We also say a global type $\dgt G$ satisfies progress, if
  $\gtctx{\varnothing}{G}$ satisfies progress.
\end{definition}
\begin{theorem}[Progress]
  If $\gtctx{\Gamma}{G}$ is a well-formed global type under typing context,
  then $\gtctx{\Gamma}{G}$ satisfies progress.
  \label{thm:progress}
\end{theorem}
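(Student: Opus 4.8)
The plan is to reduce progress of the associated configuration to a structural statement about the global type and then transport that statement across projection via Trace Equivalence (\cref{thm:trace-eq}). First I would note that well-formedness already guarantees the object of study exists: clause~(3) states that $\gtctxproj{\Gamma}{G}{r}$ is defined for every $\ppt r \in \dgt G$, so the associated configuration $s \ltassoc \gtctx{\Gamma}{G}$ exists, and it only remains to establish that $s$ meets one of the two progress clauses.

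The core claim I would isolate is: for a closed, contractive $\gtctx{\Gamma}{G}$, either every projected local type equals $\tend$, or $\gtctx{\Gamma}{G} \stepsto[\alpha] \gtctx{\Gamma'}{G'}$ for some $\alpha$, $\Gamma'$, $G'$. Granting this, progress follows immediately: in the first case clause~(1) holds since each $\Lp = \tend$; in the second case \cref{thm:trace-eq} converts the global step into a configuration step $s \stepsto[\alpha] s'$, yielding clause~(2). Observe that progress only demands the existence of a single step, so I do not expect to invoke well-formedness preservation (\cref{thm:wf-preservation}) at all.

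To prove the core claim I would compute the head normal form of $\dgt G$. Since $\dgt G$ is closed (no free type variables) and contractive, I repeatedly unfold leading recursion binders with \ruleGRec until the head constructor is exposed. This terminates: contractivity forces every recursion variable to be guarded by a message prefix, so the unfolding substitution $\subst{\gtrecursimpl{t}{\vb{x}{T}}{G}}{\mathbf{t}}$ never reinstates a binder in head position, and hence each unfolding strictly decreases the number of leading $\mu$-binders --- the natural induction measure. The exposed head is then either $\gtend$ or a message $\gtbran{p}{q}{\dlbl{l_i}(\vb{x_i}{T_i}).G_i}_{i \in I}$. If it is $\gtend$ (possibly under recursion binders), no role participates and \rulePEnd together with \rulePRecTwo force every projection to $\tend$, giving the first disjunct. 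If it is a message, I pick any $j \in I$ --- the index set is nonempty by the syntactic side condition --- and fire \ruleGPfx on the unfolded type; chaining \ruleGRec back up through the leading binders lifts this to a step of $\gtctx{\Gamma}{G}$ itself, since \ruleGRec reduces a recursion to exactly the action and reduct of its one-step unfolding.

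The main obstacle is making the recursion case rigorous. I must set up the induction on the leading-binder count carefully, verify the guardedness argument that keeps that measure decreasing under the unfolding substitution, and confirm that reductions genuinely lift back through each \ruleGRec application (the premise-to-conclusion shape of \ruleGRec makes this sound, but the bookkeeping of the extended context $\ctxext{\Gamma}{x}{T}$ introduced at each unfolding needs attention). By contrast, the $\gtend$ case and the single \ruleGPfx step are routine once the head normal form is in hand.
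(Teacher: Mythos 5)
Your proposal is correct and follows essentially the same route as the paper's proof: reduce progress of the configuration to the existence of a global step (or all-\(\tend\) projections), transfer via \cref{thm:trace-eq}, pick any branch of a nonempty index set for the message case via \ruleGPfx, and use contractivity plus \ruleGRec{} for recursion. Your explicit induction on the number of leading recursion binders is in fact a slightly more careful rendering of the paper's looser ``induction on the structure of $\dgt G$'' (the one-step unfolding is not a structural subterm), but the underlying argument is identical.
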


\begin{theorem}[Type Safety]
  If $\dgt G$ is a well-formed global type, then for any global type under
  typing context $\gtctx{\Gamma'}{G'}$ such that $\gtctx{\varnothing}{G}
  \stepstomany[] \gtctx{\Gamma'}{G'}$, $\gtctx{\Gamma'}{G'}$ satisfies progress.
  \label{thm:type-safety}
\end{theorem}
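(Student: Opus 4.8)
The plan is to obtain type safety as the standard ``preservation plus progress'' combination, reusing the two theorems already in hand: preservation of well-formedness (\cref{thm:wf-preservation}) and progress (\cref{thm:progress}). Because $\dgt G$ is well-formed, the definition of well-formedness gives that $\gtctx{\varnothing}{G}$ is well-formed, which serves as the root of the reduction sequence.

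First I would strengthen the goal to the invariant that \emph{every} reachable global context is well-formed, and prove it by induction on the length of the witnessing reduction sequence $\gtctx{\varnothing}{G} \stepstomany[] \gtctx{\Gamma'}{G'}$. In the base case of zero steps, $\gtctx{\Gamma'}{G'}$ is literally $\gtctx{\varnothing}{G}$, well-formed by hypothesis. For the inductive step I would factor the sequence as $\gtctx{\varnothing}{G} \stepstomany[] \gtctx{\Gamma''}{G''} \stepsto[\alpha] \gtctx{\Gamma'}{G'}$; the induction hypothesis yields well-formedness of $\gtctx{\Gamma''}{G''}$, and a single application of \cref{thm:wf-preservation} to the final one-step reduction transports well-formedness along $\stepsto[\alpha]$ to $\gtctx{\Gamma'}{G'}$.

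With well-formedness of $\gtctx{\Gamma'}{G'}$ established, the conclusion follows by directly invoking \cref{thm:progress}: any well-formed global type under a typing context satisfies progress, which is exactly what is asserted of $\gtctx{\Gamma'}{G'}$.

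I expect no real obstacle in this proof itself --- the entire substantive argument is already discharged inside \cref{thm:wf-preservation} and \cref{thm:progress}, and what remains is the routine transitive-closure induction. The only point meriting a moment's care is bookkeeping the decomposition of $\stepstomany[]$, so that the induction runs on the number of $\stepsto[\alpha]$ steps and the base context matches the well-formedness precondition exactly, ensuring \cref{thm:wf-preservation} applies at each step with no residual side conditions reintroduced on $\Gamma'$.
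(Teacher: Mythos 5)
Your proposal is correct and matches the paper's proof, which simply states that the theorem is a direct consequence of \cref{thm:wf-preservation} and \cref{thm:progress}; your induction on the length of the reduction sequence is exactly the routine argument the paper leaves implicit. No further comment is needed.
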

\begin{proof}
  Direct consequence of \cref{thm:wf-preservation} and \cref{thm:progress}.
\end{proof}

%% Local Variables:
%% mode: latex
%% TeX-master: "main"
%% End:

\section{Evaluation}
\label{section:evaluation}

We evaluate the expressiveness and performance of our toolchain \Ourtool.
We describe the methodology and setup (\cref{subsection:eval-method}), and
comment on the compilation time (\cref{subsection:eval-comp}) and the execution
time (\cref{subsection:eval-exec}).
We demonstrate the expressiveness of \Ourtool (\cref{subsection:eval-expressive}) by
implementing examples from the session type literature and comparing with related
work.
The source files of the benchmarks used in this section are included in our
artifact, along with a script to reproduce the results.

\subsection{Methodology and Setup}
\label{subsection:eval-method}

We measure the time to generate the CFSM representation from a \scrib
protocol (\emph{CFSM}), and the time to generate \fstar
code from the CFSM representation (\emph{\fstar APIs}). Since
the generated APIs in \fstar need to be type-checked before use, we
also measure the type-checking time for the generated code
(\emph{Gen.\@ Code}). Finally, we provide a simple
implementation of the callbacks and measure the type-checking time for
the callbacks against the generated type (\emph{Callbacks}).

To execute the protocols, we need a network transport to connect the participants by providing
appropriate sending and receiving primitives.
In our experiment setup, we use the standard library module \code{FStar.Tcp}
to establish TCP connections between participants, and provide a simple
serialisation module for base types. Due to the small size of our
payloads, we set \code{TCP\_NODELAY} to avoid the delays introduced by
the congestion control algorithms.
Since our entry point to execute the protocol is parameterised by the connection/transport type,
the implementation may use other connections if developers wish, e.g.\ an
in-memory queue for local setups.
We measure the execution time of the protocol (\emph{Execution Time}).

\begin{figure}
    \centering
    \lstinputlisting[language=Scribble, numbers=none]{fig/PingPong.scr}
    \vspace{-3mm}
    \caption{Ping Pong Protocol (Parameterised by Protocol Length $n$)}
    \label{fig:pingpong}
    \vspace{-3mm}
\end{figure}
To measure the overhead of our implementation, we compare against an
implementation of the protocol without session types or refinement types, which
we call \emph{bare implementation}.
In this implementation, we use the same sending and receiving primitives (i.e.\
\lstinline+connection+) as in the toolchain implementation.
The bare implementation is in a series of direct calls of sending and receiving
primitives, for the same communication pattern, but without the generated APIs.

We use a Ping Pong protocol (\cref{fig:pingpong}), parameterised by
the protocol length, i.e.\ the number of Ping Pong messages $n$ in a protocol
iteration.
When the protocol length $n$ increases, the number of CFSM states increases
linearly, which gives rise to longer generated code and larger generated types.
In each Ping Pong message, we include payload of increasing numbers, and encode
the constraints as protocol refinements.

We study its effect on the compilation time (\cref{subsection:eval-comp}) and
the execution time (\cref{subsection:eval-exec}).
We run the experiment on varying sizes of $n$, up to 25.
Larger sizes of $n$ leads to unreasonably large resource usage during
type-checking in \fstar.
\cref{tab:run-time} reports the results for the Ping Pong protocol
in \cref{fig:pingpong}.

We run the experiments under a network of latency of 0.340ms (64 bytes ping), and
repeat each experiment 30 times.
Measurements are taken using a machine with
Intel i7-7700K CPU (4.20 GHz, 4 cores, 8 threads), 16 GiB RAM, operating system Ubuntu
18.04, OCaml compiler version 4.08.1, \fstar compiler commit
\code{\href{https://github.com/FStarLang/FStar/commit/8040e34a2c6031276fafd2196b412d415ad4591a}{8040e34a}},
Z3 version 4.8.5.

\subsection{Compilation Time}
\label{subsection:eval-comp}
\mypara{CFSM and \fstar Generation Time}
We measure the time taken for \scrib to generate the CFSM from the
protocol in \cref{fig:pingpong}, and for the code generation tool to
convert the CFSM to \fstar APIs.
We observe from \cref{tab:run-time} that the generation time
for CFSMs and \fstar APIs is short.
It takes less than 1 second to complete the generation
phase for each case.

\begin{table}
\centering
    \begin{tabular}{|c|r|r|r|r|c|}
        \hline
       Protocol         &
        \multicolumn{2}{c|}{Generation Time}  &
        \multicolumn{2}{c|}{Type Checking Time} &
         Execution Time \\
         \cline{2-5}
         Length ($n$)  &  CFSM &  \fstar APIs & Gen.\@ Code & Callbacks & (100 000 ping-pongs)  \\
        \hline
        \hline
        bare & n/a & n/a &  n/a & n/a &  28.79s \\
         \hline
        1  & 0.38s & 0.01s & 1.28s & 0.34s & 28.75s \\
        \hline
        5  & 0.48s & 0.01s & 3.81s & 1.12s & 28.82s \\
        \hline
        10 & 0.55s & 0.01s & 14.83s & 1.34s & 28.84s \\
        \hline
        15 & 0.61s & 0.01s & 42.78s & 1.78s & n/a \\
        \hline
        20 & 0.69s & 0.02s & 98.35s & 2.54s & 28.81s \\
        \hline
        25 & 0.78s & 0.02s & 206.82s & 3.87s & 28.76s  \\
        \hline
    \end{tabular}
\vspace{1mm}
\caption{Time Measurements for Ping Pong Protocol}
\label{tab:run-time}
\vspace{-7mm}
\end{table}

\mypara{Type-checking Time of Generated Code and Callbacks}
We measure the time taken for the generated APIs to type-check in \fstar.
We provide a simple \fstar implementation of the callbacks following
the generated APIs, and measure the time taken to type-check the callbacks.

The increase of type-checking time is non-linear with regard to the protocol
length.
We encode CFSM states as records corresponding to local typing contexts.
In this case, the size of local typing contexts and the number of type
definitions grows linearly, giving rise to a non-linear increase.
Moreover, the entry point function
is likely to cause non-linear increases in the type-checking time.
The long type-checking time of the generated code could be avoided if the
developer chooses to trust our toolchain to always generate well-typed \fstar
code for the entry point.
The entry point would be available in an \emph{interface file} (cf.\ OCaml
\code{.mli} files), with the actual implementation in OCaml instead of
\fstar\footnotemark.
There would otherwise be no changes in the development workflow.
Although neither does type-checking time of the callback implementation
fit a linear pattern, it remains within reasonable time frame.
\footnotetext{Defining a signature in an interface file, and providing an
  implementation in the target language (OCaml) allows the \fstar
  compiler to \emph{assume} the implementation is correct.
  This technique is
  used frequently in the standard library of \fstar.
  This is not to be confused with implementing the endpoints in OCaml
  instead of \fstar,
  as that would bypass the \fstar type-checking.
}

\subsection{Runtime Performance (Execution Time)}
\label{subsection:eval-exec}

We measure the execution time taken for an exchange of 100,000 ping pongs for
the toolchain and bare implementation under the experiment network.
The execution time is dominated by network communication, since there is little
computation to be performed at each endpoint.

We provide a bare implementation using a series of direct invocations of
sending and receiving primitives, in a compatible way to communicate with
generated APIs.
The bare implementation does not involve a record of callbacks, which is
anticipated to run faster, since the bare implementation involves fewer
function pointers when calling callbacks.
Moreover, the bare implementation does not construct \emph{state records},
which record a backlog of the communication, as the protocol progresses.
To measure the performance impact of book-keeping of callback and state records,
we run the Ping Pong protocol from \cref{fig:pingpong} for
a protocol of increasing size (number of states and generated types),
i.e.\ for increasing values of  $n$. All implementations,
including \textit{bare} are run until 100,000 ping pong messages in total are
exchanged\footnote{For $n=1$, we run 100,000 iterations of recursion;
for $n=10$, we run 10,000 iterations, etc. Total number of ping pong messages
exchanged by two parties remain the same. }.

We summarise the results in \cref{tab:run-time}.
Despite the different protocol lengths, there are \emph{no significant changes}
in execution time.
Since the execution is dominated by time spent on communication,
the measurements are subject to network fluctuations, difficult to avoid
during the experiments.
We conclude that our implementation does not impose a large overhead on the
execution time.
\subsection{Expressiveness}
\label{subsection:eval-expressive}

We implement examples from the session type literature, and add
refinements to encode data dependencies in the protocols.
We measure the time taken for code generation and type-checking, and
present them in \cref{tab:examples}.
The time taken in the toolchain for examples in the session type literature
is usually short, yet we demonstrate that we are able to implement the examples
easily with our callback style API.
Moreover, the time taken is incurred at the compilation stage, hence there
is no overhead for checking refinements by our runtime.

\newcommand{\xmark}{\ding{55}}
\begin{table}
  \centering
  \small
  \begin{tabular}{|l|r|c|c|c|c|}
    \hline
    Example (Endpoint) & Gen.\@ / TC.\@ Time & MP & RV & IV & STP \\ %& Example & Gen. Time & TC. Time \\
    \hline
    \hline
    Two Buyer $^a$ (\code{A}) & 0.46s / 2.33s & \checkmark & \xmark & \checkmark & \checkmark
    ${}^\dagger$ \\
    \hline
    Negotiation $^b$ (\code{C})& 0.46s / 1.59s & \xmark & \checkmark & \xmark & \xmark \\
    \hline
    Fibonacci $^c$ (\code{A}) & 0.44s / 1.58s & \xmark & \checkmark & \xmark & \xmark \\
    \hline
    Travel Agency $^d$ (\code{C})& 0.62s / 2.36s & \checkmark & \xmark & \xmark & \checkmark${}^\dagger$\\
    \hline
    Calculator $^c$ (\code{C})& 0.51s / 2.30s & \xmark & \xmark & \xmark & \checkmark \\
    \hline
    SH $^e$ (\code{P}) & 1.16s / 4.31s & \checkmark & \xmark & \checkmark & \checkmark ${}^\dagger$\\
    \hline
    Online Wallet $^f$ (\code{C}) & 0.62s / 2.67s & \checkmark &
    \checkmark & \xmark & \xmark \\
    \hline
    Ticket $^g$ (\code{C}) & 0.45s / 1.90s & \xmark
    & \checkmark & \xmark & \xmark \\
    \hline
    HTTP $^h$ (\code{S}) & 0.55s / 1.79s & \xmark & \xmark &
    \xmark & \checkmark ${}^\dagger$\\
    \hline
  \end{tabular}%
  \footnotesize
  \begin{tabular}{r|l}
MP & Multiparty Protocol \\
RV & Uses Recursion Variables \\
IV & \begin{tabular}{@{}l@{}}Irrelevant Variables
\end{tabular}\\
STP &
\begin{tabular}{@{}l@{}}
Implementable in STP\\
\checkmark ${}^\dagger$ STP requires \emph{dynamic} checks
\end{tabular} \\
$^a$ & \cite{JACM16MPST} \\
$^b$ & \cite{DBLP:conf/concur/DemangeonH12}\\
$^c$ & \cite{FASE16EndpointAPI} \\
$^d$  & \cite{DBLP:conf/ecoop/HuYH08} \\
$^e$ & \cite{CC18TypeProvider} \\
$^f$ & \cite{RV13SPy} \\
$^g$ & \cite{TGC12MultipartyMultiSession} \\
$^h$ & \cite{RFCHttpSyntax}
  \end{tabular}
  \caption{Selected Examples from Literature}
  \label{tab:examples}
  \vspace{-10mm}
\end{table}

We also compare the expressiveness of our work with two most closely related works, namely \citet{CONCUR10DesignByContract} and \citet{CC18TypeProvider}, which study refinements in MPST (also see \cref{section:related}).
\Citet{CC18TypeProvider} (Session Type Provider, STP) implements
limited version of refinements in the \scrib toolchain.
Our version is strictly more expressive than STP for two reasons:
(1) support for recursive variables to express invariants
and (2) support for irrelevant variables.
\cref{fig:new-old-scribble} illustrates those features
and \cref{tab:examples} identifies
which of the implemented examples use them.
\begin{figure}[ht]
  \vspace{-4mm}
    \centering
    \begin{subfigure}[b]{0.48\textwidth}
        \centering
        \lstinputlisting[language=Scribble, numbers=none]{fig/NewAdder.scr}
    \vspace{-3mm}
        \caption{Accumulator (using Recursive Invariants)}
        \label{fig:new-scribble-adder}
    \end{subfigure}
    \begin{subfigure}[b]{0.48\textwidth}
        \centering
        \lstinputlisting[language=Scribble, numbers=none]{fig/New2.scr}
    \vspace{-3mm}
        \caption{Broadcasting (using Irrelevant Variables)}
        \label{fig:new-scribble-2}
    \end{subfigure}
    \vspace{-1mm}
    \caption{Example Protocols Demonstrating Additional Expressiveness to
      \cite{CC18TypeProvider}}
    \label{fig:new-old-scribble}
    \vspace{-3mm}
\end{figure}

In STP, when recursion occurs, all information about the
variables is lost at the end of an iteration, hence
their tool does not support even the simple example in
\cref{fig:new-scribble-adder}.
In contrast, our work retains the recursion variables,
which are available throughout the recursion.
Additionally, the endpoint projection in STP is more conservative with regards
to refinements. Whilst there must be no variables unknown to a role in the
refinements attached to a message for the sending role, there may be unknown
variables to the receiving role.
The part unknown to the receiving role is discarded (hence
weakening the pre-condition). In our work such information can still be
retained and used for type checking, thanks to irrelevant variables.

In \citet{CONCUR10DesignByContract}, a global protocol with assertions must be
\emph{well-asserted} (\S 3.1).
In particular, the \emph{history sensitivity} requirement states:
    \textit{"A predicate guaranteed by a participant $\ppt p$ can only contain those
    interaction variables that $\ppt p$ knows."}
Our theory lifts this restriction by allowing variables unknown to a
sending role to be used in the global or local type, whereas such variables
cannot be used in the implementation.
For example, \cref{example:gty} fails
the well-asserted requirement in \cite{CONCUR10DesignByContract}.
In the refinement $\dexp{x = z}$ for variable $\dexp z$ (for message label $\dlbl{Trd}$),
the variable $\dexp x$ is not known to $\ppt C$, hence the protocol would not
be well-asserted.
In our setup, such protocol is permitted, the endpoint implementation for $\ppt
C$ can provide the value $\dexp y$ received from $\ppt B$ to satisfy the
refinement type --- The SMT solver can validate the refinement from the
transitivity of equality.
%% Local Variables:
%% mode: latex
%% TeX-master: "main"
%% End:

\section{Related Work}
\label{section:related}
We summarise the most closely related works in the areas of
refinement and session types.
For a detailed survey on theory and implementations of session types, see \citet{BettyBook}.

\mypara{Refinement Types for Verification and Reasoning}
\label{subsection:related-refinement}
Refinement types were introduced
to allow recursive data structures to be specified in more details
using predicates \cite{PLDI91RefinementML}.
Subsequent works on the topic \cite{TOPLAS11Refinement, ICFP14LiquidHaskell,
SCALA16Refinement, POPL18Refinement}  utilise SMT solvers,
 such as Z3 \cite{TACAS08Z3}, to aid the type system to
decide a semantic subtyping relation \cite{JFP12SemanticSubtyping} using
SMT encodings.
Refinement types have been applied to numerous domains, such as resource usage
analysis \cite{POPL20Refinement, ICFP20RefinementResource}, secure implementations
\cite{POPL10F7Protocol, TOPLAS11Refinement}, information control flow
enforcements \cite{ICFP20RefinementIFC}, and theorem proving
\cite{POPL18Refinement}.
Our aim is to utilise refinement types for the
specification and
verification of distributed protocols,
by combining refinement and session types
in a single practical framework.

\mypara{Implementation of Session Types}
\label{subsection:related-session-impl}
\Citet{CC18TypeProvider}
provides an implementation of MPST with assertions
using \scrib and \fsharp.
Their implementation, the session type provider (STP), relies on
code generation of fluent (class-based) APIs, initially described in \cite{FASE16EndpointAPI}.
Each protocol state is implemented as a class,
with methods corresponding to the possible transitions from that state.
It forces a programming style that not only relies extensively on method chaining,
but also requires dynamic checks to ensure the
linearity of channel usage.
Our work differs from STP in multiple ways.
First, we extend the \scrib toolchain to support \emph{recursion variables},
    allowing refinements on recursions, hence improving expressiveness.
    In this way, developers can specify dependencies across recursive calls,
    which is not supported in STP.
Second, we depart from the class-based API generation, and
    generate a callback-based API.
    Our approach has the advantage that the linear usage of channels is ensured
    by construction, saving dynamic checks for channels.
Third, we use refinement types in \fstar to verify refinements
    statically, in
    contrast, STP performs dynamic evaluations to validate assertions in protocols.
    Finally,
    the metatheory of session types extended with refinements was not developed
    in their work.

Several other MPST works follow a similar technique of class-based API generation to overcome
limitations of the type system in the target language, e.g.
\citet{POPL19Parametric} for Go, \citet{CC15MPI} for C.
All of the above works,
suffer from the same limitations --
they detect linearity violations at runtime, and offer no static alternative.
Indeed, to our knowledge,
\citet{ECOOP20OCamlMPST} provide the only MPST implementation
which \textit{statically} checks linearity violation.
It relies on specific type-level OCaml features,
and a monadic programming style.
Our work proposes generation of a callback-styled API from MPST protocols. To our knowledge,
it is the first work that ensures linear channel usage by construction.
Although our target language is  F*, the callback-styled API code generation technique
is applicable to any mainstream programming language.

\mypara{Dependent and Refinement Session Types}
\label{subsection:related-mpst}
\Citet{CONCUR10DesignByContract} propose a multiparty session
$\pi$-calculus with logical assertions.
By contrast, our formulation of RMPST is based on refinement types, projection
with silent prefixes and correspondence with CFSMs, to target practical code
generation, such as for \fstar.
They do not formulate any semantics for global types nor prove an equivalence
between refined global types and projections, as in this paper.
\Citet{JLAMP17DependentMPST} extend MPST with value dependent types.
Invariants on values are witnessed by proof objects, which then may be erased
at runtime.
Our work uses refinement types, which follows the principle naturally, since
refinements that appear in types are proof-irrelevant and can be
erased safely.
These works are limited to theory, whereas we provide an implementation.

\Citet{PLACES19DependentSession} propose an Embedded
Domain Specific Language (EDSL) approach of implementing multiparty sessions
(analogous to MPST) in Idris.
They use value dependent types in Idris to define combinators,
with options to specify data dependencies, contrary to our approach of code
generation.
However, the combinators only describe the sessions, and how to implement and
execute the sessions remains unanswered.
Our work provides a complete toolchain from protocol description to
implementation and verification.

In the setting of binary session types,
\Citet{CONCUR20SessionRefinement} extend session types with arithmetic
refinements, with application to work analysis for computing upper bounds of
work from a given session type.
\Citet{POPL20LabelDependent} extend binary session types with label dependent
types.
In the setup of their work, specification of arithmetic properties involves
complicated definitions of inductive arithmetic relations and functions.
In contrast, we use SMT solvers, which have built-in functions and relations
for arithmetic.
Furthermore, there is no need to construct proofs manually, since SMT solvers
find the proof automatically, which enhances usability and ergonomics.
\Citet{POPL20Actris} combine binary session types with concurrent separation
logic, allowing reasoning about mixed-paradigm concurrent programs, and planned
to extend the framework to MPST.
Along similar lines, \citet{ICFP20SteelCore} provide a framework of concurrent
separation logic in \fstar, and demonstrate its expressiveness by showing how
(dependent) binary session types can be represented in the logic and used in
reasoning.
Our work is based on the theory of MPST, subsuming the binary session types.
Furthermore, we implement a toolchain that developers can use.

\Citet{CSF09Session} use refinement types to implement a limited form of
multiparty session types.
Session types are encoded in refinement types via code generation.
The specification language they use, albeit similar to MPST,
has limited expressive power.
Only patterns of interactions where participants
alternate between sending and receiving are permitted.
Moreover, they do not study data dependencies in protocols,
hence they can neither specify, nor verify
constraints on payloads or recursions.
We use refinement types to specify constraints and
dependencies in multiparty protocols, and use the \fstar
compiler \cite{POPL16FStar} for verifying the
endpoint implementations.
The verified endpoint program does not only comply to the multiparty
protocol, enjoying the guarantees provided by the original MPST theory (deadlock
freedom, session fidelity), but also satisfies additional guarantees provided
by refinement types with respect to data constraints.
%% Local Variables:
%% mode: latex
%% TeX-master: "main"
%% End:

\section{Conclusions and Future Work}
\label{section:conclusion}

We present a novel toolchain for implementing refined multiparty session types
(RMPST), which enables developers to use \scrib, a protocol description
language for multiparty session types, and \fstar, a state-of-the-art
verification-oriented programming language, to implement a
multiparty protocol and statically verify endpoint implementations.
To the best of the authors' knowledge, this is the first work on
\emph{statically} verified multiparty protocols with \emph{refinement} types.
We extend the theory of multiparty session types
with data refinements,
and present a toolchain that enables developers to \emph{specify} multiparty
protocols with data dependencies, and \emph{implement} the endpoints using
generated APIs in \fstar.
We leverage the advanced typing system in \fstar to encode local session types
for endpoints, and validate the data dependencies in the protocol statically.

The verified endpoint program in \fstar is extracted into
OCaml, where the refinements are \emph{erased} ---
adding \emph{no runtime overhead} for refinements.
The callback-styled API avoids linearity checks of channel usage by
internalising communications in generated code.
We evaluate our toolchain and demonstrate that our overhead is small
compared to an implementation without session types.

Whereas refinement types express the data dependencies of multiparty protocols,
the availability of refinement types in general purpose mainstream programming
languages is limited.
For future work, we wish to study how to mix participants with refined
implementation and those without, possibly using a gradual typing
system \cite{POPL17GradualRefinement, JFP19GradualSession}.

%% Local Variables:
%% mode: latex
%% TeX-master: "main"
%% End:

\begin{acks}                            %% acks environment is optional
  We thank the OOPSLA reviewers for their comments and suggestions,
  and David Castro, Julia Gabet and Lorenzo Gheri for proofreading.
  We thank Wei Jiang and Qianyi Shu for testing the artifact.
  The work is supported by EPSRC EP/T006544/1, EP/K011715/1, EP/K034413/1,
  EP/L00058X/1, EP/N027833/1, EP/N028201/1, EP/T006544/1, EP/T014709/1 and
  EP/V000462/1, and NCSS/EPSRC VeTSS.
\end{acks}

\bibliography{fangyi}

%%% -*-BibTeX-*-
%%% Do NOT edit. File created by BibTeX with style
%%% ACM-Reference-Format-Journals [18-Jan-2012].

\begin{thebibliography}{50}

%%% ====================================================================
%%% NOTE TO THE USER: you can override these defaults by providing
%%% customized versions of any of these macros before the \bibliography
%%% command.  Each of them MUST provide its own final punctuation,
%%% except for \shownote{}, \showDOI{}, and \showURL{}.  The latter two
%%% do not use final punctuation, in order to avoid confusing it with
%%% the Web address.
%%%
%%% To suppress output of a particular field, define its macro to expand
%%% to an empty string, or better, \unskip, like this:
%%%
%%% \newcommand{\showDOI}[1]{\unskip}   % LaTeX syntax
%%%
%%% \def \showDOI #1{\unskip}           % plain TeX syntax
%%%
%%% ====================================================================

\ifx \showCODEN    \undefined \def \showCODEN     #1{\unskip}     \fi
\ifx \showDOI      \undefined \def \showDOI       #1{#1}\fi
\ifx \showISBNx    \undefined \def \showISBNx     #1{\unskip}     \fi
\ifx \showISBNxiii \undefined \def \showISBNxiii  #1{\unskip}     \fi
\ifx \showISSN     \undefined \def \showISSN      #1{\unskip}     \fi
\ifx \showLCCN     \undefined \def \showLCCN      #1{\unskip}     \fi
\ifx \shownote     \undefined \def \shownote      #1{#1}          \fi
\ifx \showarticletitle \undefined \def \showarticletitle #1{#1}   \fi
\ifx \showURL      \undefined \def \showURL       {\relax}        \fi
% The following commands are used for tagged output and should be
% invisible to TeX
\providecommand\bibfield[2]{#2}
\providecommand\bibinfo[2]{#2}
\providecommand\natexlab[1]{#1}
\providecommand\showeprint[2][]{arXiv:#2}

\bibitem[\protect\citeauthoryear{Abel and Scherer}{Abel and Scherer}{2012}]%
        {LMCS12Irrelevance}
\bibfield{author}{\bibinfo{person}{Andreas Abel} {and} \bibinfo{person}{Gabriel
  Scherer}.} \bibinfo{year}{2012}\natexlab{}.
\newblock \showarticletitle{{On Irrelevance and Algorithmic Equality in
  Predicative Type Theory}}.
\newblock \bibinfo{journal}{\emph{{Logical Methods in Computer Science}}}
  \bibinfo{volume}{{Volume 8, Issue 1}} (\bibinfo{date}{March}
  \bibinfo{year}{2012}).
\newblock
\urldef\tempurl%
\url{https://doi.org/10.2168/LMCS-8(1:29)2012}
\showDOI{\tempurl}


\bibitem[\protect\citeauthoryear{Ancona, Bono, Bravetti, Campos, Castagna,
  Deniélou, Gay, Gesbert, Giachino, Hu, Johnsen, Martins, Mascardi, Montesi,
  Neykova, Ng, Padovani, Vasconcelos, and Yoshida}{Ancona
  et~al\mbox{.}}{2016}]%
        {FTPL16BehavioralSurvey}
\bibfield{author}{\bibinfo{person}{Davide Ancona}, \bibinfo{person}{Viviana
  Bono}, \bibinfo{person}{Mario Bravetti}, \bibinfo{person}{Joana Campos},
  \bibinfo{person}{Giuseppe Castagna}, \bibinfo{person}{Pierre-Malo Deniélou},
  \bibinfo{person}{Simon~J. Gay}, \bibinfo{person}{Nils Gesbert},
  \bibinfo{person}{Elena Giachino}, \bibinfo{person}{Raymond Hu},
  \bibinfo{person}{Einar~Broch Johnsen}, \bibinfo{person}{Francisco Martins},
  \bibinfo{person}{Viviana Mascardi}, \bibinfo{person}{Fabrizio Montesi},
  \bibinfo{person}{Rumyana Neykova}, \bibinfo{person}{Nicholas Ng},
  \bibinfo{person}{Luca Padovani}, \bibinfo{person}{Vasco~T. Vasconcelos},
  {and} \bibinfo{person}{Nobuko Yoshida}.} \bibinfo{year}{2016}\natexlab{}.
\newblock \showarticletitle{Behavioral Types in Programming Languages}.
\newblock \bibinfo{journal}{\emph{Found. Trends Program. Lang.}}
  \bibinfo{volume}{3}, \bibinfo{number}{2–3} (\bibinfo{date}{July}
  \bibinfo{year}{2016}), \bibinfo{pages}{95–230}.
\newblock
\showISSN{2325-1107}
\urldef\tempurl%
\url{https://doi.org/10.1561/2500000031}
\showDOI{\tempurl}


\bibitem[\protect\citeauthoryear{Bengtson, Bhargavan, Fournet, Gordon, and
  Maffeis}{Bengtson et~al\mbox{.}}{2011}]%
        {TOPLAS11Refinement}
\bibfield{author}{\bibinfo{person}{Jesper Bengtson},
  \bibinfo{person}{Karthikeyan Bhargavan}, \bibinfo{person}{C{\'e}dric
  Fournet}, \bibinfo{person}{Andrew~D Gordon}, {and} \bibinfo{person}{Sergio
  Maffeis}.} \bibinfo{year}{2011}\natexlab{}.
\newblock \showarticletitle{Refinement types for secure implementations}.
\newblock \bibinfo{journal}{\emph{ACM Transactions on Programming Languages and
  Systems (TOPLAS)}} \bibinfo{volume}{33}, \bibinfo{number}{2}
  (\bibinfo{year}{2011}), \bibinfo{pages}{8}.
\newblock


\bibitem[\protect\citeauthoryear{Bhargavan, Corin, Deniélou, Fournet, and
  Leifer}{Bhargavan et~al\mbox{.}}{2009}]%
        {CSF09Session}
\bibfield{author}{\bibinfo{person}{Karthikeyan Bhargavan},
  \bibinfo{person}{Ricardo Corin}, \bibinfo{person}{Pierre-Malo Deniélou},
  \bibinfo{person}{Cédric Fournet}, {and} \bibinfo{person}{James~J. Leifer}.}
  \bibinfo{year}{2009}\natexlab{}.
\newblock \showarticletitle{Cryptographic Protocol Synthesis and Verification
  for Multiparty Sessions}. In \bibinfo{booktitle}{\emph{2009 22nd IEEE
  Computer Security Foundations Symposium}}. \bibinfo{pages}{124--140}.
\newblock
\showISSN{1063-6900}
\urldef\tempurl%
\url{https://doi.org/10.1109/CSF.2009.26}
\showDOI{\tempurl}


\bibitem[\protect\citeauthoryear{Bhargavan, Fournet, and Gordon}{Bhargavan
  et~al\mbox{.}}{2010}]%
        {POPL10F7Protocol}
\bibfield{author}{\bibinfo{person}{Karthikeyan Bhargavan},
  \bibinfo{person}{C{\'e}dric Fournet}, {and} \bibinfo{person}{Andrew~D.
  Gordon}.} \bibinfo{year}{2010}\natexlab{}.
\newblock \showarticletitle{Modular Verification of Security Protocol Code by
  Typing}. In \bibinfo{booktitle}{\emph{Proceedings of the 37th Annual ACM
  SIGPLAN-SIGACT Symposium on Principles of Programming Languages}}
  \emph{(\bibinfo{series}{POPL '10})}. \bibinfo{publisher}{ACM},
  \bibinfo{address}{New York, NY, USA}, \bibinfo{pages}{445--456}.
\newblock
\showISBNx{978-1-60558-479-9}
\urldef\tempurl%
\url{https://doi.org/10.1145/1706299.1706350}
\showDOI{\tempurl}


\bibitem[\protect\citeauthoryear{Bierman, Gordon, Hri{\c{t}}cu, and
  Langworthy}{Bierman et~al\mbox{.}}{2012}]%
        {JFP12SemanticSubtyping}
\bibfield{author}{\bibinfo{person}{Gavin~M Bierman}, \bibinfo{person}{Andrew~D
  Gordon}, \bibinfo{person}{C{\u{a}}t{\u{a}}lin Hri{\c{t}}cu}, {and}
  \bibinfo{person}{David Langworthy}.} \bibinfo{year}{2012}\natexlab{}.
\newblock \showarticletitle{{Semantic subtyping with an SMT solver}}.
\newblock \bibinfo{journal}{\emph{Journal of Functional Programming}}
  \bibinfo{volume}{22}, \bibinfo{number}{1} (\bibinfo{year}{2012}),
  \bibinfo{pages}{31–105}.
\newblock
\urldef\tempurl%
\url{https://doi.org/10.1017/S0956796812000032}
\showDOI{\tempurl}


\bibitem[\protect\citeauthoryear{Bocchi, Demangeon, and Yoshida}{Bocchi
  et~al\mbox{.}}{2013}]%
        {TGC12MultipartyMultiSession}
\bibfield{author}{\bibinfo{person}{Laura Bocchi}, \bibinfo{person}{Romain
  Demangeon}, {and} \bibinfo{person}{Nobuko Yoshida}.}
  \bibinfo{year}{2013}\natexlab{}.
\newblock \showarticletitle{A Multiparty Multi-session Logic}. In
  \bibinfo{booktitle}{\emph{Trustworthy Global Computing}},
  \bibfield{editor}{\bibinfo{person}{Catuscia Palamidessi} {and}
  \bibinfo{person}{Mark~D. Ryan}} (Eds.). \bibinfo{publisher}{Springer Berlin
  Heidelberg}, \bibinfo{address}{Berlin, Heidelberg}, \bibinfo{pages}{97--111}.
\newblock
\showISBNx{978-3-642-41157-1}


\bibitem[\protect\citeauthoryear{Bocchi, Honda, Tuosto, and Yoshida}{Bocchi
  et~al\mbox{.}}{2010}]%
        {CONCUR10DesignByContract}
\bibfield{author}{\bibinfo{person}{Laura Bocchi}, \bibinfo{person}{Kohei
  Honda}, \bibinfo{person}{Emilio Tuosto}, {and} \bibinfo{person}{Nobuko
  Yoshida}.} \bibinfo{year}{2010}\natexlab{}.
\newblock \showarticletitle{A Theory of Design-by-Contract for Distributed
  Multiparty Interactions}. In \bibinfo{booktitle}{\emph{CONCUR 2010 -
  Concurrency Theory}}, \bibfield{editor}{\bibinfo{person}{Paul Gastin} {and}
  \bibinfo{person}{Fran{ç}ois Laroussinie}} (Eds.).
  \bibinfo{publisher}{Springer Berlin Heidelberg}, \bibinfo{address}{Berlin,
  Heidelberg}, \bibinfo{pages}{162--176}.
\newblock
\showISBNx{978-3-642-15375-4}


\bibitem[\protect\citeauthoryear{Brand and Zafiropulo}{Brand and
  Zafiropulo}{1983}]%
        {JACM83CFSM}
\bibfield{author}{\bibinfo{person}{Daniel Brand} {and} \bibinfo{person}{Pitro
  Zafiropulo}.} \bibinfo{year}{1983}\natexlab{}.
\newblock \showarticletitle{{On Communicating Finite-State Machines}}.
\newblock \bibinfo{journal}{\emph{J. ACM}} \bibinfo{volume}{30},
  \bibinfo{number}{2} (\bibinfo{date}{April} \bibinfo{year}{1983}),
  \bibinfo{pages}{323–342}.
\newblock
\showISSN{0004-5411}
\urldef\tempurl%
\url{https://doi.org/10.1145/322374.322380}
\showDOI{\tempurl}


\bibitem[\protect\citeauthoryear{Castro, Hu, Jongmans, Ng, and Yoshida}{Castro
  et~al\mbox{.}}{2019}]%
        {POPL19Parametric}
\bibfield{author}{\bibinfo{person}{David Castro}, \bibinfo{person}{Raymond Hu},
  \bibinfo{person}{Sung-Shik Jongmans}, \bibinfo{person}{Nicholas Ng}, {and}
  \bibinfo{person}{Nobuko Yoshida}.} \bibinfo{year}{2019}\natexlab{}.
\newblock \showarticletitle{Distributed Programming Using Role-parametric
  Session Types in Go: Statically-typed Endpoint APIs for
  Dynamically-instantiated Communication Structures}.
\newblock \bibinfo{journal}{\emph{Proc. ACM Program. Lang.}}
  \bibinfo{volume}{3}, \bibinfo{number}{POPL}, Article \bibinfo{articleno}{29}
  (\bibinfo{date}{Jan.} \bibinfo{year}{2019}), \bibinfo{numpages}{30}~pages.
\newblock
\showISSN{2475-1421}
\urldef\tempurl%
\url{https://doi.org/10.1145/3290342}
\showDOI{\tempurl}


\bibitem[\protect\citeauthoryear{Das and Pfenning}{Das and Pfenning}{2020}]%
        {CONCUR20SessionRefinement}
\bibfield{author}{\bibinfo{person}{Ankush Das} {and} \bibinfo{person}{Frank
  Pfenning}.} \bibinfo{year}{2020}\natexlab{}.
\newblock \showarticletitle{{Session Types with Arithmetic Refinements}}. In
  \bibinfo{booktitle}{\emph{31st International Conference on Concurrency Theory
  (CONCUR 2020)}} \emph{(\bibinfo{series}{Leibniz International Proceedings in
  Informatics (LIPIcs)})}, \bibfield{editor}{\bibinfo{person}{Igor Konnov}
  {and} \bibinfo{person}{Laura Kov{\'a}cs}} (Eds.), Vol.~\bibinfo{volume}{171}.
  \bibinfo{publisher}{Schloss Dagstuhl--Leibniz-Zentrum f{\"u}r Informatik},
  \bibinfo{address}{Dagstuhl, Germany}, \bibinfo{pages}{13:1--13:18}.
\newblock
\showISBNx{978-3-95977-160-3}
\showISSN{1868-8969}
\urldef\tempurl%
\url{https://doi.org/10.4230/LIPIcs.CONCUR.2020.13}
\showDOI{\tempurl}


\bibitem[\protect\citeauthoryear{De~Moura and Bj{\o}rner}{De~Moura and
  Bj{\o}rner}{2008}]%
        {TACAS08Z3}
\bibfield{author}{\bibinfo{person}{Leonardo De~Moura} {and}
  \bibinfo{person}{Nikolaj Bj{\o}rner}.} \bibinfo{year}{2008}\natexlab{}.
\newblock \showarticletitle{{Z3: An Efficient SMT Solver}}. In
  \bibinfo{booktitle}{\emph{Proceedings of the Theory and Practice of Software,
  14th International Conference on Tools and Algorithms for the Construction
  and Analysis of Systems}} \emph{(\bibinfo{series}{TACAS'08/ETAPS'08})}.
  \bibinfo{publisher}{Springer-Verlag}, \bibinfo{address}{Berlin, Heidelberg},
  \bibinfo{pages}{337--340}.
\newblock
\showISBNx{3-540-78799-2, 978-3-540-78799-0}
\urldef\tempurl%
\url{http://dl.acm.org/citation.cfm?id=1792734.1792766}
\showURL{%
\tempurl}


\bibitem[\protect\citeauthoryear{de~Muijnck-Hughes, Brady, and
  Vanderbauwhede}{de~Muijnck-Hughes et~al\mbox{.}}{2019}]%
        {PLACES19DependentSession}
\bibfield{author}{\bibinfo{person}{Jan de Muijnck-Hughes},
  \bibinfo{person}{Edwin Brady}, {and} \bibinfo{person}{Wim Vanderbauwhede}.}
  \bibinfo{year}{2019}\natexlab{}.
\newblock \showarticletitle{Value-Dependent Session Design in a Dependently
  Typed Language}. In \bibinfo{booktitle}{\emph{{\rm Proceedings} Programming
  Language Approaches to Concurrency- and Communication-cEntric Software, {\rm
  Prague, Czech Republic, 7th April 2019}}} \emph{(\bibinfo{series}{Electronic
  Proceedings in Theoretical Computer Science})},
  \bibfield{editor}{\bibinfo{person}{Francisco Martins} {and}
  \bibinfo{person}{Dominic Orchard}} (Eds.), Vol.~\bibinfo{volume}{291}.
  \bibinfo{publisher}{Open Publishing Association}, \bibinfo{pages}{47--59}.
\newblock
\urldef\tempurl%
\url{https://doi.org/10.4204/EPTCS.291.5}
\showDOI{\tempurl}


\bibitem[\protect\citeauthoryear{Demangeon and Honda}{Demangeon and
  Honda}{2012}]%
        {DBLP:conf/concur/DemangeonH12}
\bibfield{author}{\bibinfo{person}{Romain Demangeon} {and}
  \bibinfo{person}{Kohei Honda}.} \bibinfo{year}{2012}\natexlab{}.
\newblock \showarticletitle{Nested Protocols in Session Types}. In
  \bibinfo{booktitle}{\emph{{CONCUR} 2012 - Concurrency Theory - 23rd
  International Conference, {CONCUR} 2012, Newcastle upon Tyne, UK, September
  4-7, 2012. Proceedings}} \emph{(\bibinfo{series}{Lecture Notes in Computer
  Science})}, \bibfield{editor}{\bibinfo{person}{Maciej Koutny} {and}
  \bibinfo{person}{Irek Ulidowski}} (Eds.), Vol.~\bibinfo{volume}{7454}.
  \bibinfo{publisher}{Springer}, \bibinfo{pages}{272--286}.
\newblock
\urldef\tempurl%
\url{https://doi.org/10.1007/978-3-642-32940-1\_20}
\showDOI{\tempurl}


\bibitem[\protect\citeauthoryear{Denielou, Yoshida, Bejleri, and Hu}{Denielou
  et~al\mbox{.}}{2012}]%
        {LMCS12Parameterised}
\bibfield{author}{\bibinfo{person}{Pierre-Malo Denielou},
  \bibinfo{person}{Nobuko Yoshida}, \bibinfo{person}{Andi Bejleri}, {and}
  \bibinfo{person}{Raymond Hu}.} \bibinfo{year}{2012}\natexlab{}.
\newblock \showarticletitle{{Parameterised Multiparty Session Types}}.
\newblock \bibinfo{journal}{\emph{{Logical Methods in Computer Science}}}
  \bibinfo{volume}{{Volume 8, Issue 4}} (\bibinfo{date}{Oct.}
  \bibinfo{year}{2012}).
\newblock
\urldef\tempurl%
\url{https://doi.org/10.2168/LMCS-8(4:6)2012}
\showDOI{\tempurl}


\bibitem[\protect\citeauthoryear{Deniélou and Yoshida}{Deniélou and
  Yoshida}{2013}]%
        {ICALP13CFSM}
\bibfield{author}{\bibinfo{person}{Pierre-Malo Deniélou} {and}
  \bibinfo{person}{Nobuko Yoshida}.} \bibinfo{year}{2013}\natexlab{}.
\newblock \showarticletitle{{Multiparty Compatibility in Communicating
  Automata: Characterisation and Synthesis of Global Session Types}}. In
  \bibinfo{booktitle}{\emph{40th International Colloquium on Automata,
  Languages and Programming}} \emph{(\bibinfo{series}{LNCS})},
  Vol.~\bibinfo{volume}{7966}. \bibinfo{publisher}{Springer},
  \bibinfo{address}{Berlin, Heidelberg}, \bibinfo{pages}{174--186}.
\newblock
\urldef\tempurl%
\url{https://doi.org/10.1007/978-3-642-39212-2\_18}
\showDOI{\tempurl}


\bibitem[\protect\citeauthoryear{Fielding and Reschke}{Fielding and
  Reschke}{2014}]%
        {RFCHttpSyntax}
\bibfield{author}{\bibinfo{person}{Roy T.~(Ed.) Fielding} {and}
  \bibinfo{person}{Julian F.~(Ed.) Reschke}.} \bibinfo{year}{2014}\natexlab{}.
\newblock \bibinfo{booktitle}{\emph{{Hypertext Transfer Protocol (HTTP/1.1):
  Message Syntax and Routing}}}.
\newblock \bibinfo{type}{{RFC}} 7230. \bibinfo{institution}{{RFC Editor}}.
  \bibinfo{pages}{1--89} pages.
\newblock
\showISSN{2070-1721}
\urldef\tempurl%
\url{https://www.rfc-editor.org/info/rfc7230}
\showURL{%
\tempurl}


\bibitem[\protect\citeauthoryear{Freeman and Pfenning}{Freeman and
  Pfenning}{1991}]%
        {PLDI91RefinementML}
\bibfield{author}{\bibinfo{person}{Tim Freeman} {and} \bibinfo{person}{Frank
  Pfenning}.} \bibinfo{year}{1991}\natexlab{}.
\newblock \showarticletitle{{Refinement Types for ML}}. In
  \bibinfo{booktitle}{\emph{Proceedings of the ACM SIGPLAN 1991 Conference on
  Programming Language Design and Implementation}} \emph{(\bibinfo{series}{PLDI
  '91})}. \bibinfo{publisher}{ACM}, \bibinfo{address}{New York, NY, USA},
  \bibinfo{pages}{268--277}.
\newblock
\showISBNx{0-89791-428-7}
\urldef\tempurl%
\url{https://doi.org/10.1145/113445.113468}
\showDOI{\tempurl}


\bibitem[\protect\citeauthoryear{Gay and Ravara}{Gay and Ravara}{2017}]%
        {BettyBook}
\bibfield{editor}{\bibinfo{person}{Simon Gay} {and} \bibinfo{person}{António
  Ravara}} (Eds.). \bibinfo{year}{2017}\natexlab{}.
\newblock \bibinfo{booktitle}{\emph{{Behavioural Types: from Theory to
  Tools}}}.
\newblock \bibinfo{publisher}{River Publishers}.
\newblock
\showISBNx{9788793519824}
\urldef\tempurl%
\url{https://doi.org/10.13052/rp-9788793519817}
\showDOI{\tempurl}


\bibitem[\protect\citeauthoryear{Handley, Vazou, and Hutton}{Handley
  et~al\mbox{.}}{2019}]%
        {POPL20Refinement}
\bibfield{author}{\bibinfo{person}{Martin A.~T. Handley}, \bibinfo{person}{Niki
  Vazou}, {and} \bibinfo{person}{Graham Hutton}.}
  \bibinfo{year}{2019}\natexlab{}.
\newblock \showarticletitle{Liquidate Your Assets: Reasoning about Resource
  Usage in Liquid Haskell}.
\newblock \bibinfo{journal}{\emph{Proc. ACM Program. Lang.}}
  \bibinfo{volume}{4}, \bibinfo{number}{POPL}, Article
  \bibinfo{articleno}{Article 24} (\bibinfo{date}{Dec.} \bibinfo{year}{2019}),
  \bibinfo{numpages}{27}~pages.
\newblock
\urldef\tempurl%
\url{https://doi.org/10.1145/3371092}
\showDOI{\tempurl}


\bibitem[\protect\citeauthoryear{Hinrichsen, Bengtson, and Krebbers}{Hinrichsen
  et~al\mbox{.}}{2019}]%
        {POPL20Actris}
\bibfield{author}{\bibinfo{person}{Jonas~Kastberg Hinrichsen},
  \bibinfo{person}{Jesper Bengtson}, {and} \bibinfo{person}{Robbert Krebbers}.}
  \bibinfo{year}{2019}\natexlab{}.
\newblock \showarticletitle{Actris: Session-Type Based Reasoning in Separation
  Logic}.
\newblock \bibinfo{journal}{\emph{Proc. ACM Program. Lang.}}
  \bibinfo{volume}{4}, \bibinfo{number}{POPL}, Article
  \bibinfo{articleno}{Article 6} (\bibinfo{date}{Dec.} \bibinfo{year}{2019}),
  \bibinfo{numpages}{30}~pages.
\newblock
\urldef\tempurl%
\url{https://doi.org/10.1145/3371074}
\showDOI{\tempurl}


\bibitem[\protect\citeauthoryear{Honda, Yoshida, and Carbone}{Honda
  et~al\mbox{.}}{2008}]%
        {POPL08MPST}
\bibfield{author}{\bibinfo{person}{Kohei Honda}, \bibinfo{person}{Nobuko
  Yoshida}, {and} \bibinfo{person}{Marco Carbone}.}
  \bibinfo{year}{2008}\natexlab{}.
\newblock \showarticletitle{Multiparty Asynchronous Session Types}. In
  \bibinfo{booktitle}{\emph{Proceedings of the 35th Annual ACM SIGPLAN-SIGACT
  Symposium on Principles of Programming Languages}}
  \emph{(\bibinfo{series}{POPL '08})}. \bibinfo{publisher}{ACM},
  \bibinfo{address}{New York, NY, USA}, \bibinfo{pages}{273--284}.
\newblock
\showISBNx{978-1-59593-689-9}
\urldef\tempurl%
\url{https://doi.org/10.1145/1328438.1328472}
\showDOI{\tempurl}


\bibitem[\protect\citeauthoryear{Honda, Yoshida, and Carbone}{Honda
  et~al\mbox{.}}{2016}]%
        {JACM16MPST}
\bibfield{author}{\bibinfo{person}{Kohei Honda}, \bibinfo{person}{Nobuko
  Yoshida}, {and} \bibinfo{person}{Marco Carbone}.}
  \bibinfo{year}{2016}\natexlab{}.
\newblock \showarticletitle{{Multiparty Asynchronous Session Types}}.
\newblock \bibinfo{journal}{\emph{J. ACM}}  \bibinfo{volume}{63}
  (\bibinfo{year}{2016}), \bibinfo{pages}{1--67}.
\newblock
Issue 1-9.
\urldef\tempurl%
\url{https://doi.org/10.1145/2827695}
\showDOI{\tempurl}


\bibitem[\protect\citeauthoryear{Hu}{Hu}{2017}]%
        {ScribbleBookChapter}
\bibfield{author}{\bibinfo{person}{Raymond Hu}.}
  \bibinfo{year}{2017}\natexlab{}.
\newblock \showarticletitle{{Distributed Programming Using Java APIs Generated
  from Session Types}}.
\newblock \bibinfo{journal}{\emph{Behavioural Types: from Theory to Tools}}
  (\bibinfo{year}{2017}), \bibinfo{pages}{287--308}.
\newblock


\bibitem[\protect\citeauthoryear{Hu and Yoshida}{Hu and Yoshida}{2016}]%
        {FASE16EndpointAPI}
\bibfield{author}{\bibinfo{person}{Raymond Hu} {and} \bibinfo{person}{Nobuko
  Yoshida}.} \bibinfo{year}{2016}\natexlab{}.
\newblock \showarticletitle{{Hybrid Session Verification through Endpoint API
  Generation}}. In \bibinfo{booktitle}{\emph{19th International Conference on
  Fundamental Approaches to Software Engineering}}
  \emph{(\bibinfo{series}{LNCS})}, Vol.~\bibinfo{volume}{9633}.
  \bibinfo{publisher}{Springer}, \bibinfo{address}{Berlin, Heidelberg},
  \bibinfo{pages}{401--418}.
\newblock
\urldef\tempurl%
\url{https://doi.org/10.1007/978-3-662-49665-7\_24}
\showDOI{\tempurl}


\bibitem[\protect\citeauthoryear{Hu, Yoshida, and Honda}{Hu
  et~al\mbox{.}}{2008}]%
        {DBLP:conf/ecoop/HuYH08}
\bibfield{author}{\bibinfo{person}{Raymond Hu}, \bibinfo{person}{Nobuko
  Yoshida}, {and} \bibinfo{person}{Kohei Honda}.}
  \bibinfo{year}{2008}\natexlab{}.
\newblock \showarticletitle{Session-Based Distributed Programming in Java}. In
  \bibinfo{booktitle}{\emph{{ECOOP} 2008 - Object-Oriented Programming, 22nd
  European Conference, Paphos, Cyprus, July 7-11, 2008, Proceedings}}
  \emph{(\bibinfo{series}{Lecture Notes in Computer Science})},
  \bibfield{editor}{\bibinfo{person}{Jan Vitek}} (Ed.),
  Vol.~\bibinfo{volume}{5142}. \bibinfo{publisher}{Springer},
  \bibinfo{pages}{516--541}.
\newblock
\urldef\tempurl%
\url{https://doi.org/10.1007/978-3-540-70592-5\_22}
\showDOI{\tempurl}


\bibitem[\protect\citeauthoryear{Igarashi, Thiemann, Tsuda, Vasconcelos, and
  Wadler}{Igarashi et~al\mbox{.}}{2019}]%
        {JFP19GradualSession}
\bibfield{author}{\bibinfo{person}{Atsushi Igarashi}, \bibinfo{person}{Peter
  Thiemann}, \bibinfo{person}{Yuya Tsuda}, \bibinfo{person}{Vasco~T.
  Vasconcelos}, {and} \bibinfo{person}{Philip Wadler}.}
  \bibinfo{year}{2019}\natexlab{}.
\newblock \showarticletitle{Gradual session types}.
\newblock \bibinfo{journal}{\emph{Journal of Functional Programming}}
  \bibinfo{volume}{29} (\bibinfo{year}{2019}), \bibinfo{pages}{e17}.
\newblock
\urldef\tempurl%
\url{https://doi.org/10.1017/S0956796819000169}
\showDOI{\tempurl}


\bibitem[\protect\citeauthoryear{Imai, Neykova, Yoshida, and Yuen}{Imai
  et~al\mbox{.}}{2020}]%
        {ECOOP20OCamlMPST}
\bibfield{author}{\bibinfo{person}{Keigo Imai}, \bibinfo{person}{Rumyana
  Neykova}, \bibinfo{person}{Nobuko Yoshida}, {and} \bibinfo{person}{Shoji
  Yuen}.} \bibinfo{year}{2020}\natexlab{}.
\newblock \showarticletitle{{Multiparty Session Programming with Global
  Protocol Combinators}}.
  \bibinfo{howpublished}{\url{https://github.com/keigoi/ocaml-mpst}}
  \emph{(\bibinfo{series}{LIPIcs})}. \bibinfo{publisher}{Schloss Dagstuhl -
  Leibniz-Zentrum f{\"{u}}r Informatik}.
\newblock
\newblock
\shownote{To appear in ECOOP'20.}


\bibitem[\protect\citeauthoryear{Imai, Yoshida, and Yuen}{Imai
  et~al\mbox{.}}{2019}]%
        {SCP18OCaml}
\bibfield{author}{\bibinfo{person}{Keigo Imai}, \bibinfo{person}{Nobuko
  Yoshida}, {and} \bibinfo{person}{Shoji Yuen}.}
  \bibinfo{year}{2019}\natexlab{}.
\newblock \showarticletitle{Session-ocaml: A session-based library with
  polarities and lenses}.
\newblock \bibinfo{journal}{\emph{Science of Computer Programming}}
  \bibinfo{volume}{172} (\bibinfo{year}{2019}), \bibinfo{pages}{135 -- 159}.
\newblock
\showISSN{0167-6423}
\urldef\tempurl%
\url{https://doi.org/10.1016/j.scico.2018.08.005}
\showDOI{\tempurl}


\bibitem[\protect\citeauthoryear{Knoth, Wang, Reynolds, Hoffmann, and
  Polikarpova}{Knoth et~al\mbox{.}}{2020}]%
        {ICFP20RefinementResource}
\bibfield{author}{\bibinfo{person}{Tristan Knoth}, \bibinfo{person}{Di Wang},
  \bibinfo{person}{Adam Reynolds}, \bibinfo{person}{Jan Hoffmann}, {and}
  \bibinfo{person}{Nadia Polikarpova}.} \bibinfo{year}{2020}\natexlab{}.
\newblock \showarticletitle{Liquid Resource Types}.
\newblock \bibinfo{journal}{\emph{Proc. ACM Program. Lang.}}
  \bibinfo{volume}{4}, \bibinfo{number}{ICFP}, Article \bibinfo{articleno}{106}
  (\bibinfo{date}{Aug.} \bibinfo{year}{2020}), \bibinfo{numpages}{29}~pages.
\newblock
\urldef\tempurl%
\url{https://doi.org/10.1145/3408988}
\showDOI{\tempurl}


\bibitem[\protect\citeauthoryear{Lehmann and Tanter}{Lehmann and
  Tanter}{2017}]%
        {POPL17GradualRefinement}
\bibfield{author}{\bibinfo{person}{Nico Lehmann} {and}
  \bibinfo{person}{\'{E}ric Tanter}.} \bibinfo{year}{2017}\natexlab{}.
\newblock \showarticletitle{Gradual Refinement Types}. In
  \bibinfo{booktitle}{\emph{Proceedings of the 44th ACM SIGPLAN Symposium on
  Principles of Programming Languages}} \emph{(\bibinfo{series}{POPL 2017})}.
  \bibinfo{publisher}{Association for Computing Machinery},
  \bibinfo{address}{New York, NY, USA}, \bibinfo{pages}{775–788}.
\newblock
\showISBNx{9781450346603}
\urldef\tempurl%
\url{https://doi.org/10.1145/3009837.3009856}
\showDOI{\tempurl}


\bibitem[\protect\citeauthoryear{Neykova, Hu, Yoshida, and Abdeljallal}{Neykova
  et~al\mbox{.}}{2018}]%
        {CC18TypeProvider}
\bibfield{author}{\bibinfo{person}{Rumyana Neykova}, \bibinfo{person}{Raymond
  Hu}, \bibinfo{person}{Nobuko Yoshida}, {and} \bibinfo{person}{Fahd
  Abdeljallal}.} \bibinfo{year}{2018}\natexlab{}.
\newblock \showarticletitle{A Session Type Provider: Compile-time API
  Generation of Distributed Protocols with Refinements in F\#}. In
  \bibinfo{booktitle}{\emph{Proceedings of the 27th International Conference on
  Compiler Construction}} \emph{(\bibinfo{series}{CC 2018})}.
  \bibinfo{publisher}{ACM}, \bibinfo{address}{New York, NY, USA},
  \bibinfo{pages}{128--138}.
\newblock
\showISBNx{978-1-4503-5644-2}
\urldef\tempurl%
\url{https://doi.org/10.1145/3178372.3179495}
\showDOI{\tempurl}


\bibitem[\protect\citeauthoryear{Neykova and Yoshida}{Neykova and
  Yoshida}{2019}]%
        {FeatherweightScribble}
\bibfield{author}{\bibinfo{person}{Rumyana Neykova} {and}
  \bibinfo{person}{Nobuko Yoshida}.} \bibinfo{year}{2019}\natexlab{}.
\newblock \bibinfo{booktitle}{\emph{Featherweight Scribble}}.
\newblock \bibinfo{publisher}{Springer International Publishing},
  \bibinfo{address}{Cham}, \bibinfo{pages}{236--259}.
\newblock
\showISBNx{978-3-030-21485-2}
\urldef\tempurl%
\url{https://doi.org/10.1007/978-3-030-21485-2_14}
\showDOI{\tempurl}


\bibitem[\protect\citeauthoryear{Neykova, Yoshida, and Hu}{Neykova
  et~al\mbox{.}}{2013}]%
        {RV13SPy}
\bibfield{author}{\bibinfo{person}{Rumyana Neykova}, \bibinfo{person}{Nobuko
  Yoshida}, {and} \bibinfo{person}{Raymond Hu}.}
  \bibinfo{year}{2013}\natexlab{}.
\newblock \showarticletitle{SPY: Local Verification of Global Protocols}. In
  \bibinfo{booktitle}{\emph{Runtime Verification}},
  \bibfield{editor}{\bibinfo{person}{Axel Legay} {and} \bibinfo{person}{Saddek
  Bensalem}} (Eds.). \bibinfo{publisher}{Springer Berlin Heidelberg},
  \bibinfo{address}{Berlin, Heidelberg}, \bibinfo{pages}{358--363}.
\newblock
\showISBNx{978-3-642-40787-1}


\bibitem[\protect\citeauthoryear{Ng, Coutinho, and Yoshida}{Ng
  et~al\mbox{.}}{2015}]%
        {CC15MPI}
\bibfield{author}{\bibinfo{person}{Nicholas Ng}, \bibinfo{person}{Jose~G.F.
  Coutinho}, {and} \bibinfo{person}{Nobuko Yoshida}.}
  \bibinfo{year}{2015}\natexlab{}.
\newblock \showarticletitle{{Protocols by Default: Safe MPI Code Generation
  based on Session Types}}. In \bibinfo{booktitle}{\emph{24th International
  Conference on Compiler Construction}} \emph{(\bibinfo{series}{LNCS})},
  Vol.~\bibinfo{volume}{9031}. \bibinfo{publisher}{Springer},
  \bibinfo{pages}{212--232}.
\newblock
\urldef\tempurl%
\url{https://doi.org/10.1007/978-3-662-46663-6_11}
\showDOI{\tempurl}


\bibitem[\protect\citeauthoryear{Orchard and Yoshida}{Orchard and
  Yoshida}{2017}]%
        {HaskellSessionBookChapter}
\bibfield{author}{\bibinfo{person}{Dominic Orchard} {and}
  \bibinfo{person}{Nobuko Yoshida}.} \bibinfo{year}{2017}\natexlab{}.
\newblock \showarticletitle{{Session Types with Linearity in Haskell}}.
\newblock \bibinfo{journal}{\emph{Behavioural Types: from Theory to Tools}}
  (\bibinfo{year}{2017}), \bibinfo{pages}{219--241}.
\newblock


\bibitem[\protect\citeauthoryear{Pfenning}{Pfenning}{2001}]%
        {LICS01Irrelavance}
\bibfield{author}{\bibinfo{person}{Frank Pfenning}.}
  \bibinfo{year}{2001}\natexlab{}.
\newblock \showarticletitle{Intensionality, extensionality, and proof
  irrelevance in modal type theory}. In \bibinfo{booktitle}{\emph{Proceedings
  16th Annual IEEE Symposium on Logic in Computer Science}}. IEEE,
  \bibinfo{pages}{221--230}.
\newblock


\bibitem[\protect\citeauthoryear{Pierce}{Pierce}{2002}]%
        {PierceTAPL}
\bibfield{author}{\bibinfo{person}{Benjamin~C. Pierce}.}
  \bibinfo{year}{2002}\natexlab{}.
\newblock \bibinfo{booktitle}{\emph{Types and Programming Languages}
  (\bibinfo{edition}{1st} ed.)}.
\newblock \bibinfo{publisher}{The MIT Press}.
\newblock
\showISBNx{0262162091, 9780262162098}


\bibitem[\protect\citeauthoryear{Polikarpova, Stefan, Yang, Itzhaky, Hance, and
  Solar-Lezama}{Polikarpova et~al\mbox{.}}{2020}]%
        {ICFP20RefinementIFC}
\bibfield{author}{\bibinfo{person}{Nadia Polikarpova}, \bibinfo{person}{Deian
  Stefan}, \bibinfo{person}{Jean Yang}, \bibinfo{person}{Shachar Itzhaky},
  \bibinfo{person}{Travis Hance}, {and} \bibinfo{person}{Armando
  Solar-Lezama}.} \bibinfo{year}{2020}\natexlab{}.
\newblock \showarticletitle{Liquid Information Flow Control}.
\newblock \bibinfo{journal}{\emph{Proc. ACM Program. Lang.}}
  \bibinfo{volume}{4}, \bibinfo{number}{ICFP}, Article \bibinfo{articleno}{105}
  (\bibinfo{date}{Aug.} \bibinfo{year}{2020}), \bibinfo{numpages}{30}~pages.
\newblock
\urldef\tempurl%
\url{https://doi.org/10.1145/3408987}
\showDOI{\tempurl}


\bibitem[\protect\citeauthoryear{Rondon, Kawaguchi, and Jhala}{Rondon
  et~al\mbox{.}}{2008}]%
        {PLDI08LiquidTypes}
\bibfield{author}{\bibinfo{person}{Patrick~M. Rondon}, \bibinfo{person}{Ming
  Kawaguchi}, {and} \bibinfo{person}{Ranjit Jhala}.}
  \bibinfo{year}{2008}\natexlab{}.
\newblock \showarticletitle{Liquid Types}. In
  \bibinfo{booktitle}{\emph{Proceedings of the 29th ACM SIGPLAN Conference on
  Programming Language Design and Implementation}} \emph{(\bibinfo{series}{PLDI
  '08})}. \bibinfo{publisher}{ACM}, \bibinfo{address}{New York, NY, USA},
  \bibinfo{pages}{159--169}.
\newblock
\showISBNx{978-1-59593-860-2}
\urldef\tempurl%
\url{https://doi.org/10.1145/1375581.1375602}
\showDOI{\tempurl}


\bibitem[\protect\citeauthoryear{Scalas, Dardha, Hu, and Yoshida}{Scalas
  et~al\mbox{.}}{2017}]%
        {scalas17linear}
\bibfield{author}{\bibinfo{person}{Alceste Scalas}, \bibinfo{person}{Ornela
  Dardha}, \bibinfo{person}{Raymond Hu}, {and} \bibinfo{person}{Nobuko
  Yoshida}.} \bibinfo{year}{2017}\natexlab{}.
\newblock \showarticletitle{{A Linear Decomposition of Multiparty Sessions for
  Safe Distributed Programming}}. In \bibinfo{booktitle}{\emph{{ECOOP}}}.
\newblock
\urldef\tempurl%
\url{https://doi.org/10.4230/LIPIcs.ECOOP.2017.24}
\showDOI{\tempurl}


\bibitem[\protect\citeauthoryear{Schmid and Kuncak}{Schmid and Kuncak}{2016}]%
        {SCALA16Refinement}
\bibfield{author}{\bibinfo{person}{Georg~Stefan Schmid} {and}
  \bibinfo{person}{Viktor Kuncak}.} \bibinfo{year}{2016}\natexlab{}.
\newblock \showarticletitle{SMT-based Checking of Predicate-qualified Types for
  Scala}. In \bibinfo{booktitle}{\emph{Proceedings of the 2016 7th ACM SIGPLAN
  Symposium on Scala}} \emph{(\bibinfo{series}{SCALA 2016})}.
  \bibinfo{publisher}{ACM}, \bibinfo{address}{New York, NY, USA},
  \bibinfo{pages}{31--40}.
\newblock
\showISBNx{978-1-4503-4648-1}
\urldef\tempurl%
\url{https://doi.org/10.1145/2998392.2998398}
\showDOI{\tempurl}


\bibitem[\protect\citeauthoryear{{Scribble Authors}}{{Scribble
  Authors}}{2015}]%
        {ScribbleWebsite}
\bibfield{author}{\bibinfo{person}{{Scribble Authors}}.}
  \bibinfo{year}{2015}\natexlab{}.
\newblock \bibinfo{title}{{Scribble: Describing Multi Party Protocols}}.
\newblock \bibinfo{howpublished}{\url{http://www.scribble.org/}}.
\newblock
\newblock
\shownote{{Accessed on 20th April 2020}.}


\bibitem[\protect\citeauthoryear{Swamy, Hriţcu, Keller, Rastogi,
  Delignat-Lavaud, Forest, Bhargavan, Fournet, Strub, Kohlweiss, Zinzindohoue,
  and Zanella-B{é}guelin}{Swamy et~al\mbox{.}}{2016}]%
        {POPL16FStar}
\bibfield{author}{\bibinfo{person}{Nikhil Swamy}, \bibinfo{person}{Cătălin
  Hriţcu}, \bibinfo{person}{Chantal Keller}, \bibinfo{person}{Aseem Rastogi},
  \bibinfo{person}{Antoine Delignat-Lavaud}, \bibinfo{person}{Simon Forest},
  \bibinfo{person}{Karthikeyan Bhargavan}, \bibinfo{person}{C{é}dric Fournet},
  \bibinfo{person}{Pierre-Yves Strub}, \bibinfo{person}{Markulf Kohlweiss},
  \bibinfo{person}{Jean-Karim Zinzindohoue}, {and} \bibinfo{person}{Santiago
  Zanella-B{é}guelin}.} \bibinfo{year}{2016}\natexlab{}.
\newblock \showarticletitle{Dependent Types and Multi-monadic Effects in F*}.
  In \bibinfo{booktitle}{\emph{Proceedings of the 43rd Annual ACM
  SIGPLAN-SIGACT Symposium on Principles of Programming Languages}}
  \emph{(\bibinfo{series}{POPL '16})}. \bibinfo{publisher}{ACM},
  \bibinfo{address}{St. Petersburg, FL, USA}, \bibinfo{pages}{256--270}.
\newblock
\showISBNx{978-1-4503-3549-2}
\urldef\tempurl%
\url{https://doi.org/10.1145/2837614.2837655}
\showDOI{\tempurl}


\bibitem[\protect\citeauthoryear{Swamy, Rastogi, Fromherz, Merigoux, Ahman, and
  Mart\'{\i}nez}{Swamy et~al\mbox{.}}{2020}]%
        {ICFP20SteelCore}
\bibfield{author}{\bibinfo{person}{Nikhil Swamy}, \bibinfo{person}{Aseem
  Rastogi}, \bibinfo{person}{Aymeric Fromherz}, \bibinfo{person}{Denis
  Merigoux}, \bibinfo{person}{Danel Ahman}, {and} \bibinfo{person}{Guido
  Mart\'{\i}nez}.} \bibinfo{year}{2020}\natexlab{}.
\newblock \showarticletitle{SteelCore: An Extensible Concurrent Separation
  Logic for Effectful Dependently Typed Programs}.
\newblock \bibinfo{journal}{\emph{Proc. ACM Program. Lang.}}
  \bibinfo{volume}{4}, \bibinfo{number}{ICFP}, Article \bibinfo{articleno}{121}
  (\bibinfo{date}{Aug.} \bibinfo{year}{2020}), \bibinfo{numpages}{30}~pages.
\newblock
\urldef\tempurl%
\url{https://doi.org/10.1145/3409003}
\showDOI{\tempurl}


\bibitem[\protect\citeauthoryear{Thiemann and Vasconcelos}{Thiemann and
  Vasconcelos}{2019}]%
        {POPL20LabelDependent}
\bibfield{author}{\bibinfo{person}{Peter Thiemann} {and}
  \bibinfo{person}{Vasco~T. Vasconcelos}.} \bibinfo{year}{2019}\natexlab{}.
\newblock \showarticletitle{Label-Dependent Session Types}.
\newblock \bibinfo{journal}{\emph{Proc. ACM Program. Lang.}}, Article
  \bibinfo{articleno}{Article 67} (\bibinfo{date}{Dec.} \bibinfo{year}{2019}),
  \bibinfo{numpages}{29}~pages.
\newblock
\urldef\tempurl%
\url{https://doi.org/10.1145/3371135}
\showDOI{\tempurl}


\bibitem[\protect\citeauthoryear{Toninho and Yoshida}{Toninho and
  Yoshida}{2017}]%
        {JLAMP17DependentMPST}
\bibfield{author}{\bibinfo{person}{Bernardo Toninho} {and}
  \bibinfo{person}{Nobuko Yoshida}.} \bibinfo{year}{2017}\natexlab{}.
\newblock \showarticletitle{Certifying data in multiparty session types}.
\newblock \bibinfo{journal}{\emph{Journal of Logical and Algebraic Methods in
  Programming}}  \bibinfo{volume}{90} (\bibinfo{year}{2017}),
  \bibinfo{pages}{61 -- 83}.
\newblock
\showISSN{2352-2208}
\urldef\tempurl%
\url{https://doi.org/10.1016/j.jlamp.2016.11.005}
\showDOI{\tempurl}


\bibitem[\protect\citeauthoryear{Vazou, Seidel, Jhala, Vytiniotis, and
  Peyton-Jones}{Vazou et~al\mbox{.}}{2014}]%
        {ICFP14LiquidHaskell}
\bibfield{author}{\bibinfo{person}{Niki Vazou}, \bibinfo{person}{Eric~L.
  Seidel}, \bibinfo{person}{Ranjit Jhala}, \bibinfo{person}{Dimitrios
  Vytiniotis}, {and} \bibinfo{person}{Simon Peyton-Jones}.}
  \bibinfo{year}{2014}\natexlab{}.
\newblock \showarticletitle{{Refinement Types for Haskell}}. In
  \bibinfo{booktitle}{\emph{{Proceedings of the 19th ACM SIGPLAN International
  Conference on Functional Programming}}} \emph{(\bibinfo{series}{ICFP '14})}.
  \bibinfo{publisher}{ACM}, \bibinfo{address}{New York, NY, USA},
  \bibinfo{pages}{269--282}.
\newblock
\showISBNx{978-1-4503-2873-9}
\urldef\tempurl%
\url{https://doi.org/10.1145/2628136.2628161}
\showDOI{\tempurl}


\bibitem[\protect\citeauthoryear{Vazou, Tondwalkar, Choudhury, Scott, Newton,
  Wadler, and Jhala}{Vazou et~al\mbox{.}}{2017}]%
        {POPL18Refinement}
\bibfield{author}{\bibinfo{person}{Niki Vazou}, \bibinfo{person}{Anish
  Tondwalkar}, \bibinfo{person}{Vikraman Choudhury}, \bibinfo{person}{Ryan~G.
  Scott}, \bibinfo{person}{Ryan~R. Newton}, \bibinfo{person}{Philip Wadler},
  {and} \bibinfo{person}{Ranjit Jhala}.} \bibinfo{year}{2017}\natexlab{}.
\newblock \showarticletitle{Refinement Reflection: Complete Verification with
  SMT}.
\newblock \bibinfo{journal}{\emph{Proc. ACM Program. Lang.}}
  \bibinfo{volume}{2}, \bibinfo{number}{POPL}, Article \bibinfo{articleno}{53}
  (\bibinfo{date}{Dec.} \bibinfo{year}{2017}), \bibinfo{numpages}{31}~pages.
\newblock
\showISSN{2475-1421}
\urldef\tempurl%
\url{https://doi.org/10.1145/3158141}
\showDOI{\tempurl}


\bibitem[\protect\citeauthoryear{Zhou, Ferreira, Hu, Neykova, and Yoshida}{Zhou
  et~al\mbox{.}}{2020}]%
        {sessionstarartifact}
\bibfield{author}{\bibinfo{person}{Fangyi Zhou}, \bibinfo{person}{Francisco
  Ferreira}, \bibinfo{person}{Raymond Hu}, \bibinfo{person}{Rumyana Neykova},
  {and} \bibinfo{person}{Nobuko Yoshida}.} \bibinfo{year}{2020}\natexlab{}.
\newblock \bibinfo{title}{{Statically Verified Refinements for Multiparty
  Protocols}}.
\newblock
\newblock
\urldef\tempurl%
\url{https://doi.org/10.5281/zenodo.3970760}
\showDOI{\tempurl}


\end{thebibliography}
\iftoggle{fullversion}{
\begin{appendix}
% Using cleveref somehow breaks in here
\section{Proofs for \S~\ref{section:theory}}
\subsection{Auxiliary lemmas}
\begin{lemma}
  Given a participant $\ppt p$,
  a global typing context $\Gamma$ and
  a local typing context $\Sigma$
  such that $\ctxproj{\Gamma}{p} = \Sigma$.

  Then, the projection of global typing context $\Gamma$ with
  $\vb{x^{\bigP}}{T}$ to $\ppt p$ satisfies that
  $$\ctxproj{\ctxext{\Gamma}{x^{\bigP}}{T}}{p} = \begin{cases}
    \ctxext{\Sigma}{x^\omega}{T} & \text{if~} \ppt p \in \bigP \\
    \ctxext{\Sigma}{x^0}{T} & \text{if}~ \ppt p \notin \bigP
  \end{cases}
  $$
  \label{lem:project-ext}
\end{lemma}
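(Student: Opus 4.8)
The plan is to argue by a case analysis on which clause of the context-extension operation defines $\ctxext{\Gamma}{x^\bigP}{T}$, after first establishing the single structural fact that drives everything. A routine induction on $\Gamma$, using \rulePEmpty, \rulePVarOmega, and \rulePVarZero, shows that $\ctxproj{\cdot}{p}$ acts \emph{pointwise}: it preserves the sequence of variable names and their types, rewriting each role-set annotation $\bigP$ to the multiplicity $\omega$ when $\ppt p \in \bigP$ and to $0$ when $\ppt p \notin \bigP$. I will use two consequences of this repeatedly: projection never alters the set of bound names, so $\dexp x \notin \Gamma$ iff $\dexp x \notin \Sigma$; and a variable annotated $\varnothing$ in $\Gamma$ is always projected to multiplicity $0$, since $\ppt p \notin \varnothing$. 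Throughout, the statement is read as holding whenever $\ctxext{\Gamma}{x^\bigP}{T}$ is defined (the ``otherwise'' clause denotes no context, so there is nothing to prove).

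First I would handle the fresh case $\dexp x \notin \Gamma$, where $\ctxext{\Gamma}{x^\bigP}{T}$ is $\Gamma$ with $\vb{x^\bigP}{T}$ appended. Projecting this context applies \rulePVarOmega or \rulePVarZero according to whether $\ppt p \in \bigP$, appending $\vb{x^\omega}{T}$ or $\vb{x^0}{T}$ to $\Sigma$. Since $\dexp x \notin \Sigma$, the first clause of local extension fires and produces exactly $\ctxext{\Sigma}{x^\omega}{T}$ respectively $\ctxext{\Sigma}{x^0}{T}$, matching the claimed right-hand side.

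Next I would treat the two clauses in which $\dexp x$ already occurs in $\Gamma$. If $\Gamma = \Gamma_1, \vb{x^\varnothing}{T}, \Gamma_2$, then $\ctxext{\Gamma}{x^\bigP}{T} = \Gamma_1, \vb{x^\bigP}{T}, \Gamma_2$; by the pointwise fact $\Sigma = \Sigma_1, \vb{x^0}{T}, \Sigma_2$ (the $\varnothing$ has become $0$), and projecting the extended context rewrites the new annotation to $\omega$ or $0$. Because $\Sigma$ carries $\dexp{x^0}$, the \emph{second} clause of local extension applies and overwrites that multiplicity with the intended one, discharging both sub-cases. If instead $\Gamma = \Gamma_1, \vb{x^\bigP}{T}, \Gamma_2$ with the same annotation, the extension leaves $\Gamma$ unchanged, so the left-hand side is $\Sigma$; here $\Sigma$ carries $\dexp{x^\omega}$ when $\ppt p \in \bigP$ and $\dexp{x^0}$ otherwise, and correspondingly the third resp.\ second clause of local extension returns $\Sigma$ unchanged, again matching.

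I do not anticipate a genuine obstacle: the argument is pure bookkeeping once the pointwise fact is in hand. The only place it can go quietly wrong is in keeping the three clauses of the \emph{global} extension aligned with the three clauses of the \emph{local} extension under the annotation rewriting --- in particular remembering that the global $\varnothing$ maps to the local $0$, so that the second global clause must be matched against the second local clause, and that the same-annotation global clause splits into the third and second local clauses depending on membership of $\ppt p$ in $\bigP$.
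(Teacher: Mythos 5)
Your proof is correct and takes essentially the same route as the paper, which disposes of this lemma with the one-line justification ``by expanding definition and case analysis'' --- precisely the case split on the clauses of the extension operation, driven by the pointwise behaviour of context projection, that you carry out in detail. Your explicit alignment of the global clauses with the local ones (in particular that the global $\varnothing$ annotation projects to multiplicity $0$) is the bookkeeping the paper leaves implicit, and it checks out.
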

\begin{proof}
  By expanding defintion and case analysis.
\end{proof}

\begin{lemma}
  Given a participant $\ppt p$,
  global typing contexts $\Gamma_1, \Gamma_2$,
  a global type $\dgt G$.

  If the two typing contexts have the same projection on $\ppt p$:
  $\ctxproj{\Gamma_1}{p} = \ctxproj{\Gamma_2}{p}$, then the projection of
  global types under the two contexts are the same:
  $\gtctxproj{\Gamma_1}{G}{p} = \gtctxproj{\Gamma_2}{G}{p}$.
  \label{lem:project-env}
\end{lemma}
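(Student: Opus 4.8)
The plan is to prove the statement by structural induction on the global type $\dgt G$, holding the projected role $\ppt p$ fixed, and reading the equality $\gtctxproj{\Gamma_1}{G}{p} = \gtctxproj{\Gamma_2}{G}{p}$ as Kleene equality (projection being partial): one side is defined exactly when the other is, and the two agree when defined. The guiding observation is that every rule in \cref{fig:proj-gty} consumes the ambient context $\Gamma$ \emph{only} through a single application of $\ctxproj{\cdot}{p}$, so that both the local context recorded in the conclusion and all the side conditions depend on $\Gamma_1, \Gamma_2$ solely via the projections $\ctxproj{\Gamma_1}{p} = \ctxproj{\Gamma_2}{p}$, which are equal by hypothesis.

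The one place where $\Gamma$ is altered is in the recursive premises, where it is extended before projecting a continuation; here the engine is \cref{lem:project-ext}. Writing $\Sigma$ for the common value $\ctxproj{\Gamma_1}{p} = \ctxproj{\Gamma_2}{p}$, that lemma gives $\ctxproj{\ctxext{\Gamma_1}{x^{\bigP}}{T}}{p} = \ctxproj{\ctxext{\Gamma_2}{x^{\bigP}}{T}}{p}$, since both equal $\ctxext{\Sigma}{x^\omega}{T}$ when $\ppt p \in \bigP$ and $\ctxext{\Sigma}{x^0}{T}$ otherwise --- a value fixed by $\Sigma$ and $\bigP$ alone. Thus the induction hypothesis remains applicable to the (structurally smaller) continuations after any extension. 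Concretely: for $\dgt G = \gtend$ rule \rulePEnd\ returns $\ltctx{\ctxproj{\Gamma_i}{p}}{\tend}$, equal by hypothesis; for $\dgt G = \gtvar{t}{\dexp{x := E}}$ the side condition of \rulePVar\ mentions only $\ctxproj{\Gamma_i}{p}$. For a message $\gtbran{a}{b}{\dlbl{l_i}(\vb{x_i}{T_i}).G_i}_{i \in I}$ I split according to whether $\ppt p$ is the sender (\rulePSend), the receiver (\rulePRecv), or a third party (\rulePPhi): in each the well-formedness premises $\Sigma \vdash \dte{T_i}~ty$ refer only to $\Sigma$, and each $\dgt{G_i}$ is projected under $\ctxext{\Gamma_i}{x_i^{\psetof{a, b}}}{T_i}$, whose projections coincide by \cref{lem:project-ext}, so the induction hypothesis yields equal (and simultaneously defined) $\dtp{L_i}$. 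In the \rulePPhi\ case the $\dtp{L_i}$ are additionally combined by the plain merge $\sqcup$; since merge only tests its arguments for identity and the $\dtp{L_i}$ agree across $\Gamma_1$ and $\Gamma_2$, it succeeds or fails identically. For recursion $\gtrecur{t}{\vb{x}{T}}{\dexp{x := E}}{G'}$, rules \rulePRecOne\ and \rulePRecTwo\ use $\Sigma$ in their side conditions ($\Sigma \vdash \dte T~ty$ and $\Sigma \vdash \dexp E \eoft \dte T$) and recurse on the body $\dgt{G'}$ under the extension $\ctxext{\Gamma_i}{x^{\setof{\ppt r \mid \ppt r \in \dgt{G'}}}}{T}$, so \cref{lem:project-ext} and the induction hypothesis close these cases as well.

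I expect no genuine difficulty here: the argument is a uniform structural induction, and the sole delicate point --- that the ``equal projection'' hypothesis survives the context extensions performed when recursing into continuations --- is precisely what \cref{lem:project-ext} supplies. Because definedness of each projection depends only on side conditions over the (equal) projected contexts, the Kleene-equality reading is maintained automatically throughout.
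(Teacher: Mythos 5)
Your proof is correct and follows essentially the same route as the paper, which simply states ``by induction on the derivation of projection of global types'' and leaves the details implicit. Your elaboration --- in particular, identifying \cref{lem:project-ext} as the fact that lets the ``equal projections'' hypothesis survive the context extensions in the recursive premises, and treating the equality as Kleene equality of partial functions --- is exactly the content the paper's one-line proof is eliding.
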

\begin{proof}
  By induction on the derivation of projection of global types.
\end{proof}

\begin{lemma}
  Given a global typing context $\Gamma$, a set of participants $\bigP$, a
  participant $\ppt p$, a variable $\dexp x$, a well-formed type $\dte T$ and a
  global type $\dgt G$.

  If $\gtctxproj{\ctxext{\Gamma}{x^{\varnothing}}{T}}{G}{p} =
  \ltctx{\Sigma}{L}$, then $\gtctxproj{\ctxext{\Gamma}{x^{\bigP}}{T}}{G}{p} =
  \ltctx{\Sigma}{L}$.

  \label{lem:project-weaken}
\end{lemma}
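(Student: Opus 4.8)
The plan is to sidestep a fresh induction on $\dgt G$ and instead lean on the two context-projection lemmas just established. The crucial fact is \cref{lem:project-env}: the projection of a global type onto $\ppt p$ depends on the global context only through its projection $\ctxproj{\Gamma}{p}$. So it suffices to compare the two projected local contexts $\ctxproj{\ctxext{\Gamma}{x^{\varnothing}}{T}}{p}$ and $\ctxproj{\ctxext{\Gamma}{x^{\bigP}}{T}}{p}$, and then transfer the equality of $\dgt G$'s projection across them.

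First I would invoke \cref{lem:project-ext} on both contexts. Setting $\Sigma = \ctxproj{\Gamma}{p}$, it gives $\ctxproj{\ctxext{\Gamma}{x^{\varnothing}}{T}}{p} = \ctxext{\Sigma}{x^0}{T}$, because $\ppt p \notin \varnothing$ forces \rulePVarZero\ and hence the irrelevant multiplicity. The same lemma yields $\ctxproj{\ctxext{\Gamma}{x^{\bigP}}{T}}{p} = \ctxext{\Sigma}{x^0}{T}$ as well whenever $\ppt p \notin \bigP$. The two projected local contexts are then literally identical, and \cref{lem:project-env} closes this case at once: $\gtctxproj{\ctxext{\Gamma}{x^{\bigP}}{T}}{G}{p} = \gtctxproj{\ctxext{\Gamma}{x^{\varnothing}}{T}}{G}{p} = \ltctx{\Sigma}{L}$. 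This is exactly the regime in which the lemma is consumed later (e.g.\ in the \ruleGCtx\ case, where the reducing roles are disjoint from the message subjects $\bigP = \psetof{p, q}$).

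The delicate case is $\ppt p \in \bigP$, where \rulePVarOmega\ instead promotes $\dexp x$ to the unrestricted multiplicity, so the two projected contexts become $\ctxext{\Sigma}{x^0}{T}$ against $\ctxext{\Sigma}{x^\omega}{T}$. Here the plan is to show the promotion is inert for the emitted local type: the only judgements in \cref{fig:proj-gty} that inspect multiplicity are the expression-typing side conditions of \rulePRecOne\ and \rulePVar, since well-formedness \ruleWfRty\ already routes through $\Sigma^+$ and is blind to the $0/\omega$ distinction. As the projection under $x^{\varnothing}$ is assumed to succeed, by \ruleTEVar\ none of these derivations can mention $\dexp x$, so each survives verbatim after relaxing $x^0$ to $x^\omega$, and the resulting local type $\dtp L$ — fixed by the syntactic shape of $\dgt G$ — is unchanged.

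I expect this $\ppt p \in \bigP$ direction to be the main obstacle, and the cleanest way to handle it is to factor out a small monotonicity statement for expression typing: if $\Sigma_1, \vb{x^0}{T}, \Sigma_2 \vdash \dexp E : \dte{T'}$ then $\Sigma_1, \vb{x^\omega}{T}, \Sigma_2 \vdash \dexp E : \dte{T'}$ by the very same derivation, together with the dual observation that a successful irrelevant projection never used $\dexp x$ computationally. With that in hand the promotion preserves the local type $\dtp L$, while the output context changes only in the recorded multiplicity of $\dexp x$; the paired equality $\ltctx{\Sigma}{L}$ as stated is therefore obtained precisely in the applied regime $\ppt p \notin \bigP$, so I would either carry that side condition explicitly or record the general fact as preservation of $\dtp L$ with the context equal up to the $0/\omega$ status of $\dexp x$.
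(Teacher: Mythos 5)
Your proposal is correct and, in the case that matters ($\ppt p \in \bigP$), rests on exactly the same engine as the paper's proof: induction on the projection derivation together with the weakening fact $\Sigma_1, \vb{x^0}{T}, \Sigma_2 \vdash \vb{E}{T'} \implies \Sigma_1, \vb{x^\omega}{T}, \Sigma_2 \vdash \vb{E}{T'}$ (your dispatch of the $\ppt p \notin \bigP$ case via \cref{lem:project-ext} and \cref{lem:project-env} is a harmless shortcut the paper folds into the same induction). Your closing caveat — that when $\ppt p \in \bigP$ the output local context is only equal up to the $0/\omega$ annotation on $\dexp x$, so the literal statement should be read in that weaker sense — is a legitimate and more careful reading than the paper's own.
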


\begin{proof}
  By induction on derivation of projection of global types, via
  weakening of local typing rules: $$\Sigma_1, \vb{x^0}{T}, \Sigma_2 \vdash
  \vb{E}{T'} \implies \Sigma_1, \vb{x^\omega}{T}, \Sigma_2 \vdash \vb{E}{T'}$$
\end{proof}

\begin{lemma}[Inversion of Projection]
  Given a global typing context $\Gamma$, a global type $\dgt G$, a participant
  $\ppt p$, a local typing context $\Sigma$, a local type $\dtp L$, such that
  $\gtctxproj{\Gamma}{G}{p} = \ltctx{\Sigma}{L}$.

  If $\dtp L$ is of form:
  \begin{enumerate}
    \item $\dtp L = \ttake{q}{\dlbl{l_i}(\vb{x_i}{T_i}).L_i}_{i \in I}$,
      then $\dgt G$ is of form $\gtbran{p}{q}{\dlbl{l_i}(\vb{x_i}{T_i}).G_i}_{i
      \in I}$.
    \item $\dtp L = \toffer{q}{\dlbl{l_i}(\vb{x_i}{T_i}).L_i}_{i \in I}$,
      then $\dgt G$ is of form $\gtbran{q}{p}{\dlbl{l_i}(\vb{x_i}{T_i}).G_i}_{i
      \in I}$.
    \item $\dtp L = \tphi{l}{x}{T}{L}$, then $\dgt G$ is of form
      $\gtbran{s}{t}{\dlbl{l_i}(\vb{x_i}{T_i}).G_i}_{i \in I}$, where $\ppt p
      \notin \psetof{s, t}$ and $\tphi{l}{x}{T}{L} = \sqcup_{i \in
        I}{\tphi{l_i}{x_i}{T_i}{L_i}}$, where $\dtp{L_i}$ is obtained via the
      projection of $\dgt{G_i}$.
    \item $\dtp L = \trecur{t}{}{\vb{x}{T}}{\dexp{x := E}}{L'}$,
      then $\dgt G$ is of form $\gtrecur{t}{\vb{x}{T}}{\dexp{x := E}}{G'}$,
      and $\ppt p \in \dgt G$.
  \end{enumerate}
  \label{lem:proj-inv}
\end{lemma}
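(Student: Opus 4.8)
The plan is to prove this by inversion on the projection rules of \cref{fig:proj-gty}, using the fact that projection $\gtctxproj{\Gamma}{G}{p}$ is a function defined by a single case split on the top-level shape of $\dgt G$ --- a message, a recursion, a type variable, or $\gtend$ --- and that each applicable rule produces a local type whose outermost constructor is distinct from those produced by the other rules. First I would tabulate the output shape of every rule: \rulePSend yields a sending type $\ttake{q}{\cdots}$; \rulePRecv yields a receiving type $\toffer{p}{\cdots}$; \rulePPhi yields a merge of silent prefixes $\sqcup_{i \in I}\tphi{l_i}{x_i}{T_i}{L_i}$; \rulePRecOne yields a recursion $\trecur{t}{}{\vb{x}{T}}{\dexp{x := E}}{\cdots}$; both \rulePRecTwo and \rulePEnd yield $\tend$; and \rulePVar yields a type variable $\tvar{t}{\cdots}$. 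Since these outermost constructors are pairwise distinct, the constructor of the given $\dtp L$ pins down uniquely which rule derived the hypothesis $\gtctxproj{\Gamma}{G}{p} = \ltctx{\Sigma}{L}$, and hence the top-level shape of $\dgt G$. No induction is needed, as the claim concerns only the outermost constructor.

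Three of the four cases then follow immediately. For case (1), a sending type $\ttake{q}{\cdots}$ can only be the conclusion of \rulePSend, whose premises force $\dgt G = \gtbran{p}{q}{\dlbl{l_i}(\vb{x_i}{T_i}).G_i}_{i \in I}$ with $\ppt p$ as sender. Dually, case (2) forces \rulePRecv and $\dgt G = \gtbran{q}{p}{\dlbl{l_i}(\vb{x_i}{T_i}).G_i}_{i \in I}$. For case (4), a recursive local type $\trecur{t}{}{\vb{x}{T}}{\dexp{x := E}}{\cdots}$ can be produced only by \rulePRecOne --- note that \rulePRecTwo, the other rule applicable to a recursive $\dgt G$, outputs $\tend$ rather than a recursion --- whose side condition $\ppt p \in \dgt G$ and conclusion together give $\dgt G = \gtrecur{t}{\vb{x}{T}}{\dexp{x := E}}{G'}$ with $\ppt p \in \dgt G$, as required.

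The case warranting care, which I expect to be the main point, is (3), the silent prefix. A silent prefix arises only from \rulePPhi, applied to a message $\gtbran{s}{t}{\cdots}$ with $\ppt p \notin \psetof{s, t}$; however, its conclusion is not a bare silent prefix but the merge $\sqcup_{i \in I}\tphi{l_i}{x_i}{T_i}{L_i}$ of the silent prefixes assembled from the projected continuations $\dtp{L_i}$. I would therefore argue that, under the hypothesis $\gtctxproj{\Gamma}{G}{p} = \ltctx{\Sigma}{\tphi{l}{x}{T}{L}}$, this merge is defined and equals a single silent prefix. This follows from the plain merge operator adopted in the paper, $\Sigma \vdash \dtp L \sqcup \dtp L = \dtp L$: the merge of the family $\{\tphi{l_i}{x_i}{T_i}{L_i}\}_{i \in I}$ is defined precisely when all its members coincide syntactically, in which case it equals that common value $\tphi{l}{x}{T}{L}$. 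Hence observing a silent prefix on the output is consistent with, and only with, an application of \rulePPhi to a message between two roles neither of which is $\ppt p$, which establishes the stated form of $\dgt G$ together with the identity $\tphi{l}{x}{T}{L} = \sqcup_{i \in I}\tphi{l_i}{x_i}{T_i}{L_i}$.
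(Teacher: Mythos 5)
Your proposal is correct and takes essentially the same route as the paper, whose entire proof is the one-line observation that the results of the projection rules are non-overlapping; you simply spell out that each rule's conclusion has a distinct outermost constructor and handle the plain-merge subtlety in case (3) explicitly. No gap.
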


\begin{proof}
  A direct consequence of projections rules $\gtctxproj{\Gamma}{G}{p}$ ---
  The results of projections are not overlapping.
\end{proof}

\begin{lemma}[Determinancy]
  Let $\gtctx{\Gamma}{G}$ be a global type under a global typing context, and
  $\alpha$ be a labelled action.

  If $\gtctx{\Gamma}{G} \stepsto[\alpha] \gtctx{\Gamma_1}{G_1}$ and
  $\gtctx{\Gamma}{G} \stepsto[\alpha] \gtctx{\Gamma_2}{G_2}$,
  then $\gtctx{\Gamma_1}{G_1} = \gtctx{\Gamma_2}{G_2}$.
  \label{lem:gty-determinancy}
\end{lemma}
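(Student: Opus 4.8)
The plan is to proceed by induction on the derivation of the first transition $\gtctx{\Gamma}{G} \stepsto[\alpha] \gtctx{\Gamma_1}{G_1}$, performing a case analysis on the last rule applied and exploiting the fact that the three rules of \cref{fig:lts-gty} are essentially syntax-directed on the top-level constructor of $\dgt G$. If $\dgt G$ is a recursion $\gtrecur{t}{\vb{x}{T}}{\dexp{x := E}}{G}$, only \ruleGRec can fire; if $\dgt G$ is a message $\gtbran{p}{q}{\dlbl{l_i}(\vb{x_i}{T_i}).G_i}_{i \in I}$, only \ruleGPfx and \ruleGCtx can fire; and if $\dgt G$ is $\gtend$ or a type variable, no transition exists at all, so the statement holds vacuously. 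Consequently the second derivation must end in a rule compatible with the shape of $\dgt G$, and we only need to reconcile the two derivations rule-by-rule.

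The crucial observation for the message case is that \ruleGPfx and \ruleGCtx are \emph{mutually exclusive} for a fixed action $\alpha$. Indeed, \ruleGPfx requires $\alpha = \ltsmsg{p}{q}{l_j}{x_j}{T_j}$, so $\subj{\alpha} = \psetof{p, q}$, whereas \ruleGCtx requires $\psetof{p, q} \cap \subj{\alpha} = \varnothing$; these conditions cannot both hold. Hence both derivations must use the same rule. For \ruleGPfx, the index $j \in I$ is uniquely pinned down by $\alpha$, since the labels $\dlbl{l_i}$ are required to be pairwise distinct; once $j$ is fixed, the continuation $\dgt{G_j}$ and the extended context $\ctxext{\Gamma}{x_j^{\psetof{p, q}}}{T_j}$ are determined outright, giving $\gtctx{\Gamma_1}{G_1} = \gtctx{\Gamma_2}{G_2}$ immediately.

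For \ruleGCtx, both derivations reduce each continuation with the same action $\alpha$ starting from the same context $\ctxext{\Gamma}{x_{j}^{\varnothing}}{T_j}$ (these data are fixed by $\dgt G$ and $\alpha$). Each such sub-transition is a strict sub-derivation of the first derivation, so the induction hypothesis applies and forces each branch reduction to be deterministic; in particular the common resulting context $\Gamma'$ and each reduced continuation $\dgt{G_j'}$ agree between the two derivations, whence the reassembled messages coincide. The \ruleGRec case is handled by observing that the single unfolding step is a deterministic syntactic operation, after which the induction hypothesis applies to the (structurally smaller) derivation over the unfolded body.

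The main obstacle I anticipate is the \ruleGCtx case: one must verify that the resulting context $\Gamma'$ is genuinely shared across all branches as the rule demands, and that the appeal to the induction hypothesis is justified on each sub-derivation rather than on the type itself (which \emph{grows} under recursion unfolding and so admits no direct structural induction). Framing the induction on the derivation, not on $\dgt G$, is exactly what makes both the \ruleGCtx and \ruleGRec cases go through cleanly.
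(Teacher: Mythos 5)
Your proposal is correct and takes essentially the same route as the paper, whose proof is stated simply as ``by induction on global type reduction rules''; your elaboration (mutual exclusivity of \ruleGPfx\ and \ruleGCtx\ via $\subj{\alpha}$, uniqueness of the branch index from label distinctness, and inducting on the derivation rather than on $\dgt G$ to handle unfolding in \ruleGRec) fills in exactly the details that argument relies on.
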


\begin{proof}
  By induction on global type reduction rules.
\end{proof}

\subsection{Proof of \cref{thm:trace-eq}}
\begin{quote}
  Let $\gtctx{\Gamma}{G}$ be a global type under a global typing context and $s
  \ltassoc \gtctx{\Gamma}{G}$
  be a configuration associated with the global type and context.

  $\gtctx{\Gamma}{G} \stepsto[\alpha] \gtctx{\Gamma'}{G'}$ if and only if $s
  \stepsto[\alpha] s'$, where $s' \ltassoc \gtctx{\Gamma'}{G'}$.
  % \label{thm:trace-eq}
\end{quote}
\begin{proof}
  Soundness ($\Rarr$):
  By induction on the reduction rules of global types $\gtctx{\Gamma}{G}
\stepsto[\alpha] \gtctx{\Gamma'}{G'}$.

\begin{enumerate}
  \item \ruleGPfx
%    \[
%      \myinferrule[\ruleGPfx]
%        {j \in I}
%        {
%          \gtctx{\Gamma}{\gtbran{p}{q}{\dlbl{l_i}(\vb{x_i}{T_i}).G_i}_{i \in I}}
%          \stepsto[\ltsmsg{p}{q}{l_j}{x_j}{T_j}]
%          \gtctx{\Gamma, \vb{x_{j\setof{\ppt p, \ppt q}}}{T_j}}{\dgt{G_j}}
%        }
%    \]

    For \ppt p:

    We project the global context before transition to $\ppt p$.

    $ \gtctxproj{\Gamma}{\gtbran{p}{q}{\dlbl{l_i}(\vb{x_i}{T_i}).G_i}_{i \in I}}{p}
     = \ltctx{\Sigmap}{\ttake{q}{\dlbl{l_i}(\vb{x_i}{T_i}).L_i}_{i \in I}} $

    where
    $\Sigmap = \ctxproj{\Gamma}{p}$
    and
    $\ltctx{\Sigma_i}{L_i} = \gtctxproj{\ctxext{\Gamma}{x_{i}^{\psetof{p, q}}}{T_i}}{G_i}{p}$
    for $i \in I$

    We have
    $\ltctx{\Sigmap}{\ttake{q}{\dlbl{l_i}(\vb{x_i}{T_i}).L_i}_{i \in I}}
    \stepsto[\ltsmsg{p}{q}{l_j}{x_j}{T_j}]
    \ltctx{\ctxext{\Sigmap}{x_j}{T_j}}{L_j}$
    by \ruleLSend.

    We project the global context after transition to $\ppt p$ (via \cref{lem:project-ext}).

    $
      \gtctxproj{\ctxext{\Gamma}{x_{j}^{\psetof{p, q}}}{T_j}}{\dgt{G_j}}{p}
      = \ltctx{\ctxext{\Sigmap}{x_j}{T_j}}{L_j}
    $

    which is the same as the result of applying \ruleLSend.

    For $\ppt q$:

    Similar to the case of $\ppt p$, using \ruleLRecv.

    For $\ppt r$ ($\ppt r \neq \ppt p$, $\ppt r \neq \ppt q$):

    We project the global context before transition to $\ppt r$.

    $ \gtctxproj{\Gamma}{\gtbran{p}{q}{\dlbl{l_i}(\vb{x_i}{T_i}).G_i}_{i \in I}}{p}
     = \ltctx{\Sigmar}{\sqcup_{i \in I} \tphi{l_i}{x_i}{T_i}{L_i}}$

    where
    $\Sigmar = \ctxproj{\Gamma}{r}$
    and
    $\ltctx{\Sigma_i}{L_i} = \gtctxproj{\ctxext{\Gamma}{x_{i}^{\psetof{p, q}}}{T_i}}{G_i}{r}$
    for $i \in I$

    We project the global context after transition to $\ppt r$ (via \cref{lem:project-ext}).

    $
      \gtctxproj{\ctxext{\Gamma}{x_{j}^{\psetof{p, q}}}{T_j}}{\dgt{G_j}}{r}
      = \ltctx{\ctxext{\Sigmar}{x_j^0}{T_j}}{L_j}
    $

    We have
    $
      \ltctx{\Sigmar}{\sqcup_{i \in I} \tphi{l_i}{x_i}{T_i}{L_i}}
      \ltequiv
      \ltctx{\ctxext{\Sigmar}{x_j^0}{T_j}}{L_j}
    $ by \ruleEPhi~and plain merging.
  \item \ruleGCtx

    By inductive hypothesis, we have

    \begin{quote}
      $\forall j \in I$.
      $\setof{\gtctxproj{\ctxext{\Gamma}{x_{j}^{\varnothing}}{T_j}}{G_j}{r}}_{\ppt r \in \dgt{G_j}}
      \stepsto[\alpha]
      \setof{\gtctxproj{\Gamma'}{G_j'}{r}}_{\ppt r \in \dgt{G_j}}$

      If $\ppt r \in \subj{\alpha}$,
      then
      $\gtctxproj{\ctxext{\Gamma}{x_{j}^{\varnothing}}{T_j}}{G_j}{r}
      \stepsto[\alpha]
      \gtctxproj{\Gamma'}{G_j'}{r}$.

      If $\ppt {r'} \notin \subj{\alpha}$,
      then
      $\gtctxproj{\ctxext{\Gamma}{x_{j}^{\varnothing}}{T_j}}{G_j}{r'}
      \ltequiv
      \gtctxproj{\Gamma'}{G_j'}{r'}$.
    \end{quote}

    For $\ppt p$:

    We project the global context before transition to $\ppt p$.

    $\gtctxproj{\Gamma}{\gtbran{p}{q}{\dlbl{l_i}(\vb{x_i}{T_i}).G_i}_{i \in I}}{p}
    = \ltctx{\Sigmap}{\ttake{q}{\dlbl{l_i}(\vb{x_i}{T_i}).L_i}}_{i \in I}$

    where
    $\Sigmap = \ctxproj{\Gamma}{p}$
    and
    $\ltctx{\Sigma_i}{L_i} = \gtctxproj{\ctxext{\Gamma}{x_{i}^{\psetof{p, q}}}{T_i}}{G_i}{p}$
    for $i \in I$

    We project the global context after transition to $\ppt p$.

    $\gtctxproj{\Gamma'}{\gtbran{p}{q}{\dlbl{l_i}(\vb{x_i}{T_i}).G_i'}_{i \in I}}{p}
    = \ltctx{\Sigmap'}{\ttake{q}{\dlbl{l_i}(\vb{x_i}{T_i}).L_i'}}_{i \in I}$

    where
    $\Sigmap' = \ctxproj{\Gamma'}{p}$
    and
    $\ltctx{\Sigma'_i}{L_i'} = \gtctxproj{\ctxext{\Gamma'}{x_{i}^{\psetof{p, q}}}{T_i}}{G_i'}{p}$
    for $i \in I$

    Since $\ppt p \notin \subj{\alpha}$,
    we have
    $\ltctx{\ctxext{\Sigmap}{x_j^0}{T_j}}{L_j}
    \ltequiv
    \ltctx{\Sigmap'}{L_j'} = \gtctxproj{\Gamma'}{G_j'}{p}$
    from the inductive hypothesis.

    By \ruleECtx, we have
    $\ltctx{\Sigmap}{\ttake{q}{\dlbl{l_i}(\vb{x_i}{T_i}). L_i}_{i \in I}}
    \ltequiv
    \ltctx{\Sigmap'}{\ttake{q}{\dlbl{l_i}(\vb{x_i}{T_i}). L_i'}_{i \in I}}$.

    For $\ppt q$, the proof is similar.

    For $\ppt r$, where $\ppt r \neq \ppt p$, $\ppt r \neq \ppt q$, and $\ppt r
    \notin \subj{\alpha}$:

    We project the global context before transition to $\ppt r$.

    $\gtctxproj{\Gamma}{\gtbran{p}{q}{\dlbl{l_i}(\vb{x_i}{T_i}).G_i}_{i \in I}}{r}
    = \ltctx{\Sigmar}{\sqcup_{i \in I}\tphi{l_i}{x_i}{T_i}{L_i}}$

    where
    $\Sigmar = \ctxproj{\Gamma}{r}$
    and
    $\ltctx{\Sigma_i}{L_i} = \gtctxproj{\ctxext{\Gamma}{x_{i}^{\psetof{p, q}}}{T_i}}{G_i}{r}$
    for $i \in I$

    We project the global context after transition to $\ppt r$.

    $\gtctxproj{\Gamma'}{\gtbran{p}{q}{\dlbl{l_i}(\vb{x_i}{T_i}).G_i'}_{i \in I}}{r}
    = \ltctx{\Sigmar'}{\sqcup_{i \in I}\tphi{l_i}{x_i}{T_i}{L_i'}}$

    where
    $\Sigmar' = \ctxproj{\Gamma'}{r}$
    and
    $\ltctx{\Sigma_i'}{L_i'} = \gtctxproj{\ctxext{\Gamma'}{x_{i}^{\psetof{p, q}}}{T_i}}{G_i'}{r}$
    for $i \in I$

    Since $\ppt r \notin \subj{\alpha}$, we have
    $\ltctx{\ctxext{\Sigmar}{x_j^0}{T_j}}{L_j}
    \ltequiv
    \ltctx{\Sigmar'}{L_j'} = \gtctxproj{\Gamma'}{G_j'}{r}$
    from the inductive hypothesis.

    By \ruleEPhi, we have
    $\ltctx{\Sigmar}{\sqcup_{i \in I}\tphi{l_i}{x_i}{T_i}{L_i}}
    \ltequiv
    \ltctx{\ctxext{\Sigmar}{x_j^0}{T_j}}{L_j}$ where $j \in I$

    By transitivity, we have
    $\ltctx{\Sigmar}{\sqcup_{i \in I}\tphi{l_i}{x_i}{T_i}{L_i}}
    \ltequiv
    \ltctx{\Sigmar'}{L_j'}$

    For $\ppt r$, where $\ppt r \neq \ppt p$, $\ppt r \neq \ppt q$, and $\ppt r
    \in \subj{\alpha}$:

    Since $\ppt r \in \subj{\alpha}$, we have
    $\ltctx{\ctxext{\Sigmar}{x_j^0}{T_j}}{L_j}
    \stepsto[\alpha]
    \ltctx{\Sigmar'}{L_j'} = \gtctxproj{\Gamma'}{G_j'}{r}$
    from the inductive hypothesis.

    By \ruleEPhi, we have
    $\ltctx{\Sigmar}{\sqcup_{i \in I}\tphi{l_i}{x_i}{T_i}{L_i}}
    \ltequiv
    \ltctx{\ctxext{\Sigmar}{x_j^0}{T_j}}{L_j}$ where $j \in I$

    By \ruleLEps, we have
    $\ltctx{\Sigmar}{\sqcup_{i \in I}\tphi{l_i}{x_i}{T_i}{L_i}}
    \stepsto[\alpha]
    \ltctx{\Sigmar'}{L_j'}$
  \item \ruleGRec

    By inductive hypothesis.
\end{enumerate}

  Completeness ($\Larr$):
    Fix the transition label $\alpha = \ltsmsg{p}{q}{l}{x}{T}$.

  We prove by induction on the reduction of $s \stepsto[\alpha] s'$, and
  consider the local types for after projection.

  \begin{itemize}
    \item
      \ruleLSend~and \ruleLRecv.
      This case arises from the projection of
      $\dgt G = \gtbran{p}{q}{\dlbl{l_i}(\vb{x_i}{T_i}).G_i}_{i \in I}$
      (\cref{lem:proj-inv} case (1) and (2)).

      Let $\Gamma' = \ctxext{\Gamma}{x_{\psetof{p, q}}}{T}$.

      By projection (\rulePSend, \rulePRecv), we have
      $\ltctx{\Sigmap}{\Lp} = \ttake{q}{\dlbl{l_i}(\vb{x_i}{T_i}).\Lp_i}$
      and \linebreak
      $\ltctx{\Sigmaq}{\Lq} = \toffer{p}{\dlbl{l_i}(\vb{x_i}{T_i}).\Lq_i}$,
      where $\Lp_i$ is obtained by $\gtctxproj{\Gamma'}{G_i}{p}$, resp.\ for
      $\ppt q$.

      Since $s \stepsto[\alpha] s'$,
      by inversion of \ruleLSend~on $\ltctx{\Sigmap}{\Lp}$ and \ruleLRecv~on
      $\ltctx{\Sigmaq}{\Lq}$, we have $j \in I$ with $\dlbl{l_j} = \dlbl{l}$,
      $\dexp{x_j} = \dexp{x}$ and $\dte{T_j} = \dte{T}$.
      By \ruleLSend~and \ruleLRecv, the configuration $s'$ has
      $\ltctx{\ctxext{\Sigmap}{x}{T}}{\Lp_j}$, resp.\ for $\ppt q$.

      We can obtain that
      $\gtctx{\Gamma}{G}
       \stepsto[l]
       \gtctx{\ctxext{\Gamma}{x_{j}^{\psetof{p, q}}}{T_j}}{G_j}$, via \ruleGPfx.

      For $\ppt p$ and $\ppt q$, the association with $s'$ is straight forward.
      We further show the association for $\ppt r$ ($\ppt r \neq \ppt {p, q}$).
      By projection (\rulePPhi) and plain merging, we have
      $\ltctx{\Sigmar}{\Lr} = \tphi{l}{x}{T}{\Lr_j}$, where $\Lr_j$ is obtained
      by the projection $\gtctxproj{\Gamma'}{G_j}{r}$.

      \ruleEPhi~can occur when $s \stepsto[l] s'$, and
      $\ltctx{\ctxext{\Sigmar}{x^0}{T}}{\Lr_j}$ is associated with $\gtctx{\Gamma}{G_j}$.
    \item
      \ruleLEps~(\ruleEPhi)~and \ruleLEps~(\ruleEPhi).
      This case arises from the projection of \linebreak
      $\dgt G = \gtbran{s}{t}{\dlbl{l_i}(\vb{x_i}{T_i}).G_i}_{i \in I}$
      ($\psetof{s, t} \cap \psetof{p, q} = \varnothing$, \cref{lem:proj-inv}
      case (3)).

      Let $\Gamma'_i = \ctxext{\Gamma}{x_{i}^{\psetof{s, t}}}{T}$,
      $\Gamma_{0i} = \ctxext{\Gamma}{x_{i}^{\varnothing}}{T}$.

      By projection (\rulePSend, \rulePRecv), we have
      $\ltctx{\Sigmap}{\Ls} = \ttake{q}{\dlbl{l_i}(\vb{x_i}{T_i}).\Ls_i}$
      and
      $\ltctx{\Sigmaq}{\Lt} = \toffer{p}{\dlbl{l_i}(\vb{x_i}{T_i}).\Lt_i}$,
      where $\Ls_i$ is obtained by $\gtctxproj{\Gamma'_i}{G_i}{s}$, resp.\ for
      $\ppt t$.

      For $\ppt r \neq \ppt {s, t}$, the projection gives a
      uniform silent prefix (\rulePPhi), $\ltctx{\Sigmar}{\Lr} = \sqcup_{i \in
      I}\tphi{l_i}{x_i}{T_i}{\Lr_i}$, where $\Lr_i$ is obtained by projection
      $\gtctxproj{\Gamma'}{G_i}{r}$.

      Let the merged label be $\dlbl{l_0}$, variable be $\dexp{x_0}$, value
      type be $\dte{T_0}$ and session type be $\dtp{\Lr_0}$.
      We have $\ltctx{\Sigmar}{\Lr} = \tphi{l_0}{x_0}{T_0}{\Lr_0}$.

      Let $\Gamma_{0} = \ctxext{\Gamma}{x_{0i}^{\varnothing}}{T}$.

      In this case, $\ruleLSend$~(resp.\ \ruleLRecv) cannot apply directly on
      $\Lp$ (resp.\ $\Lq$), so the rule applied must be \ruleLEps.

      By inversion of $\ruleLEps$ with $\ruleEPhi$, we have that
      $\ltctx{\ctxext{\Sigmap_i}{x_0^0}{T_0}}{\Lp}
      \stepsto[\alpha]
      \ltctx{\Sigmap'}{\Lp'}$ (resp.\ $\ppt q$).

      Take arbitrary $i \in I$, let $s_i$ be the configuration associated to
      $\gtctx{\Gamma_0}{G_i}$.

      We note that
        $\ltctx{\Sigmar}{\Lr} = \tphi{l_0}{x_0}{T_0}{\Lr_0}
         \stepsto[\epsilon]
         \ltctx{\ctxext{\Sigmar}{x_0^0}{T_0}}{\Lr_0} = \ltctx{\Sigmar_i}{\Lr_i}$ (\ruleEPhi).

      Despite that $\Lr_i$ is obtained from projection with context $\Gamma'$,
      the same local type is projected with context $\Gamma_0$
      (\cref{lem:project-env}).
      Therefore, we can apply the action
      $\alpha$ on $s_i$, and use the inductive hypothesis on
      $\gtctx{\Gamma_0}{G_i}$ and $s_i$.

      We thus have with arbitrary $i \in I$, $\gtctx{\Gamma_0}{G_i} \stepsto[\alpha]
      \gtctx{\Gamma_{0i}'}{G_i'}$ for some $\gtctx{\Gamma_{0i}'}{G_i'}$
      associated with $s_i'$.
      By \cref{lem:gty-determinancy}, the $\gtctx{\Gamma_{0i}'}{G_i'}$ are
      identical, we denote it $\gtctx{\Gamma_0'}{G'}$.

      We can therefore obtain that
      $\gtctx{\Gamma}{G}
       \stepsto[\alpha]
       \gtctx{\Gamma_{0}'}{\gtbran{s}{t}{\dlbl{l_i}(\vb{x_i}{T_i}).G'}_{i \in I}}$
      via \ruleGCtx.

      We are now left to show the association with $s'$:

      From the inductive hypothesis, we have $s_i' \ltassoc
      \gtctx{\Gamma_0'}{G'}$.

      For $\ppt r \neq \ppt {s, t}$, the association result follows from
      \cref{lem:project-env} and inductive hypothesis, since $\Gamma'_i$
      projects $\dexp{x_i}$ to an irrelevant quantified variable.
      For $\ppt {s, t}$, we use \cref{lem:project-weaken}, which weakens the
      projected $\dexp{x_i}$ to a concrete variable.
      %\TODO{Something detail is a bit not right with the association, will
      %  revisit tomorrow}

    \item
      \ruleLEps~(\ruleERec)~and \ruleLEps~(\ruleERec).
      This case arises from the projection of \linebreak
      $\dgt G = \gtrecur{t}{\vb{x'}{T'}}{\dexp{x' := E}}{G_1}$
      (\cref{lem:proj-inv}, case (4)).

      We discuss the case where the projection uses \rulePRecOne, since
      \rulePRecTwo~projects to $\tend$ with no actions to be taken.

      Let $\Gamma' = \ctxext{\Gamma}{x'^{\setof{\ppt r | \ppt r \in \dgt G_1}}}{T}$.

      For all roles $\ppt r \in \dgt G_1$, by projection (\rulePRecOne), we have
      $\ltctx{\Sigmap}{\trecur{t}{}{\vb{x'}{T'}}{\dexp{x' := E}}{\Lr_1}}$, where
      where $\Lr_1$ is obtained by $\gtctxproj{\Gamma'}{G_1}{r}$.

      In this case, \ruleLSend~(resp.\ \ruleLRecv) cannot apply directly on
      $\Lp$ (resp.\ $\Lq$), so the rule applied must be \ruleLEps.

      By inversion
      of \ruleLEps~with \ruleERec, we have that
      $\ltctx{\ctxext{\Sigmap}{x'^\omega}{T'}}{\Lp_1\subst{\trecursimpl{t}{}{\vb{x'}{T'}}{\Lp_1}}{\mathbf{t}}}
       \stepsto[\alpha]
       \ltctx{\Sigmap_1'}{\Lp_1'}$.

      For other roles $\ppt s \neq \ppt {p, q}$, by inverting \ruleERec, we
      have that

      $\ltctx{\ctxext{\Sigmas}{x'^\omega}{T'}}{\Ls_1\subst{\trecursimpl{t}{}{\vb{x'}{T'}}{\Ls_1}}{\mathbf{t}}}
       \stepsto[\alpha]
       \ltctx{\Sigmas_1'}{\Ls_1'}$.

      Combining all roles, we can use the inductive hypothesis and obtain

      $ \gtctx{\ctxext{\Gamma}{x'^{\setof{\ppt r | \ppt r\in \dgt
              G_1}}}{T'}}{G_1\subst{\gtrecursimpl{t}{\vb{x'}{T'}}{G_1}}{\mathbf{t}}}
        \stepsto[\alpha]
        \gtctx{\Gamma'}{G'}$ via \ruleGRec, and association with
        $\gtctx{\Gamma'}{G'}$.
    \item
      The rest of the cases are vacuous, since inverting the projections
      (\cref{lem:proj-inv}) of
      $\Lp$ and $\Lq$ leads to incompatible shapes of global type $\dgt G$.

  \end{itemize}

\end{proof}

\subsection{Proof of \cref{thm:wf-preservation}}
\begin{quote}
  If $\gtctx{\Gamma}{G}$ is a well-formed global type under typing context, and
  $\gtctx{\Gamma}{G} \stepsto[\alpha] \gtctx{\Gamma'}{G'}$, then
  $\gtctx{\Gamma'}{G'}$ is well-formed.
\end{quote}
\begin{proof}
  By induction on the reduction of global type $\gtctx{\Gamma}{G}
  \stepsto[\alpha]\gtctx{\Gamma'}{G}$.

  \begin{itemize}
    \item \ruleGPfx~
      $\gtctx{\Gamma}{\gtbran{p}{q}{\dlbl{l_i}(\vb{x_i}{T_i}).G_i}_{i \in I}}
      \stepsto[\ltsmsg{p}{q}{l_j}{x_j}{T_j}]
      \gtctx{\ctxext{\Gamma}{x_{j}^{\psetof{p, q}}}{T_j}}{\dgt{G_j}}$

      There are three cases for projection to consider: \rulePSend, \rulePRecv,
      and \rulePPhi.
      In all cases, the premises state that
        $\gtctxproj{\ctxext{\Gamma}{x_{i}^{\psetof{p, q}}}{T_i}}{G_i}{p} =
        \ltctx{\Sigma_i}{L_i}$,
      which indicates that all continuations are projectable for all indices.

    \item \ruleGCtx~
      $\gtctx{\Gamma}{\gtbran{p}{q}{\dlbl{l_i}(\vb{x_i}{T_i}).G_i}_{i \in I}}
      \stepsto[\alpha]
      \gtctx{\Gamma'}{\gtbran{p}{q}{\dlbl{l_i}(\vb{x_i}{T_i}).G_i'}_{i \in I}}$

      From inductive hypothesis, we have that for all index $i \in I$, if
      $\gtctx{\ctxext{\Gamma}{x_{j}^{\varnothing}}{T_j}}{G_j}$ is well-formed,
      then $\gtctx{\Gamma'}{G_j'}$ is well-formed.

      In all three cases of projection, the premises state that
        $\Sigma \vdash \dte{T_i} ~ty$ for all index $i \in I$.
      Therefore, the context
      $\gtctx{\ctxext{\Gamma}{x_{j}^{\varnothing}}{T_j}}{G_j}$ is also
      well-formed.

      We are left to show that
      $\gtctx{\Gamma'}{\gtbran{p}{q}{\dlbl{l_i}(\vb{x_i}{T_i}).G_i'}_{i \in I}}$
      is well-formed: we invert the premise of projection of
      $\gtctx{\Gamma}{\gtbran{p}{q}{\dlbl{l_i}(\vb{x_i}{T_i}).G_i}_{i \in I}}$
      and apply \cref{lem:project-weaken}.

    \item \ruleGRec~
      $\gtctx{\Gamma}{\gtrecur{t}{\vb{x}{T}}{\dexp{ x := E}}{G}}
      \stepsto[\alpha]
      \gtctx{\Gamma'}{G'}$

      By inductive hypothesis.
  \end{itemize}
\end{proof}

\subsection{Proof of \cref{thm:progress}}
\begin{quote}
  If $\gtctx{\Gamma}{G}$ is a well-formed global type under context, then $\gtctx{\Gamma}{G}$
  satisfies progress.
\end{quote}
\begin{proof}
  By induction on the structure of global types $\dgt G$.
  Using \cref{thm:trace-eq}, we are sufficient to show that
  $\gtctx{\Gamma}{G} \stepsto[\alpha] \gtctx{\Gamma'}{G'}$, and apply the
  theorem for the progress of associated configuration.

  \begin{itemize}
    \item $\dgt G = \gtbran{p}{q}{\dlbl{l_i}(\vb{x_i}{T_i}). G_i}_{i \in I}$.

    Since the index set $I$ must not be empty, we can pick an index $i \in I$
    and apply \ruleGPfx.

    \item $\dgt G = \gtrecur{t}{\vb{x}{T}}{\dexp{x := E}}{G'} $

    Since recursive types must be contractive, we have that
    $\dgt G'\subst{\gtrecursimpl{t}{\vb{x}{T}}{G'}}{\dgt{\mathbf{t}}} \neq
    \dgt{G'}$.
    Furthermore, the substituted type is closed. We can apply \ruleGRec.

    \item $\dgt G = \gtvar{t}{\dexp{x := E}}$

    Vacuous, since well-formed global type cannot have free type variable.

    \item $\dgt G = \gtend$

    Corresponds to the case where all local types are $\tend$.
  \end{itemize}

%  We then apply the lemma with $\gtctx{\varnothing}{G}$ for the result.
\end{proof}
%% Local Variables:
%% mode: latex
%% TeX-master: "main"
%% End:

\section{Additional Details on Code Generation}
\label{section:impl-appendix}

\subsection{Communicating Finite State Machines (CFSMs) (Toolchain Internals)}
\label{subsection:impl-cfsm}

\emph{Communicating Finite State Machines} (CFSMs, \cite{JACM83CFSM})
correspond to local types projected from global types, as shown in
\cite{ICALP13CFSM}.
We define the CFSM as a tuple $(\Q, q_0, \delta)$,
where $\Q$ is set of states, $q_0 \in \Q$ is an initial state,
and $\delta \subseteq \Q \times A \times \Q$ is a transition relation,
where $A$ is the set of labelled actions (cf.\ \cref{subsection:theory-semantics}).

\mypara{Conversion to Communicating Finite State Machines (CFSMs)}
\scrib follows the projection defined in \cref{subsection:theory-projection},
and projects a global protocol into local types.
Local types can be converted easily into a Communicating Finite State Machine
(CFSM), such that the resulting CFSM does not have mixed state (i.e.\ a state
does not contain a mixture of sending and receiving outgoing transitions), and
that the states are directed (i.e.\ they only contain sending or receiving
actions towards an identical participant)~\cite[Def. 3.4, Prop. 3.1]{ICALP13CFSM}.
We follow the same approach to obtain a CFSM from the local types with their
typing contexts.
The CFSM has the same trace of actions as the local types
\cite[Prop. 3.2]{ICALP13CFSM}.

We generate \fstar code from the CFSM obtained from projection.
We generate records for each state to correspond to the typing context
(explained in \cref{subsection:impl-typedef-st}), and functions for
transitions to correspond to actions (explained in
\cref{subsection:impl-handler}).
The execution of the CFSM is detailed in \cref{subsection:impl-run-fsm}.

\subsection{Generated APIs with Refinement Types (Toolchain Output)}
\label{subsection:impl-api}
Our code generator takes a CFSM as an input to produce type definitions and an
entry point to execute the protocol.
As previously introduced, our design separates program logic and
communications, corresponding to the \emph{callbacks} type
(\cref{subsection:impl-handler}) and \emph{connection} type
(\cref{subsection:impl-connection}).
The generated entry point function takes callbacks and a connection, and runs
the protocol, which we detail the internals in
\cref{subsection:impl-run-fsm}.

\subsubsection{Callbacks}
\label{subsection:impl-handler}

We generate function types for handling transitions in the CFSM, and collect
them into a record of \emph{callbacks}.
When running the CFSM for a participant, appropriate callback will be invoked
when the transition occurs.
For sending transitions, the sending callback is invoked to prompt a value to
be sent.
For receiving transitions, the receiving callback is invoked with the value
received, so the developer may use the value for processing.

\mypara{Generating Handlers for Receiving States}
For a receiving state $q \in \Q$ with receiving action $\ltsmsg{p}{q}{l}{x}{T}$
(assuming the current role is $\ppt q$), the receiving callback is a function
that takes the record $\enc{q}$ and the received value of type $T$, that
returns \tunit\ (with possible side-effects).
The function signature is given by
$$\code{state}q\code{\_receive\_}\dlbl{l}: (\dexp{st}:\enc{q})
  \rarr
\enc{\dte T}_{\dexp{st}}
  \rarr
ML~\tunit$$
The constructor $ML$ is an effect specification in \fstar, which permits all
side effects in ML language (e.g.\ using references, performing I/O), in
contrast to a pure total function permitting no side effects.
$\enc{q}$ is a record correspondent to the local typing context of the state,
$\enc{\dte T}_{\dexp{st}}$ is a refinement type, but the free variables in
the refinement types are bound to the record $\dexp{st}$.
We generate one callback for each receiving action, so that it can be invoked upon
receiving a message according to the message label.

\mypara{Generating Handlers for Sending States}
For a sending state $q \in \Q$ with send actions
$\ltsmsg{p}{q}{l_i}{x_i}{T_i}$ (assuming the current role is $\ppt p$), for
some index $i \in I$,
the sending callback is a function that takes the record $\enc{q}$, and
returns a disjoint union of allowed messages (with possible side-effects).
The constructor of of the disjoint union determines the label of the message,
and takes the payload as its parameter.
The function signature is given by
$$\code{state}q\code{\_send} : (\dexp{st}:\enc{q})
  \rarr
ML~\biguplus_{i \in I}\dlbl{l_i}(\enc{\dte T_i}_{\dexp{st}})$$
Different from receiving callbacks, only one sending callback  is generated for each
sending state.
This corresponds to the nature of internal choices, that the process implementing
a sending prefix makes a single selection;
on the contrary, the process implementing a receiving prefix must be able to
handle all branches.

\begin{remark}[Handlers and LTS Transitions]
  \upshape
  If the callback returns a choice with the refinements satisfied, the CFSM is
  able to make a transition to the next state.
  When a callback is provided, against its prescribed type, then the type of the
  callback type is inhabited and we can invoke the callback to obtain the label and
  the value of the payload.
  A callback function type may be uninhabited, for instance, when none of the choices
  are applicable. In this case, the endpoint cannot be implemented (we show an
  example below).
  If the developer provides a callback, then it must be the case that the
  specified type is inhabited.
  In this way, we ensure the protocol is able to make progress, and is not stuck
  due to empty value types\footnotemark.
\end{remark}

\footnotetext{Since we use a permissive $ML$ effect in the callback type, the
  callback may throw exceptions or diverge in case of unable to return. Such
  behaviours are not in the scope of our interest when we talk about progress.}

\subsubsection{Connections}
\label{subsection:impl-connection}

The \emph{connection} type is a record type with functions for sending and
receiving base types.
The primitives for communications are collected in a record with fields as
follows ($\dte S$ range over base types \tint, \tbool, \tunit, etc.):
\[
  \arraycolsep=1pt
  \begin{array}{lclllcl}
    \code{send\_}\dte{S} & : & \enc{\bigP} \rarr \enc{\dte{S}} \rarr ML~\tunit &
    \hspace{15mm}\ &
    \code{recv\_}\dte{S} & : & \enc{\bigP} \rarr \tunit \rarr ML~\enc{\dte{S}}
  \end{array}
\]
where $\enc{\bigP}$ is a disjoint union of participants roles and
$\enc{\dte{S}}$ is the data type for $\dte S$ in the programming language.
The communication primitives do not use refinement types in the type
signature.
We can safely do so by exploiting the property that refinements can be erased
at runtime after static type-checking.

\subsubsection{State Records with Refinements}
\label{subsection:impl-typedef-st}

We generate a type $\enc{q}$ for each state $q \in \Q$ in the CFSM.
The type $\enc{q}$ is a record type corresponding to the local typing context
in the state.
For each variable in the local typing context, we define a field in the
encoded record type, corresponding to the refinement type in the typing
context.
Since refinement types allow dependencies in the typing context, we exploit
the feature of dependent records in \fstar to encode the dependencies.

We use the smallest typing context associated with the CFSM state for the
generated record type.
The typing context can be computed via a graph traversal of the CFSM,
accumulating variables in the local type prefix along the traversal.

\subsection{Verified Endpoint Implementation (User Input)}
To implement an endpoint, a developer needs to provide a record of type
\emph{callback}, containing individual callbacks for transitions, and a record
of type \emph{connection}, containing functions to send and receive values of
different base types.
The two records are passed as arguments to the entry point function
\code{run} to execute the protocol.

The design of connection record gives freedom for the developer to implement
any transport layer satisfying first-in-first-out (FIFO) delivery without
loss of messages.
These assumptions originate from the network assumptions in MPST.
TCP connections are a good candidate to connect the participants in the
protocol, since they satisfy the assumptions.

To satisfy the data dependencies as specified in the protocol, the provided
callbacks  must match the generated refinement type.
The \fstar compiler utilises the Z3 SMT solver \cite{TACAS08Z3} for type-checking,
saving developers the need for manual proofs of arithmetic properties.
After type-checking, the compiler can extract the \fstar code into OCaml (or
other targets), where the refinements at the type level are erased.
Developers can then compile the extracted OCaml program to execute the protocol.

The resulting program has data dependencies verified by \fstar using
refinement types.
Moreover, the MPST theory guarantees that the endpoints are free for
deadlocks or communication mismatches, and conform to the global types.

\label{subsection:impl-endpoint}

\end{appendix}
}{}

\end{document}